\documentclass[12pt]{article}
\usepackage{amsmath}
\def\vpo{\vp\so}
\def\vpw{\vp\sw}
\def\uxx{\uu{xx}}
\def\uxo{\uu x}
\def\ruh{\rho\uu h}
\def\ioh{\h{(0,\f12]}}

\newcommand\izx[1]{\int_0^x \, #1 \, dy}
\newcommand\izy[1]{\int_0^y \, #1 \, du}
\newcommand\intvl[1]{\h{(-#1,#1)}}
\newcommand\cintvl[1]{\h{[-#1,#1]}}
\newcommand\ea[1]{\begin{eqnarray} #1 \end{eqnarray}}
\newtheorem{claim}{Claim}
\newtheorem{theorem}{Theorem}
\newtheorem{conjecture}{Conjecture}
\newcommand\erf[1]{Eq.~(\ref{#1})}
\newcommand\eqnref[1]{\erf{#1}}

\newcommand\lsc[3]{\e\tu{#1 #2 #3}}

\def\ozch{\left(\ltd0,\f12\right]\rtd}
\def\slt{\, < \, }

\def\vn{\vp\sps n}
\def\vnm{\vp\sps{n-1}}

\def\mv{m\uu V}

\usepackage{xcolor}
\usepackage{ams}
\usepackage{amsmath,amssymb,latexsym}
\usepackage{mathtools}
\usepackage[normalem]{ulem}
\usepackage{float}
\usepackage{graphicx}
\usepackage[utf8]{inputenc}
\usepackage{soul}
\usepackage{tikz}
\newtheorem{lemma}{Lemma}
\newtheorem{corollary}{Corollary}

\def\ds{\displaystyle}
\def\bls{\baselineskip} 
\def\beq{\begin{equation}}
\def\eeq{\end{equation}}
\def\bi{\begin{itemize}}
\def\be{\begin{enumerate}}
\def\bc{\begin{center}}
\def\bt{\begin{tabular}}
\def\bd{\begin{description}}
\def\bq{\begin{quotation}}
\def\ba{\begin{array}}
\def\beas{\begin{eqnarray*}}
\def\bea{\begin{eqnarray}}
\def\bv{\begin{verbatim}}
\def\ev{\end{verbatim}}
\def\eea{\end{eqnarray}}
\def\eeas{\end{eqnarray*}}
\def\ea{\end{array}}
\def\eq{\end{quotation}}
\def\ed{\end{description}}
\def\ec{\end{center}}
\def\et{\end{tabular}}
\def\ee{\end{enumerate}}
\def\ei{\end{itemize}}
\def\ada{  \dfn }
\def\aea{ = }

\def\alea{  \leq  }

\def\ala{  <  }

\def\a{\h{\alpha}}

\def\g{\h{\gamma}}

\def\de{\h{\delta}}
\def\De{\h{\Delta}}

\def\m{\h{\mu}}
\def\n{\h{\nu}}

\def\rh{\h{\rho}}
\def\e{\h{\epsilon}}

\def\vp{\h{\varphi}}

\def\qed{\hfill $\rule{8pt}{8pt}$}
\def\sz{\h{\uu 0}}
\def\so{\h{\uu 1}}
\def\sw{\h{\uu 2}}
\def\sr{\h{\uu 3}}

\def\sn{\h{\dn n}}
\def\si{\h{\dn i}}
\def\sj{\h{\dn j}}

\def\sb{\h{\uu b}}

\def\uo{\h{^ 1}}
\def\sq{\h{^ 2}}
\def\cu{\h{^ 3}}

\def\upn{\h{^ n}}
\def\ui{\h{^ i}}
\def\uj{\h{^ j}}
\def\ua{\h{^ a}}

\def\str{\h{^*}}
\def\upm{\h{^\mu}}
\def\dnm{\h{_\mu}}

\def\upn{\h{^\nu}}
\def\dnn{\h{_\nu}}

\def\tuh{\h{\tu{\f12}}}

\def\inv{\h{^{-1}}}
\def\dmn{\h{_{\m\n}}}
\def\umn{\h{^{\m\n}}}

\def\sk{\h{\dn k}}

\def\pmd{\ensuremath{\pl_\mu}}
\def\pmu{\ensuremath{\pl^\mu}}

\def\pnd{\ensuremath{\pl_\nu}}

\def\ra{\ensuremath{\rightarrow}}

\def\pl{\partial}

\def\smg{\sqrt{\rule{0in}{8pt}-g}}

 \def\sep{\vspace{12pt} \hrule width \hsize \kern 1mm \hrule width \hsize height 2pt \vspace{12pt} }

\def\clh{\h{{\cal H}}}

\def\cll{\h{{\cal L}}}

\def\clh{\h{{\cal H}}}

\def\deq{\h{\ = \ }}

\def\beass{\begin{small}\beas}
\def\eeass{\eeas\end{small}}
 \def\beasm{\begin{small}\bea}
\def\eeasm{\eea\end{small}}

\def\rtd{\right.}
\def\ltd{\left.}

\def\clh{\h{{\cal H}}}
\def\cll{\h{{\cal L}}}

\def\cle{\h{{\cal E}}}

\def\dfn{\h{:=}}

\def\tuh{^{^{(H)}}}

\def\mmi{^-}
\newcommand\h[1]{\ensuremath{#1}}

\newcommand\tu[1]{\h{^{#1}}}
\newcommand\dn[1]{\h{ _{#1}}}
\newcommand\f[2]{\h{\frac{#1}{#2}}}
\newcommand\brf[2]{\h{\lrr{\f{#1}{#2}}}}
\newcommand\bsf[2]{\h{\lrs{\f{#1}{#2}}}}

\newcommand\dsf[2]{\h{\ds \f{#1}{#2}}}
\definecolor{indigo}{RGB}{34,27,59}

\newcommand\eto[1]{\h{e^{\ds #1}}}

\newcommand\gd[2]{\h{g\dn{#1 #2}}}

\newcommand\sps[1]{\h{^{(#1)}}}

\newcommand\dbyd[2]{\h{\dsf{d #1}{d #2}}}

\newcommand\lrr[1]{\h{\left( #1 \right)}} %left-right-round
\newcommand\lrs[1]{\h{\left[ \, #1 \, \right]}}
\newcommand\lrc[1]{\h{\left\{ #1 \right\}}}
\newcommand\lrb[1]{\h{\left| #1 \right|}}
\newcommand\eas[1]{\beas #1 \eeas}
\newcommand\eae[1]{\bea #1 \eea}

\newcommand\lt[2]{\lim_{#1 \ra #2}}
\newcommand\ct[1]{(\ref{eqn#1})}
\newcommand\hs[1]{\h{\hspace{#1in}}}

\newcommand\ti[1]{\h{\tilde{#1}}}

\newcommand\uu[1]{_{_{#1}}}

\newcommand\leqn[1]{\label{eqn#1}}
\newcommand\crop[4]{}

\newcounter{bean}

\newcommand\scn[2]{\h{#1 \times 10^{#2}}}

\def\tuh{\tu{\f12}}

\newcommand\qn[2]{\index{#1}{\the\numexpr\value{bean} ) {\it #2}}}

\newcount\myloopcounter

\newcommand{\clone}[2][10]{%
  \myloopcounter0% initialize the loop counter
  \loop\ifnum\myloopcounter < #1 % Test if the loop counter is < #1
  #2%
  \advance\myloopcounter by 1 % 
  \repeat % start again
}
\newcommand\norm[1]{\left| #1 \right|}
\def\mx{m\uu X}

\newcommand\igwe[2]{\includegraphics[width=#1in]{#2}}
\newcommand\ighe[2]{\includegraphics[height=#1in]{#2}}
\newcommand\pot[1]{10^{#1}}
\newcommand\potm[1]{10^{-#1}}

\newcommand\eqr[1]{Eq.~(\ref{#1})}
\newcounter{fgrs}

\title{Static Solitons in an Expanding Universe}
\author{Nagabhushana Prabhu\\
Purdue University, West Lafayette, IN 47907}

\begin{document}
\maketitle
\begin{abstract}
We show, analytically, that a static sine-Gordon soliton cannot exist
 in 1 + 1 non-dynamical  de Sitter
spacetime  if    $\a\dfn (m/H)\sq < 2$, 
%MDPI: Please ensure all variables/values in the equation appear in the same format in the text (normal/italic/bold/subscript/superscript).  %Author: I believe I have done that.
where $m$ is the mass
parameter of the sine-Gordon theory and $H$ is the Hubble constant.  
Conversely, we also show that static 
sine-Gordon solitons exist  in 1 + 1 non-dynamical de Sitter spacetime
if $\a > 2$.
   The 
 above threshold is explained{---qualitatively and to within an O(1) 
 factor---using a heuristic argument
involving the interplay of tensile force  in the  {Lorentzian} sine-Gordon soliton  
and the tidal force  in   de Sitter spacetime. A similar heuristic 
argument, which remains to be confirmed analytically, also suggests the existence of 
 a threshold, $ (\mv/H)\sq \sim O(1)$,  
below which the tidal forces are too strong to permit the existence of
a static `t Hooft--Polyakov monopole in non-dynamical 3 + 1 de Sitter spacetime; 
$\mv$ is the  mass of the vector boson.} 
Linde   has {suggested} that new inflation 
could have triggered secondary inflation {at} the core of a   {GUT} 
{(Grand Unified Theory)} monopole 
even {if} the Hubble 
constant 
{at or after the GUT phase transition} {was} significantly smaller than the mass of the $X$ boson.  {We present a 
heuristic argument, which suggests that the $SO(3)$ `t Hooft--Polyakov monopole does not
allow  
secondary inflation at its core  when the inflationary background is weak. Based
on the above,  as yet analytically unconfirmed,  heuristic argument for the $SO(3)$ \mbox{`t Hooft--Polyakov} monopole, we conjecture
that secondary inflation at the core of a GUT monopole  is infeasible.}
\end{abstract}

%%%%%%%%%%%%%%%%%%%%%%%%%%%%%%%%%%
\section{Introduction}
Following `t Hooft~\cite{thoo} and Polyakov's~\cite{poly} construction of a monopole solution in a
gauge theory with spontaneously broken symmetry in Lorentzian spacetime, monopole solutions in non-Lorentzian backgrounds have been studied by 
several researchers.  For~example, Niewenhuizen, Wilkinson and Perry~\cite{nwpe} describe a `t Hooft--Polyakov-like monopole solution that couples to the background metric
generated by the monopole itself.   Using the dependence of a monopole's mass on the Higgs vacuum
expectation value, it follows that, for a sufficiently high Higgs vacuum expectation value, a~monopole would collapse into a black hole~\cite{lnwe,orti}. For a sufficiently high Higgs vacuum expectation value, it was shown  that non-singular monopole solutions do not exist~\cite{lnwe,bfma}.    The~stability of various types of magnetically charged solutions coupled to gravity has been studied in~\cite{bfma2,holl,lnwe2}.  Monopoles in   non-Lorentzian background{s} are of interest since monopoles, if~they exist, are believed to have been 
created during the GUT {(Grand Unified Theory)} phase transition in the early
universe  when the background was non-Lorentzian
~\cite{kibbo,kibbw,pres}.   

While~\cite{nwpe,lnwe,orti} considered self-gravitating monopoles---that is, monopoles coupled to {the} background metric generated by the monopole itself---Linde~\cite{li94} considered 
the interaction of   {a GUT} monopole's core with an inflationary background.  
Linde  argued that, although~the Hubble constant
of $\sim$$10^{10}$ GeV  in the inflationary phase{, circa the GUT phase transition,}
is several 
orders of magnitude smaller than the mass of the $X$ boson, which is believed to {have a mass of} 
$\sim$$10^{15}$  GeV, the~inflationary background could trigger
secondary inflation at the core of {a GUT} monopole.  
A similar suggestion
was also made by Vilenkin~\cite{vile}, who called it topological inflation.  
We revisit Linde's argument in Section~\ref{sec:s4}.

In this paper we {analytically solve} the problem of the existence of 
static soliton solutions in 1 + 1 sine-Gordon theory  {that is } coupled minimally to a non-dynamical de Sitter background.   The~associated differential equation is singular, making the construction of well-behaved soliton solutions a rather intricate problem.   We prove that  in the de Sitter background  static sine-Gordon solitons exist  if {$\a:=(m/H)\sq > 2$}  and~do not exist  if {$\a < 2$}, where $m$ is the mass parameter of the sine-Gordon theory and $H$ is the Hubble constant of the de Sitter~background.

A similar problem for~the Mexican hat potential was studied numerically by Basu and Vilenkin~\cite{bavi}.
{They 
report that they could not numerically find 
any nontrivial solution for $\a\tu{BV} \leq 2$, where  
 $\a\tu{BV}$ is the square of the ratio of the size of the de Sitter
horizon to the flat-space thickness of the domain wall.  For~
$\a\tu{BV} > 2$, they display solutions, which are obtained
numerically.  The~threshold $\a\tu{BV}=2$
that they report, based on numerical studies, is consistent with the
analytic threshold {$\a   = 2$} that we derive.
}  

{Chen, Cheng, Li and Zhai~\cite{cclz} have attempted to analytically justify Basu and Vilenkin's numerical work. However, their argument  has an error.  They attempt a series expansion of the solution around their equation's singular point, implicitly assuming that the solution is analytic at the singular point.  As~\eqr{tr} in our discussion shows, the~third  and   higher-order  derivatives of a solution of a second-order singular
differential equation diverge at the equation's singular point, making a  series expansion around a singular point
meaningless.  Instead, an~effective approach to studying the existence of 
bounded solutions in the vicinity of a singular point is to use 
integral equations, as {we show} in \mbox{Lemmas \ref{lem:l2} and \ref{lem:l3}} (Section \ref{sec:poe}). Also, their argument hinges on the incorrect claim that  their boundary condition  (8) requires $a\so > 0$.}

{The thresholds, such as {$\a =2$} 
that we derive}{,} have been  explained in the past by comparing the 
`size' of the soliton with {the  length scale
of the de Sitter horizon.}  \mbox{A soliton} is a non-dissipative solution  that is
held together as a compact configuration by the internal tensile forces.
It is instructive to try to understand the above threshold in terms of the interplay
between {de Sitter} background's tidal forces, which tear apart extended 
objects  such as soliton{s}, and~the soliton's internal tensile forces,
which resist the tidal forces to 
keep the soliton together as a compact~object.   

{Following the analytic derivation of
the threshold {$\a=2$},  we 
present  a  heuristic qualitative explanation of the 
threshold in terms of 
the interplay of tensile and tidal forces. The~ 
heuristic argument suggests the existence of a threshold, which  
agrees with the exact threshold to 
within {an O(1) factor}.}   

{We present a similar heuristic argument involving the interplay
of the tensile force in {a} {Lorentzian} `t Hooft--Polyakov monopole
and the tidal force in {a} 3 + 1  non-dynamical 
de Sitter spacetime.  Our heuristic argument, which is yet to be
confirmed analytically, suggests that a  static
\mbox{`t Hooft--Polyakov} monopole cannot   exist in a de Sitter background
when the tidal force is too strong.  Based on 
our heuristic analysis, we propose 
 a conjecture  
pertaining to the existence of static `t Hooft--Polyakov monopoles
in a non-dynamical de Sitter background.}

{
Linde~\cite{li94,li942}  has  %MDPI: References should be numbered in order of appearance. We noticed that “Ref li942 (Bibliography 24)” appears after “Ref cclz (Bibliography 15)”; please rearrange all the references so that they appear in numerical order. %Author: rearranged the references to ensure they are in the order of appearance.
 suggested that  even a `weak'
inflationary background could  have inflated away the 
field gradients at the core of a GUT monopole and could have 
induced secondary inflation at the monopole's core.  
We present a heuristic argument, which suggests
that when the inflationary background is weak 
the tidal forces cannot overcome
the tensile forces at the core of {a Lorentzian} $SO(3)$ `t Hooft--Polyakov monopole and, hence, the~background cannot 
induce   secondary inflation  
at the monopole's core.   Based on the above
heuristic argument, which remains to be confirmed
analytically, we propose a conjecture that
{\it  the relative weakness of the inflationary background
at and after the GUT phase transition makes secondary
inflation at the core of a GUT monopole infeasible}. %MDPI: Please confirm if the italics are necessary; if not, please remove them. The following highlights are the same. %Author: the italics are necessary since it is the summary of a conjecture.
Our conjecture can be settled
 through an exact analysis of the
GUT monopole in the  inflationary  background
{that prevailed} during the GUT phase transition. 
}

The paper is organized as follows.  In~Section~\ref{sec:s1} we formulate
and analytically solve the problem of the existence of static solitons in a non-dynamical
1 + 1 de Sitter background.  The~main results are {stated} in Section~\ref{sec:s1},
{and } the {proofs are} deferred to Appendices \ref{app:c}--\ref{app:e}.  {At the 
end of Section~\ref{sec:s1}, we also discuss the back reaction of the soliton
on the de Sitter background.}  {In Section~}\ref{sec:ttf} {we present heuristic
arguments involving the interplay of tensile and tidal forces.}
% for~the sine-Gordon
%soliton and `t Hooft--Polyakov monopole in a de Sitter background.}
In Section~\ref{sec:s2}  we {present a heuristic} 
estimate {of} the tensile force in the sine-Gordon soliton
{in  Lorentzian} background. Comparing the tensile force to the tidal force of 
the 1 + 1 de Sitter background we {heuristically derive} a threshold that agrees with the {exact}
threshold, obtained in Section~\ref{sec:s1}, to~within {an O(1) factor}.  In~Section~\ref{sec:s3} we compare { a heuristic}  estimate of 
the tensile force in a {static} `t Hooft--Polyakov monopole
{in   Lorentzian} background with the tidal force of a 3 + 1 de Sitter background
to derive a {heuristic} threshold for the non-existence of a `t Hooft--Polyakov 
monopole{; based on the heuristic threshold, we propose a conjecture
regarding the existence of a static $SO(3)$ `t Hooft--Polyakov monopole in a
de Sitter background.}
{Addressing Linde's suggestion, in~Section~\ref{sec:s4}, 
 we present a heuristic analysis of 
the extent to which a weak inflationary background can stretch} 
a pre-existing $SO(3)$ `t Hooft--Polyakov
monopole. {Based on our heuristic analysis, we propose a conjecture
that the  inflationary background {that prevailed at the GUT phase transition} 
could not have induced secondary inflation at the core of the GUT monopole.  Finally,
in Section~\ref{sec:disc}, we present some remarks about the stability of Sine-Gordon solitons, the~behavior of solitons at the end of  inflation, the~impact
of quantum corrections and the precision of the numerical results
presented in the paper.}
%%%%%%%%%%%%%%%%%%%%%%%%%%%%%%%%%%%%%%%%%%%%%%%%%%%%%
\section{Static Sine-Gordon Solitons in de Sitter~Spacetime}\label{sec:s1}
{The exact result, presented in this section and in Appendices \ref{app:a}--\ref{app:e},  pertains to a  singular differential equation and,
hence, is rather intricate.  In~Section~\ref{sec:overview}, we present  an overview of the result to 
ensure that the mathematical arguments that follow do not 
obscure the essence of the result.  We begin by formulating the 
key differential equation.} 

\subsection{Lagrangian and Its~Symmetries}\label{sec:lag}
Consider the 1+1 sine-Gordon Lagrangian minimally coupled to  background 
de Sitter spacetime.  The~action  is  %MDPI: Please ensure all variables/values in the equation appear in the same format in the text (normal/italic/bold/subscript/superscript). %Author: Thanks for the guidance. I believe I have done that.
\bea
S[\varphi] 
= \int \sqrt{-g} \ \left\{ 
\frac 1 2 g^{\mu\nu} \partial_\mu \phi \partial_\nu \phi - 
{m^2\over \beta^2}(1-\cos(\beta\phi))
\right\} d\chi \  dt; 
\quad g\umn = diag\lrr{1, -\eto{-2Ht}}\quad \leqn{sga}
\hspace{-9pt}
\eea      
where $m$ is the mass parameter of the sine-Gordon theory, 
 $\beta$ the~sine-Gordon coupling constant,  
$g$  the determinant of the metric and $H$ the~Hubble
constant.  %\textls[-15]
{We ignore the back reaction of the field on the 
background metric. 
The corresponding
 Euler--Lagrange equation is}
\begin{eqnarray}
\phi_{tt} + H \phi_t - e^{-2Ht} \phi_{\chi\chi} = - {m^2\over \beta} \sin(\beta\phi).
\label{ele}
\end{eqnarray}
The %MDPI: Please confirm if the no-indent paragraph should be retained entire manuscript. %Author: No, this is not the beginning of a new paragraph.
  physical  spatial coordinate $u$ is related to the comoving spatial coordinate $\chi$ as 
$
u = e^{Ht} \chi.
$
We look for solutions of (\ref{ele}) of the form 
{$\phi(t,\chi) = \psi\left(u \right)$} that 
are {`static'} in the sense that they depend only on the physical spatial~coordinate.  

{
A `static'  
solution is not stretched by the expansion of the
de Sitter universe; it bucks the tidal forces of  inflationary expansion, its physical `dimensions'---as measured by a physical ruler---remaining 
unaltered as the universe expands.  As~an example, consider a 
`static' solution $\psi(u)$ 
 that has a zero at $u\sz>0$.  As~the universe 
 expands, the~location of the zero---as measured
 by a physical ruler---remains at a distance $u\sz$
 from $u=0$, 
 although the comoving 
 coordinate of the above zero, denoted 
 $\chi\sz(t)$,    
decreases exponentially as 
 $\chi\sz(t) \deq e\tu{-Ht} u\sz$.   A~distant 
 observer  at a fixed comoving coordinate, who is to the
 right of $u\sz$ at $t=0$, 
 would see the comoving coordinate of the
 zero decreasing exponentially  as the universe expands.
}

{Setting $z := Hu, \ \vp(z) := \beta\,  \psi(z/H) $, and~$\a:= {m^2 \over H^2}$}, 
%
%From (\ref{ele}) 
we see that a 
static solution of the Euler--Lagrange equation satisfies
the following singular differential equation:
\begin{eqnarray}
(z^2-1) \varphi_{zz}(z) + 2\, z\, \varphi_z(z) + \alpha \sin(\varphi(z)) = 0; 
\label{sfe-in-z}
\end{eqnarray}
 \erf{sfe-in-z} is singular { because the  coefficient of the highest 
 derivative vanishes at certain values of the independent variable 
 $z$.  In Eq. } (\ref{sfe-in-z}), {the coefficient of $\vp_{zz}$ 
 vanishes}  %MDPI: \hl{Footnote}s are not supported in our journal. We have therefore included this paragraph in the main text. Please confirm. %Author: confirmed.
 at $z=\pm 1$, which corresponds to the horizon at 
$u=\pm {1\over H}$.  We will call $z$ the {\it static coordinate}.

The vacua of the theory are
at $\vp \deq 2 n \pi$ for integer values of $n$.  A~topologically nontrivial static soliton---hereafter abbreviated to just soliton---is a  solution of 
(\ref{sfe-in-z}) that {approaches different vacua}  as $z \ra \pm \infty$.   The~topological charge $C[\vp]$
of {a  soliton} $\vp$ is 
\bea
C[\vp] \aea \dsf 1 {2\pi} \lrc{\lt z \infty \vp(z) - \lt z {-\infty}\vp(z)}
\label{tc}
\eea

We note that $\vp \ra -\vp$ and $\vp \ra \vp + 2 m\pi$  
are symmetries of the Lagrangian for~every integer $m$.
Further, {if } $\vp(z)$ is a solution of \erf{sfe-in-z}, then so is $\vp(-z)$, implying
that, if a  soliton with charge $Q$ exists, then so does 
a  soliton with charge $-Q$.

In order 
to {prove the existence of } a   soliton solution of \erf{sfe-in-z} with charge $Q=\pm 1$ it is 
sufficient to {prove the existence of a bounded} solution for $z \geq 0$ with the following boundary conditions:
\bea
 \vp(0) \deq \pi, \qquad \lt z {\infty} \vp(z) \aea 0,  \leqn{boundcon}
\eea

As noted earlier, \erf{sfe-in-z} is singular at
the horizon $z = \pm 1$.  {We 
present results about the behavior of bounded solutions
near the horizon ($z\approx 1$),
outside the horizon ($z>1$) 
 and  inside the horizon ($0\leq z< 1$) in~Sections~\ref{sec:aah}--\ref{sec:is},
 respectively.  We start by presenting an overview
 of the discussion that follows.}
%%%%%%%%%%%%%%%%%%%%%%%%%%%%%%%%%%%%%%%%%

\subsection{Overview}\label{sec:overview}
{In this section we present an overview of our analytical results pertaining to the sine-Gordon solitons in a de Sitter background.}

\subsubsection{Solutions That Are Bounded at~Singularities}

{Eq. \eqref{sfe-in-z} has singularities at $z = \pm 1$, and~its
solutions can become unbounded in the vicinity of the singularities. In~
Corollary \ref{cor:c2} (Section \ref{sec:nc}), 
we show that a necessary condition for a solution $\vp\sb$ to remain bounded 
at  $z=1$   is that 
\beas
\vp\sb'(1) \aea - \dsf\a 2 \, \sin(\vp\sb(1)).
\eeas
Thus, for~bounded solutions of \eqr{sfe-in-z},
the initial conditions at $z=1$, namely $\vp\sb(1)$ and $\vp\sb'(1)$, cannot
be specified independently.  Solutions of \eqr{sfe-in-z}, bounded at  
$z=1$, form a 1-parameter family, parametrized by $\vp\sb(1)$. 
}

{We need to show that  solutions that are bounded in the 
vicinity of  $z=1$ actually exist.  Rewriting \eqr{sfe-in-z} in terms
of the {\it hypergeometric coordinate,} $x = (1-z)/2$,  
we show, in~Lemma \ref{lem:l2} (Section \ref{sec:poe}), that the solution of a
certain integral equation gives a solution
of \eqr{sfe-in-z} in a neighborhood of $z=1$ ($x=0$). We then show, in~Lemma 
\ref{lem:l3} \mbox{(Section \ref{sec:poe}),}  that the aforementioned integral equation 
actually has a bounded solution
in the vicinity of $z=1$ for~every $0<\vp(1)<\pi$,
thereby proving that the  1-parameter family of bounded solutions of 
\eqr{sfe-in-z}, mentioned above, does exist.}

\subsubsection{Form of the~Solitons}

In Lemma \ref{lem:le10} (Appendix \ref{app:d}), we show  that, if $\vp(1) = n\pi$, 
then $\vp(z) \equiv n\pi$ for all $z$.  
Therefore, based on the discussion preceding Eq. \ct{boundcon},
without loss of generality, we can assume that $0<\vp(1)<\pi$  for~topologically nontrivial solutions.

 Figure  \ref{theasymptopia}  %MDPI: Figure should be numbered in order of appearance. We revised all figure to be in numerical order; please check all the citations accordingly entire manuscript. %Author: checked.
 shows {illustrative examples of} the behavior of bounded solutions
for $z> 1$.  If~$0<\vp(1)<\pi$, then, regardless of the value of 
$\a$, the~solutions vanish   as $z\ra \infty$.   The~decay is quite
slow and, hence,  the~plots are shown with a logarithmic scale.   

{In}~Lemma
\ref{lem:l1} (Section \ref{sec:asympt}), we show, rigorously, that 
the bounded solutions with $0 < \vp(1) < \pi$  vanish as $z\ra \infty$.  
Therefore, in~order to prove the existence of a topologically nontrivial solution, it suffices
to find a solution that starts at $0 < \vp(1) <\pi$ and reaches $\vp(0) = \pi$ when
we integrate {Eq.} (\ref{sfe-in-z})  from $z=1$ to $z=0$.

\begin{figure}[H]
\includegraphics[height=2.25in]{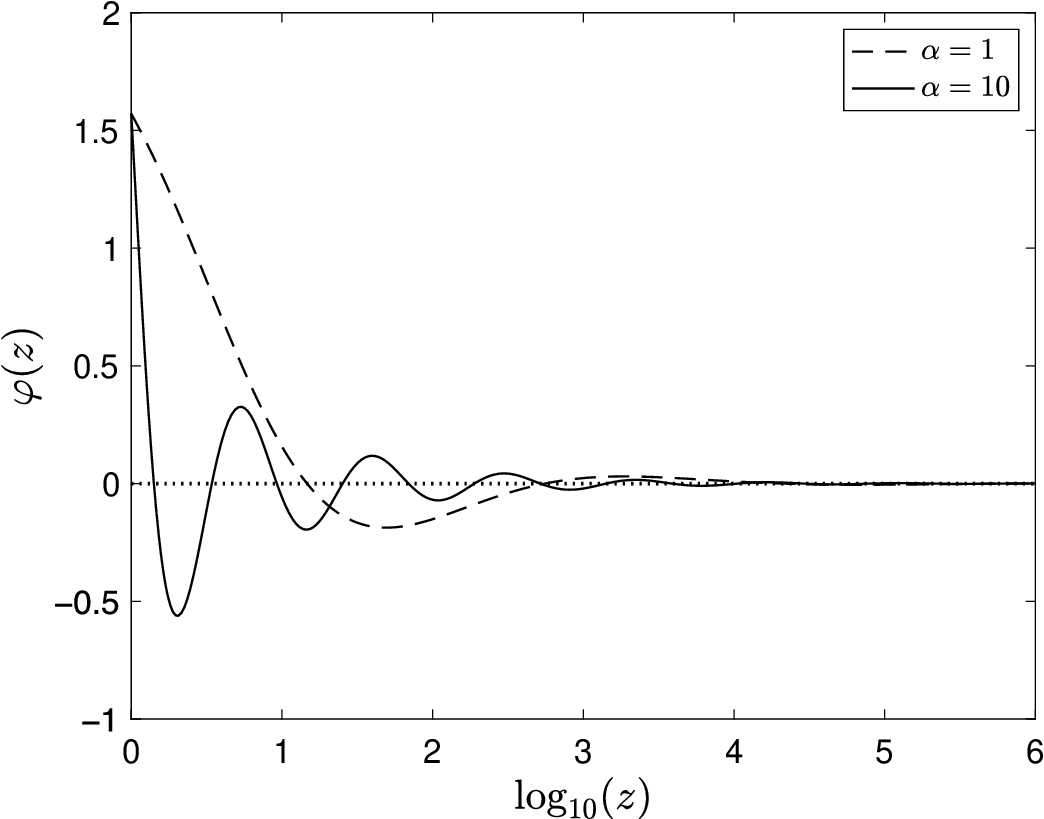}
\caption{ Sample  %MDPI: (1) Figure should be numbered in order of appearance. We revised all figure to be in numerical order; please check all and revise the citations accordingly if needed. (2) Please change the hyphen (-) into a minus sign (−, “U+2212”) in the figure, e.g., “-1” should be “−1”. (3) Please confirm whether an explanation of the dotted line at 0 needs to be added to the figure caption.
 solutions of \eqr{sfe-in-z}  {  in the range
 $1^+<z<\pot 6$, where $1^+ = 1 + \potm{14}$}.  The~boundary conditions
are {$\vp(1^+) = \pi/2, \ \vp'(1^+) =   -(\a/ 2) \, \sin(\vp(1^+))$ .  The 
dotted line represents $\vp(z) \equiv 0$.}}\label{theasymptopia} 
\end{figure}

\subsubsection{$\a<2$}
{The behavior of solutions for $\a < 2$ is {illustrated} in  Figure  \ref{thealphalw}. In~the graphs,
we have started at {$z=1\mmi:=1-\potm{15},$} and integrated the differential equation back to $z=0$.
The noteworthy feature in  Figure  \ref{thealphalw}  is that, regardless of what initial 
value {$0< \vp(1\mmi)<\pi$ is chosen  at $z=1\mmi$}, $\vp(z)$ remains strictly between 0 and $\pi$
in the range {$0\leq z\leq 1\mmi$}.     For~$\a < 2$, a~bounded 
solution with $0<\vp(1)<\pi$
cannot  reach $\vp(0) = \pi$.  Therefore, a~solution
satisfying the conditions in \ct{boundcon} does not exist for $\a<2$.} 

{Figure  \ref{theblowUp} {illustrates} the behavior of a solution that is bounded
in the vicinity of $z=1$ but~does not reach $\vp(0) = \pi$.  If~ we integrate 
{Eq.} (\ref{sfe-in-z}) {from
$z=1^-=1-\potm{15}$ past $z=0$ to $z=-1^+=-1+\potm{14}$, 
 the solution} becomes unbounded near $z=-1$
since it does not satisfy the  condition analogous to \ct{boundcon}, namely
$\vp'(-1) = \a/2 \sin(\vp(-1))${,} at~$z=-1$.}

{The above discussion represents the content of Theorem \ref{thm:t1} (Section
\ref{sec:lss}), in~which we show  that, for $\a<2$, every bounded solution
of \eqr{sfe-in-z}  is topologically trivial.  There are no solitons {for $\a<2$}.  
}

\begin{figure}[H]
\igwe 2 {0pt5}\hs{0.5}\igwe 2 {1} \\[0.5\bls]
\igwe 2 {1pt5} \hs{0.5}
\igwe 2 {1pt9}  \\
\caption{Solutions  (solid lines) %MDPI: (1) Please confirm whether an explanation of the dotted line and arrow needs to be added to the figure caption. (2) We moved figures 2 and 3 here to reduce empty space, please check and confirm.
 for different values of $\a < 2$ {in the range $0\leq z \leq 1\mmi:=1-\potm{15}$. The shown solutions correspond to $\vp(1\mmi) = 0.5, 1, 1.5, 2, 2.5, 3$.  
 $\vp(0) < \pi$ for all of the shown solutions. The dotted line represents $\vp(z) \equiv \pi$.}}\label{thealphalw}
\end{figure}   

\vspace{-6pt}

\begin{figure}[H]
\igwe 2 {blowUp}
\caption{ Solution (solid line)  %MDPI: (1) Please confirm whether an explanation of the dotted line and arrow needs to be added to the figure caption. (2) Please change the hyphen (-) into a minus sign (−, “U+2212”) in the figure, e.g., “-1” should be “−1”.
 for {$\a = 1.5$  and $\vp(1-\potm{15}) = 1.75, $ in~the range 
 $-1+\potm{14} \leq z \leq 
 1+\potm{15}$. 
 The dotted line represents $\vp(z) \equiv \pi$.}}\label{theblowUp}
\end{figure}   

\subsubsection{$\a > 2$}

%\textls[-15]
{Figure  \ref{theshoot} illustrates the behavior of solutions at a sample value  of  $\a>2$,
namely $\a = 5$.
For $\vp(1\mmi) = 1.14218, \, 1\mmi = 1-\potm{15}$, we have $\vp(0)>\pi$; for $\vp(1\mmi) = 0.14218$, 
$\vp(0)<\pi$.   
%In~Lemma \ref{lem:l9} (Appendix \ref{app:e}), we show that $\vp(0)$
%varies continuously with $\vp(1)$ for a bounded solution.  
%As we vary $\vp(1)$ from 0.14218 to 1.14218, at~an intermediate
%value of $\vp(1)$---specifically at $\vp(1) \approx 0.54218$---we 
%reach $\vp(0)=\pi$.  Taken together with the asymptotic 
%behavior, depicted in {Figure} \ref{theasymptopia},  {Figure} \ref{theshoot}
%shows that topologically nontrivial soliton exists for $\a = 5$.  

\begin{figure}[H]
%\ighe {2} {Shoot}
\includegraphics[width=5.5 cm]{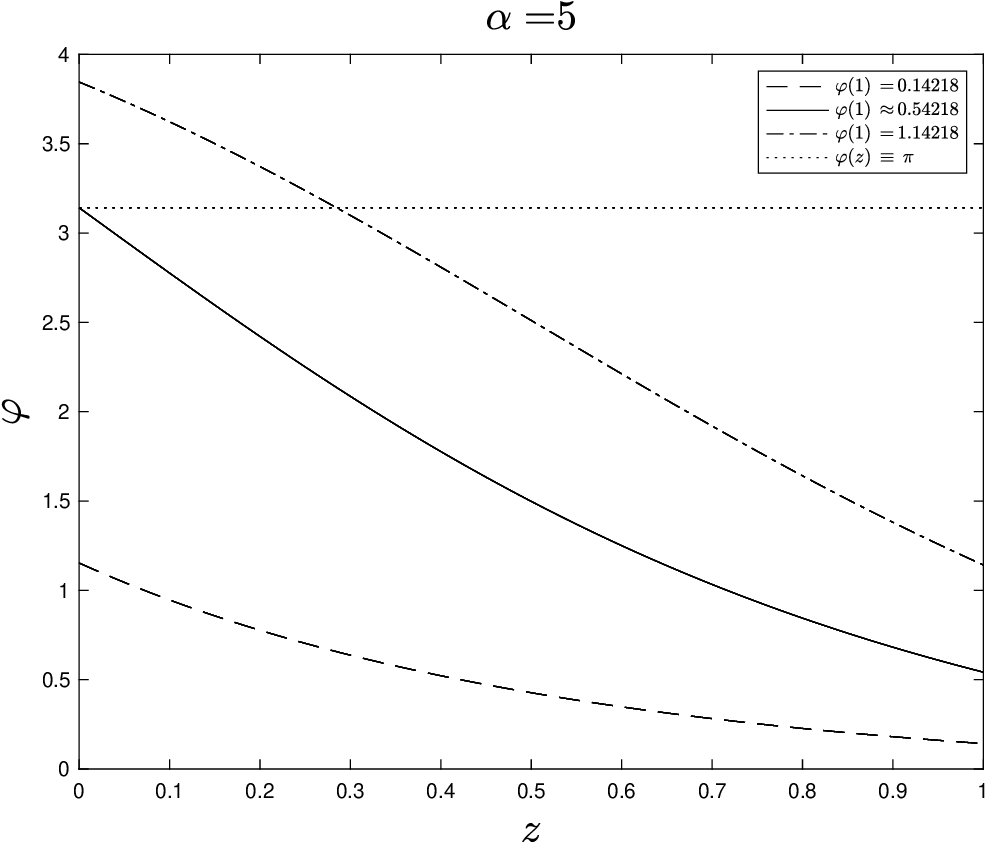}
\caption{{The three plots, in the range $0\leq z\leq 1\mmi:= 1-\potm{15}$, show that} $\vp(0)>\pi$  %MDPI: Please confirm whether an explanation of the arrow needs to be added to the figure caption.
for {$\vp(1\mmi) = 1.14218$; $\vp(0) < \pi$ for 
$\vp(1\mmi) = 0.14218$; $\vp(0) = \pi$ for $\vp(1\mmi) \approx 0.54218$ .
}}\label{theshoot}
\end{figure}   
}

In~Lemma \ref{lem:l9} (Appendix \ref{app:e}), we show that $\vp(0)$
varies continuously with $\vp(1)$ for a bounded solution.  
As we vary $\vp(1\mmi)$ from 0.14218 to 1.14218, at~an intermediate
value of $\vp(1\mmi)$---specifically at $\vp(1\mmi) \approx 0.54218$---we 
reach $\vp(0)=\pi$.  Taken together with the asymptotic 
behavior  depicted in {Figure} \ref{theasymptopia},  {Figure} \ref{theshoot}
shows that topologically nontrivial soliton exists for $\a = 5$.  

%{
%Soliton solutions for some  values of $\a>2$ are displayed in
% {Figure} \ref{thealphagw}.   The~range is limited to $0\leq z \leq 10$ to 
%be able to clearly display the behavior inside the horizon ($0\leq z<1$). 
%Since the initial condition has to be imposed at $z=1$, to~keep
%the solution bounded, 
%the integration was done separately
% in two intervals, $0\leq z \leq 1-\potm{15}$ 
%and  $1+\potm{15}\leq z\leq  10$,
%imposing the initial conditions at $1\pm \potm{15}$.  The~
%choice of $\potm{15}$ is discussed in Section~\ref{sec:numerics}.
%}

%{For comparison, in~ {Figure} \ref{thelorSG}, we have also plotted the Lorentzian sine-Gordon
%antikink $\vp_{_{SG}}(x)$ that interpolates between the vacuum  
%$\vp=2\pi$ at $x\ra -\infty$ and the vacuum 
%$\vp = 0$ at $x\ra \infty$.} 

\begin{figure}[H]
%\begin{adjustwidth}{-\extralength}{-4.5cm}
\centering
\igwe {1.8} {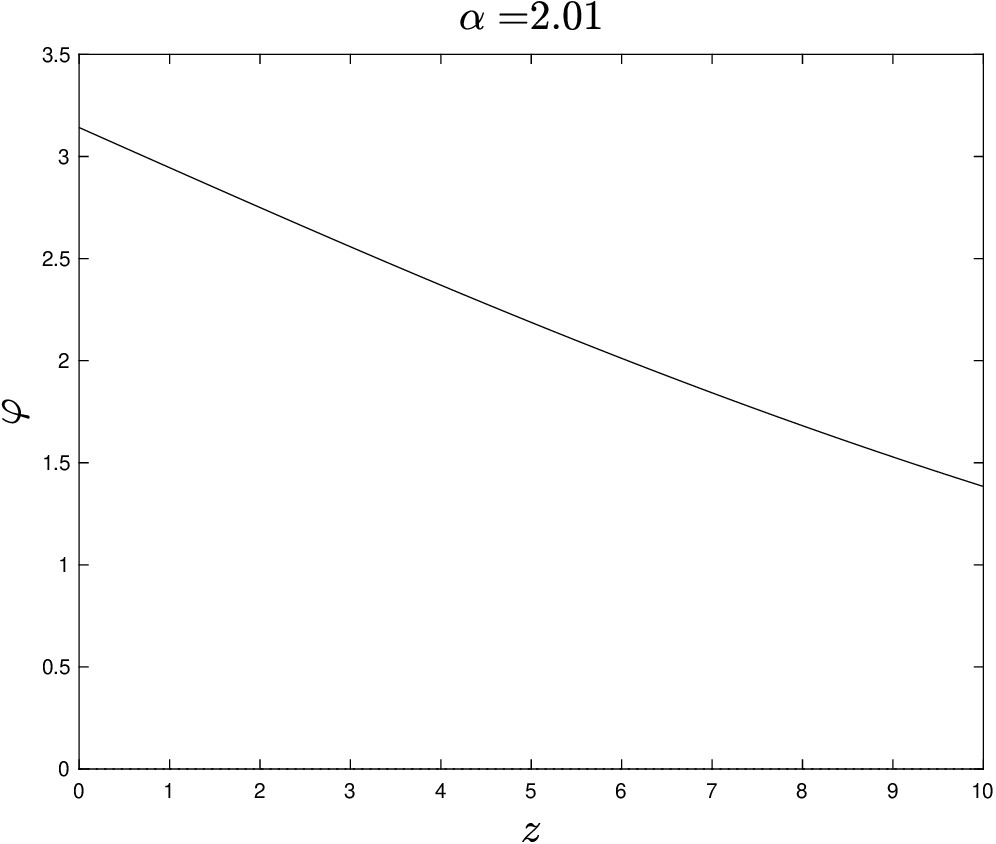} \igwe {1.8} {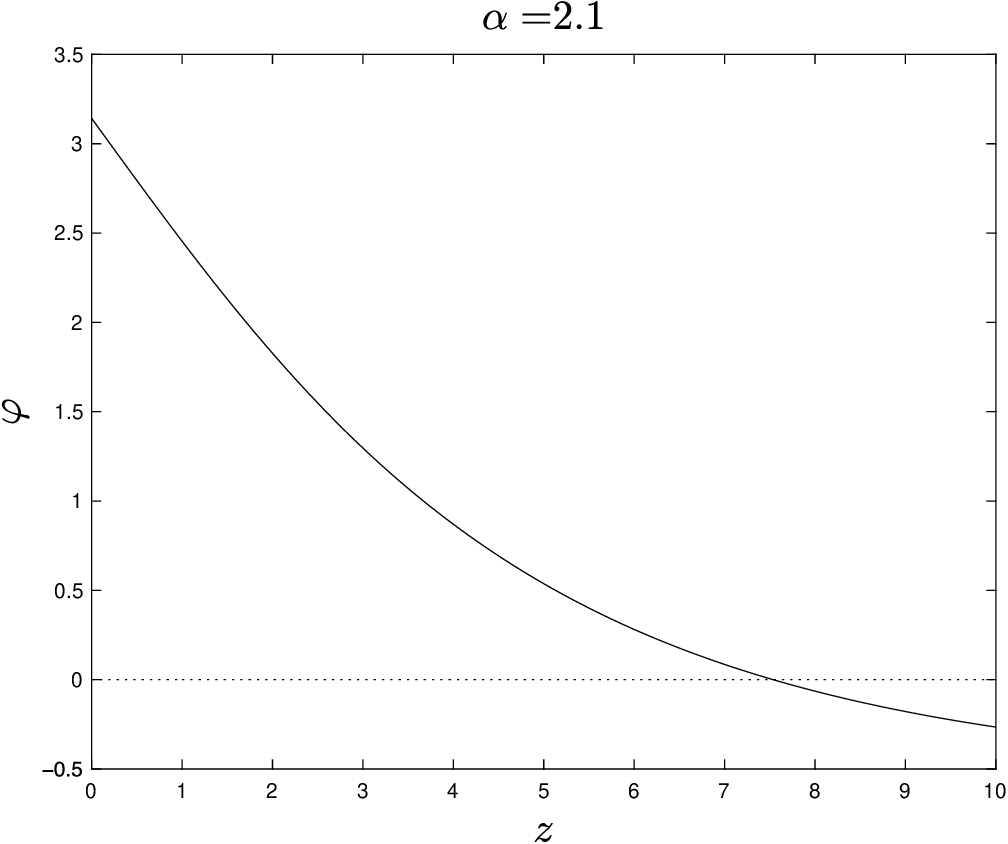} \igwe {1.8} {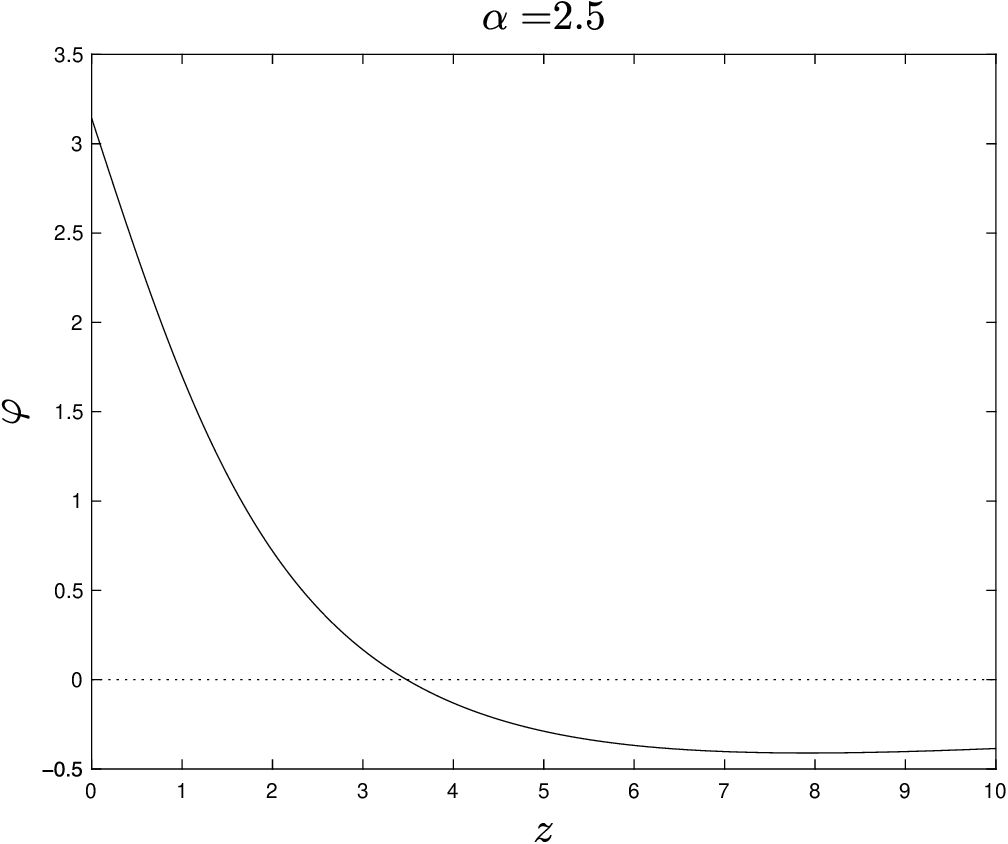} \\
\igwe {1.8} {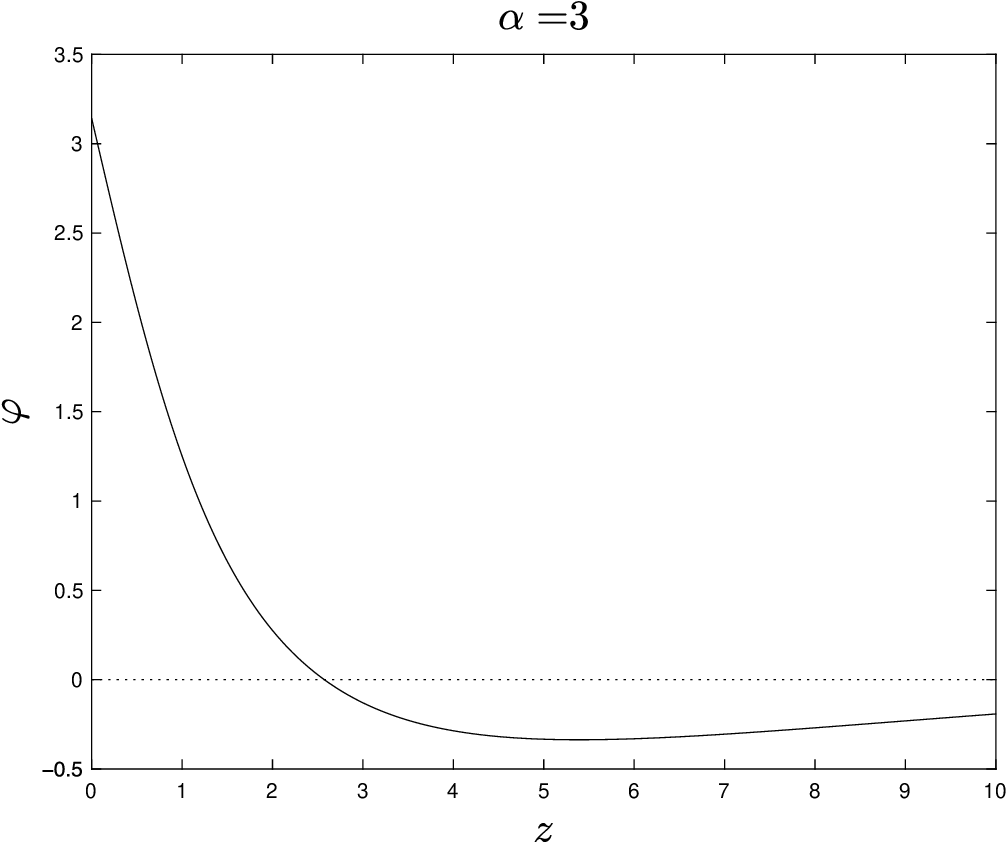} \igwe {1.8} {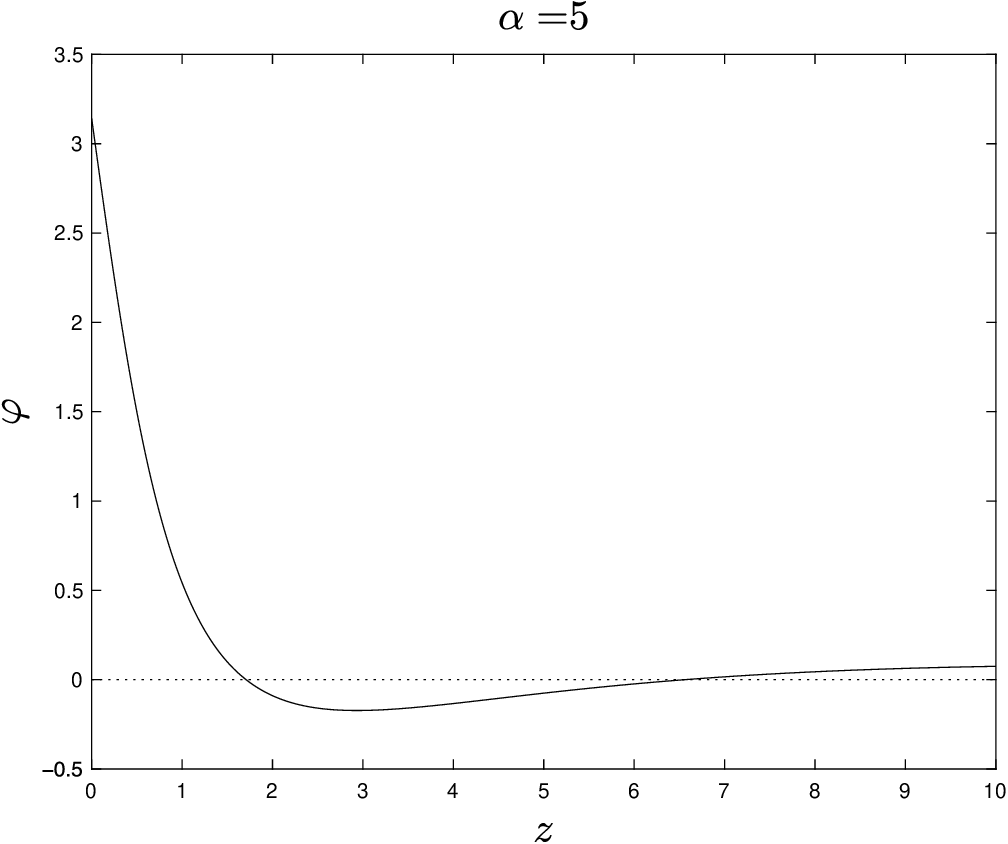} \igwe {1.8} {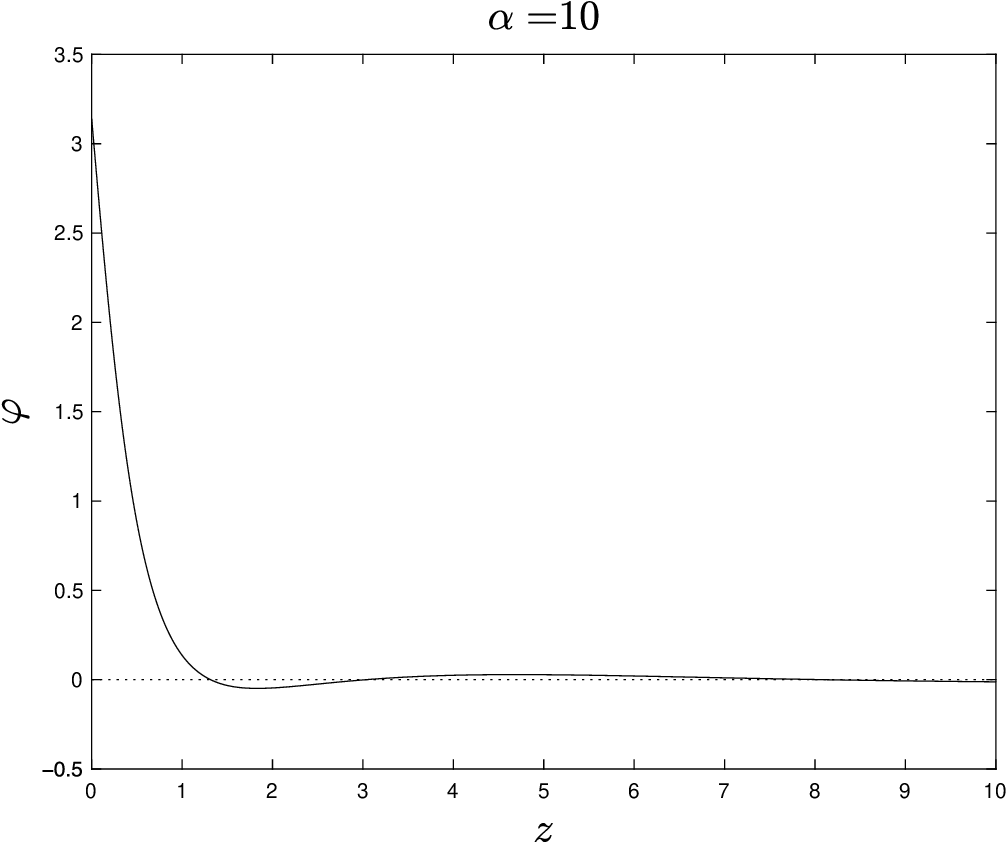}\\
\igwe {1.8} {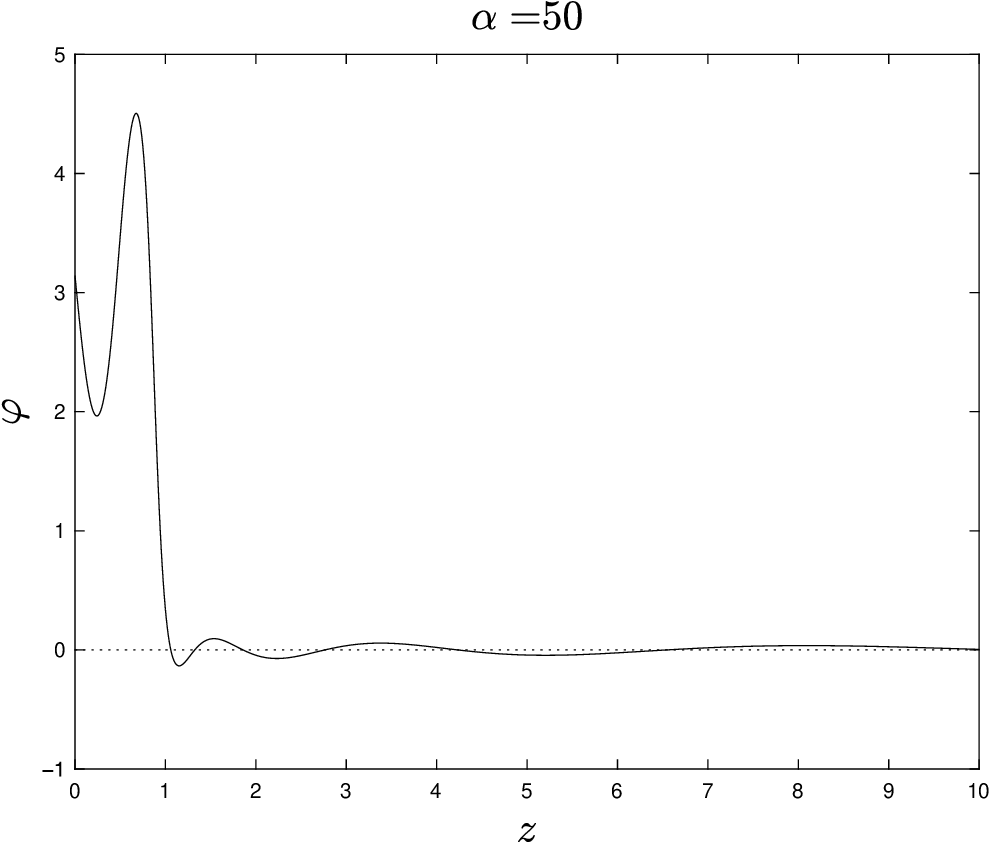} \igwe {1.8} {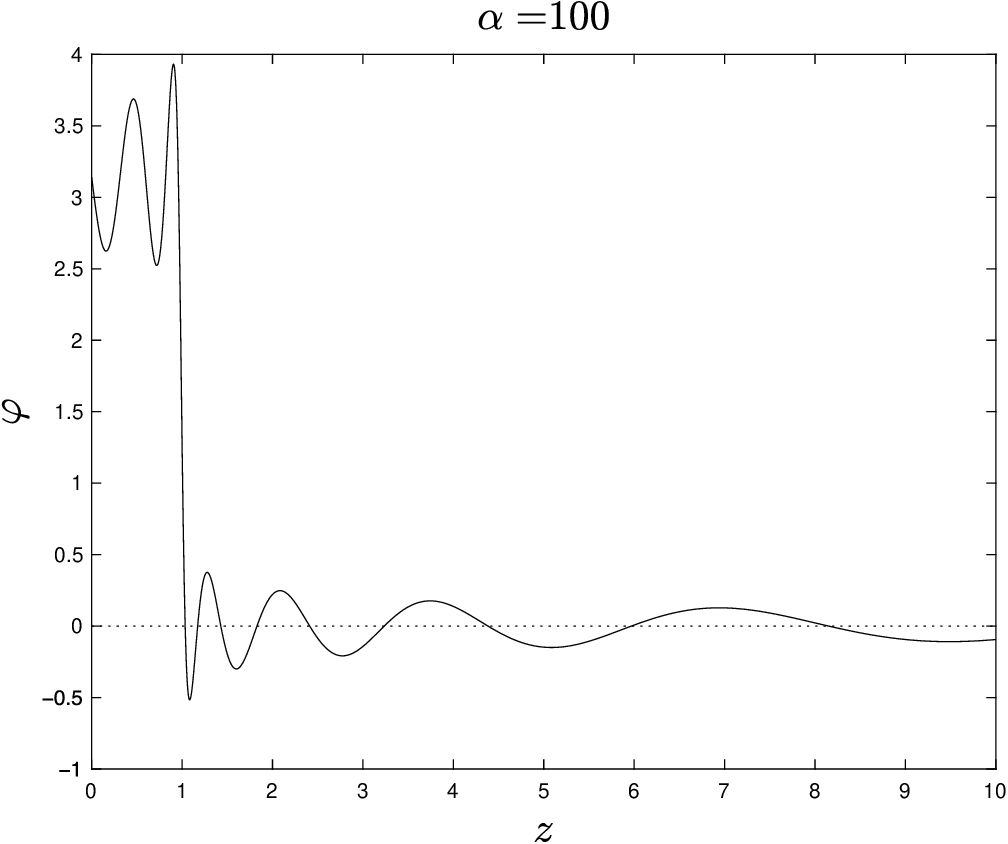} \igwe {1.8} {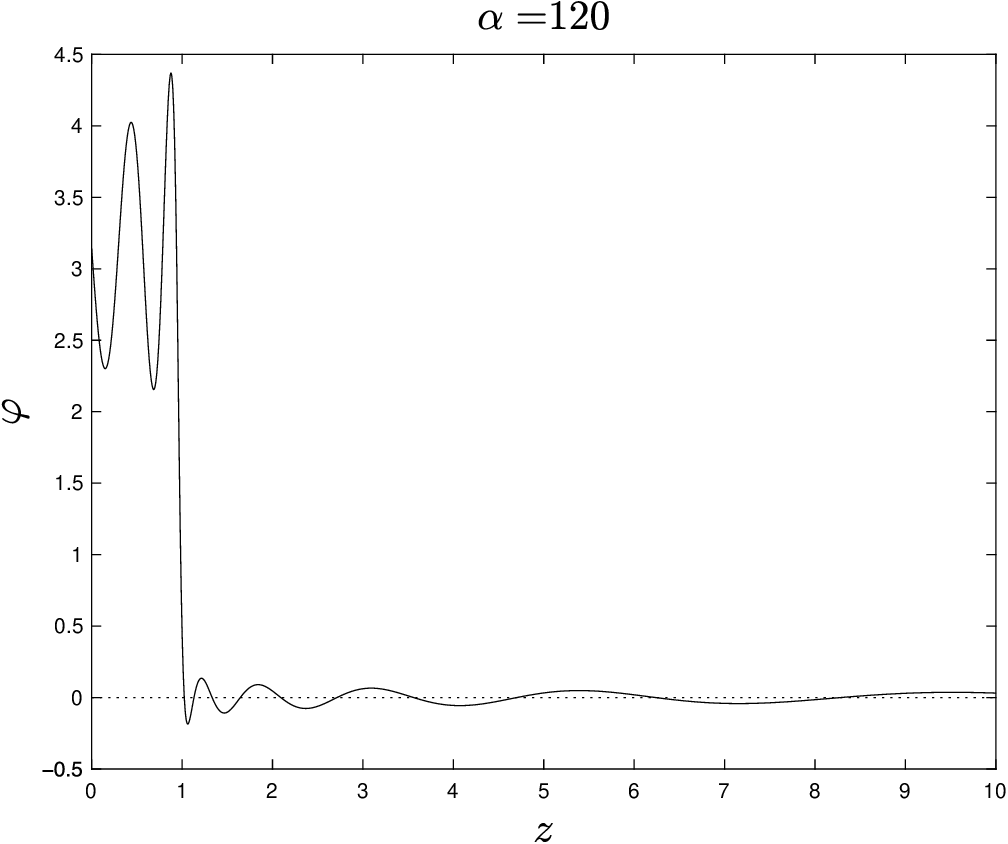}
%\end{adjustwidth}
\caption{{{Soliton} %MDPI: (1) Please confirm whether an explanation of the dotted/solid lines and arrow needs to be added to the figure caption. (2) Please change the hyphen (-) into a minus sign (−, “U+2212”) in the figure, e.g., “-1” should be “−1”.
 solutions (solid lines) for different values of $\a>2$ in the interval $0\leq z\leq 10$. All of the solutions satisfy $\vp(0) = \pi$ and $\lt z\infty \vp(z) = 0$ and~are 
bounded at the singularity $z=1$. {The plots were obtained by integrating} 
Eq. (\ref{sfe-in-z}) {separately in the intervals $[0,1-\e]$ and $[1+\e,10]$, where 
$\e =\potm{15}$. The value $\vp(1-\e)$ was chosen 
to ensure $\vp(0)=\pi$. The dotted lines represent 
$\vp(z) \equiv 0$.}}}\label{thealphagw}
\end{figure}   

\vspace{-6pt}

\begin{figure}[H]
\ighe 2 {lorSG}
\caption{{The} %MDPI: Please confirm whether an explanation of the dotted and solid lines needs to be added to the figure caption.
 Lorentzian sine-Gordon antikink $\vp_{_{SG}}(x) = 4 \tan\inv(e\tu x)$ corresponding to $m=1$ (solid line). {The dotted line represents
 $\vp(z) \equiv \pi$.}}\label{thelorSG}
\end{figure}
   
{
Soliton solutions for some  values of $\a>2$ are displayed in
 {Figure} \ref{thealphagw}.   The~range is limited to $0\leq z \leq 10$ to 
be able to clearly display the behavior inside the horizon ({$0\leq z\leq1$}). 
Since the initial condition has to be imposed at $z=1$, to~keep
the solution bounded, 
the integration was done separately
 in two intervals, $0\leq z \leq 1-\potm{15}$ 
and  $1+\potm{15}\leq z\leq  10$,
imposing the initial conditions at $1\pm \potm{15}$.  The~
choice of $\potm{15}$ is discussed in Section~\ref{sec:numerics}.
}

{For comparison, in~ {Figure} \ref{thelorSG}, we have also plotted the Lorentzian sine-Gordon
antikink $\vp_{_{SG}}(x)$ that interpolates between the vacuum  
$\vp=2\pi$ at $x\ra -\infty$ and the vacuum 
$\vp = 0$ at $x\ra \infty$.} 

{
In the following subsections of this section, we state the 
main results, deferring the proofs to {Appendices} \ref{app:c}-\ref{app:e}. 
\eqr{sfe-in-z} has singularities at $z = \pm 1$,
which form the horizon of the region $-1<z<1$. 
 In
Section~\ref{sec:aah} we present results that prove the 
existence of  bounded solutions 
in  a neighborhood of the singularity, $z=1$. 
In
Section~\ref{sec:asympt}, we present results, {which} show
 that {a} solution{, $\vp(z)$,} approaches the vacuum that is 
 closest to $\vp(1)$ as $z\ra \infty$.  
 Finally, in~Section~\ref{sec:is}, we focus on  the region 
 inside the horizon, $-1<z<1$, and {present results, which} show that
 solitons exist for $\a>2$ and do not exist
 for $\a<2$. }
%%%%%%%%%%%%%%%%%%%%%%%%%%%%%%%
\subsection{{Bounded~Solutions} in a Neighborhood of a {Singularity}}\label{sec:aah}
{\eqr{sfe-in-z} is closely related to the hypergeometric differential equation. We 
use the known solutions of the hypergeometric differential equations to 
establish the existence of bounded solutions of \eqr{sfe-in-z}.   In~Section~\ref{sec:ahg}
below, 
we describe the relation between \eqr{sfe-in-z} and the hypergeometric equation.
In Section~\ref{sec:nc}, we establish a necessary condition that must be satisfied by 
a solution at~the singularity $z=1$ if~it is to be bounded in a neighborhood 
of the singularity.  Finally, in~Section~\ref{sec:poe}, we prove the existence of a 
1-parameter family of bounded solutions of \eqr{sfe-in-z} in a neighborhood of 
the singularity $z=1$.}

\subsubsection{Associated Hypergeometric~Equation}\label{sec:ahg}
{The }hypergeometric differential equation
\bea
x(x-1) \vp\uxx + ((a+b+1)x-c)\vp\uxo + ab \vp \aea 0
\label{hge}
\eea
was first formulated by Euler and was studied extensively by
Gauss and Riemann.  The~relation between Eq.  \eqref{sfe-in-z} and \eqref{hge} can be seen by transforming from the {\it  {static coordinate}} $z$ to the {\it  {hypergeometric coordinate}} $x${,} as  $x = (1-z)/2$.  In~hypergeometric coordinate \mbox{\erf{sfe-in-z} becomes}
\bea
x(x-1) \vp\uxx + (2x-1)\vp\uxo + \a \vp \aea f(\vp)
\qquad f(\vp)\dfn \a(\vp - \sin(\vp)
\label{sfe-in-x}
\eea
With a slight abuse of notation we use $\vp$ to denote both
the solution $\vp(z)$ of \erf{sfe-in-z} as well as the solution
 $\Psi(x) = 
 \vp(1-2x)$ of \erf{sfe-in-x}.

For $a+b=c=1, \ ab = \a$ in \erf{hge},   the~left-hand side
of \erf{hge} coincides with the left-hand side of \erf{sfe-in-x}.
Thus \erf{sfe-in-x} is a nonlinear relative of the hypergeometric
differential equation {Eq.} (\ref{hge}). %\erf{sfe-in-x}.  
The~singularity $z=1$ in 
\erf{sfe-in-z} corresponds to the singularity $x=0$ in 
\erf{sfe-in-x}.

It is well known that the hypergeometric equation \erf{hge} has
two linearly independent solutions in a (punctured) neighborhood of $x=0$.  For~the special values of $a,b,c$ of interest to us, namely $a+b=c=1, \, ab=\a$, the~two linearly independent solutions of \erf{hge} are 
denoted $\vp\so(x)$ and 
$\vp\sw(x)$.    The~first
solution, $\vp\so(x)$, is analytic in the neighborhood ($-$1, 1) %MDPI: We revised the hyphen (-) to a minus sign (“−” U+2212). Please confirm.
and 
is described in \erf{phi-1} in  Appendix \ref{app:b}.  The~second solution is  
\bea
\vp\sw(x) \aea \vp\so(x) \log(|x|) + h(x), \qquad x\in(-\e,\e)\setminus \lrc 0
\label{phi-2}
\eea
where $h(x)$ is an analytic function in ($-\ruh, \ruh$)
for some $0<\ruh <1$. The~ 
form of $\vp\so(x)$ and the construction of  
$\vp\sw(x)$ are described in Appendix \ref{app:b}.
%%%%%%%%%%%%%%%%%%%%%%%%%%%%%%%%%%%%%%%%%%%%%%%%

\subsubsection{A Necessary Condition for Bounded~Solutions}\label{sec:nc}

 First, we show that, if a solution of \erf{sfe-in-x} is to be  bounded at the {singularity}  
$x=0$, then  it must satisfy the boundary condition 
described in Lemma \ref{lem:l4}, {stated below}.

\begin{lemma}\label{lem:l4}
Let $\vp\sb(x)$ be a bounded solution of 
\bea
x(x-1) \vp\sb'' + (2x-1) \vp\sb' + \a \sin(\vp\sb) \aea 0
\label{dex}
\eea
in a neighborhood $(-\e, \e)$, $\e>0$. Then,
\bea
\lt x 0 \vp\sb'(x) \aea  \a  \sin(\vp\sb(0)) 
\label{bc}
\eea
\end{lemma}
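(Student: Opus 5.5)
The plan is to write the equation in divergence form and integrate once from $0$. Observe that
\[
x(x-1)\vp\sb'' + (2x-1)\vp\sb' = \bigl(x(x-1)\vp\sb'\bigr)' ,
\]
so \erf{dex} reads $\bigl(x(x-1)\vp\sb'\bigr)' = -\a\sin(\vp\sb)$. Since $\vp\sb$ is bounded, the right-hand side $g(x) := -\a\sin(\vp\sb(x))$ is bounded by $\a$ on $(-\e,\e)$. Here I take $\vp\sb$ to be a (twice differentiable) solution on the punctured neighborhood, bounded and extending continuously to $x=0$ so that $\vp\sb(0)$ makes sense. If continuity at $0$ is not part of the hypothesis, I will first prove it as a by-product of the next step.

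The first step is to integrate on $(0,x)$ and on $(x,0)$ separately. Fix $0<x_0<x<\e$; then
\[
x(x-1)\vp\sb'(x) - x_0(x_0-1)\vp\sb'(x_0) = \int_{x_0}^x g .
\]
The key point is to show that the boundary term $B(x_0) := x_0(x_0-1)\vp\sb'(x_0)$ tends to $0$ as $x_0\to 0^+$.

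\begin{itemize}
\item The integral identity shows that $B(x_0)$ has a limit $L$ as $x_0\to0^+$, because $\int_{x_0}^x g$ converges as $x_0 \to 0^+$ ($g$ is bounded).
\item If $L\neq 0$, then near $0$ we have $\vp\sb'(x_0) \approx -L/x_0$, which integrates to $\vp\sb \approx -L\log x_0$. This is unbounded, a contradiction.
\end{itemize}

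Hence $L=0$ and $x(x-1)\vp\sb'(x) = \int_0^x g$. This also gives $|\vp\sb'(x)| \le \a/|x-1|$, which is bounded near $0$, so $\vp\sb$ is Lipschitz and continuous at $0$. The same argument on the left side gives the identical formula for $x<0$.

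The second step is to divide by $x(x-1)$:
\[
\vp\sb'(x) = \frac{1}{x-1}\cdot \frac1x\int_0^x g(y)\,dy .
\]
By continuity of $\vp\sb$ at $0$ (just established), $g$ is continuous at $0$ with $g(0) = -\a\sin(\vp\sb(0))$. Therefore the average $\frac1x\int_0^x g \to g(0)$, and
\[
\vp\sb'(x) \to \frac{g(0)}{-1} = \a\sin(\vp\sb(0)),
\]
which is \erf{bc}.

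The main obstacle is the first step, ruling out the constant of integration $L$. This is exactly the logarithmic mode $\vp\so\log|x|$ of \erf{phi-2}. The argument must use boundedness carefully: from $B(x_0)\to L$ one obtains $|\vp\sb'(x_0)| \ge |L|/(2x_0)$ with a fixed sign for small $x_0$. Integrating this bound yields a $\log$ divergence of $\vp\sb$, contradicting boundedness.
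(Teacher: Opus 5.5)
Your proof is correct and follows the paper's route. The paper also rewrites the equation as $\vp\sb'(x) = -\frac{1}{x(x-1)}\int_0^x \a\sin(\vp\sb(y))\,dy$ and passes to the limit, using the integral mean value theorem where you use the averaging step. However, it takes the vanishing of the integration constant for granted. Your log-divergence argument for $L=0$, which is exactly where boundedness of $\vp\sb$ enters, supplies a step the paper leaves implicit.

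One caveat: the bound $|\vp\sb'|\le \a/|x-1|$ only yields one-sided limits of $\vp\sb$ at $0$. Continuity at $0$ must therefore be a hypothesis (as in Corollary~\ref{cor:c2}) rather than a by-product. Gluing bounded one-sided solutions with different values $\vp\sb(0^\pm)$ would give different one-sided limits of $\vp\sb'$.
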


{The proof of Lemma \ref{lem:l4} is presented in Appendix \ref{app:c}.}
{Translating} Lemma \ref{lem:l4}
from hypergeometric coordinate  to static coordinate,
{we obtain the following corollary.}

\begin{corollary}\label{cor:c2}
Let $\vp\sb(z)$ be a bounded continuous solution of 
\bea
(z\sq -1) \vp\sb'' + 2z \vp\sb' + \a \sin(\vp\sb) \aea 0
\eea
in a neighborhood $(1-\e, 1+\e)$, $\e>0$. Then,
\bea
\lt z 1 \vp\sb'(z) \aea  -\f\a  2 \sin(\vp\sb(1))
\label{bc2}
\eea
\end{corollary}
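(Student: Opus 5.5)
Corollary \ref{cor:c2} follows from Lemma \ref{lem:l4} by the change of variables $x=(1-z)/2$. Given a bounded continuous solution $\vp\sb(z)$ on $(1-\e,1+\e)$, define $\Psi(x) := \vp\sb(1-2x)$ on $(-\e/2,\e/2)$. Then $\Psi$ is bounded, since it takes the same values as $\vp\sb$. We also have $\Psi'(x) = -2\vp\sb'(1-2x)$ and $\Psi''(x) = 4\vp\sb''(1-2x)$.

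The first step is to check that $\Psi$ satisfies \eqr{dex}. Substituting $z=1-2x$ gives $z^2-1 = (1-2x)^2-1 = 4x(x-1)$ and $2z = 2(1-2x)$. Hence
\beas
(z\sq-1)\vp\sb'' + 2z\vp\sb' &=& 4x(x-1)\cdot\f14\Psi'' + 2(1-2x)\cdot\lrr{-\f12\Psi'} \\
&=& x(x-1)\Psi'' + (2x-1)\Psi'.
\eeas
Adding $\a\sin(\vp\sb) = \a\sin(\Psi)$ shows that $\Psi$ solves \eqr{dex} on $(-\e/2,\e/2)$, with $\e/2$ playing the role of $\e$ in the lemma. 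As in the lemma, this equation is required away from the singular point; at $x=0$ it holds only in the limiting sense, so no regularity is needed there.

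The second step applies Lemma \ref{lem:l4} to $\Psi$, which gives $\lt x 0 \Psi'(x) = \a\sin(\Psi(0))$. Since $\Psi(0) = \vp\sb(1)$, and $x\ra 0$ from either side corresponds to $z\ra 1$ from either side, we get
\beas
\lt z 1 \vp\sb'(z) = -\f12 \lt x 0 \Psi'(x) = -\f\a2 \sin(\vp\sb(1)).
\eeas

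The only point needing care is that the hypotheses match. Boundedness and the solution property transfer through the affine map, and continuity of $\vp\sb$ at $z=1$ makes $\Psi(0)=\vp\sb(1)$ well defined. Otherwise the argument is a routine substitution.
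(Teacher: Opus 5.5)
Your proof is correct and is the paper's argument: the paper obtains Corollary~\ref{cor:c2} by translating Lemma~\ref{lem:l4} from the hypergeometric coordinate $x=(1-z)/2$ back to the static coordinate. Your explicit checks that $z^2-1=4x(x-1)$, that $2z\varphi'=(2x-1)\Psi'$, and that $\varphi'(z)=-\frac{1}{2}\Psi'(x)$ fill in the routine details the paper leaves implicit.
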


\subsubsection{Proof of Existence of Bounded Solutions in a Neighborhood of~Singularities}\label{sec:poe}

{We} show that  one can obtain a solution of 
the differential equation \erf{sfe-in-x}  
by solving a related 
integral equation, {as shown} 
 in Lemma \ref{lem:l2}{ below}.
  
\begin{lemma}\label{lem:l2}
Let $\vp\sb(x)$ be a bounded continuous solution of the integral
equation
\bea
\begin{array}{l}
{
\vp(x) \aea \vp\sw(x) \int_0^x \bsf{\vp\so(y) \, f(\vp(y))}{(y-1) \, y\, W(y)} dy 
%}
%\\
%{\ \ \ \ \ \ \ \ \ \ \ \ \ \  
- \vp\so(x) 
\int_0^x \bsf{\vp\sw(y) \, f(\vp(y))}{(y-1) \, y\, W(y)} dy + \g \vp\so(x)
%\qquad \qquad  
}
\end{array}
\label{ie}
\eea
in some sufficiently small 
interval $(-\e,\e)$,  where $0 < \e < \ruh$.
Then, in
$( -\e,\e )$, 
$\vp\sb(x)$ satisfies the differential equation 
\bea
x(x-1) \vp\uxx + (2x-1)\vp\uxo + a\sin(\vp) \aea 0
\label{deqx}
\eea 
and the boundary condition described in Lemma \ref{lem:l4}.

\hs{0.25}$\vp\so(x)$ and $\vp\sw(x)$ are described in 
\erf{phi-1} and \erf{phi-2}. {$\rho_h$ is defined after}
\erf{phi-2}.  $W(x)$ is the Wronskian of $\vp\so$ 
and $\vp\sw$.    $f(\vp) = \a(\vp-\sin(\vp))$ and 
$\g$ is an arbitrary constant. 
\end{lemma}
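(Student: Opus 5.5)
The plan is to recognise \erf{ie} as the variation-of-parameters formula for the linear hypergeometric operator
$L[\vp] := x(x-1)\vp\uxx + (2x-1)\vp\uxo + \a\vp$, with the nonlinear remainder $f(\vp)$ moved to the right-hand side. By \erf{sfe-in-x}, equation \erf{deqx} is exactly $L[\vp]=f(\vp)$. So it suffices to show that a bounded continuous solution of \erf{ie} satisfies $L[\vp\sb]=f(\vp\sb)$ on $(-\e,\e)\setminus\{0\}$. We then read off the limit of $\vp\sb'$ at $x=0$.

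\textbf{Step 1 (the Wronskian and well-posedness of the integrals).} Writing $L$ in normal form, $p(x)=(2x-1)/(x(x-1))$, and Abel's formula gives $W(x)=C/(x(x-1))$ for a constant $C\neq 0$. Hence $(y-1)\,y\,W(y)\equiv C$, and the two integrands in \erf{ie} reduce to $\vp\so(y)f(\vp(y))/C$ and $\vp\sw(y)f(\vp(y))/C$. The first is continuous. The second has at most a logarithmic singularity at $y=0$, by \erf{phi-2}, and is therefore integrable. Both integrals,
\[
I\so(x)=\int_0^x \vp\so f(\vp)\,dy/C, \qquad I\sw(x)=\int_0^x \vp\sw f(\vp)\,dy/C,
\]
are thus well-defined and continuous. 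They are $C^1$ away from $0$, and one has $I\so(x)=O(x)$ and $I\sw(x)=O(x\log|x|)$.

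\textbf{Step 2 (differentiation away from $0$).} For $x\neq 0$ we differentiate $\vp\sb=\vp\sw I\so-\vp\so I\sw+\g\vp\so$. The terms in which the integrals themselves are differentiated cancel, because $\vp\sw\vp\so f/C-\vp\so\vp\sw f/C=0$. This leaves
\[
\vp\sb'=\vp\sw' I\so-\vp\so' I\sw+\g\vp\so'.
\]
Differentiating once more gives
\[
\vp\sb''=\vp\sw''I\so-\vp\so''I\sw+\g\vp\so''+W f(\vp\sb)/C .
\]
Since $L[\vp\so]=L[\vp\sw]=0$, applying $L$ yields $L[\vp\sb]=x(x-1)W f(\vp\sb)/C=f(\vp\sb)$. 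This is \erf{deqx} on $(-\e,\e)\setminus\{0\}$, and it also shows that $\vp\sb\in C^2$ there, since the right-hand sides are continuous.

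\textbf{Step 3 (the boundary condition at $x=0$).} This is the main delicate point, because $\vp\sw'$ blows up like $1/x$ and $\vp\sb''$ need not exist at $0$. One could simply invoke Lemma~\ref{lem:l4}, since $\vp\sb$ is bounded. I would instead verify the limit directly from the formula for $\vp\sb'$, using the normalisations from Appendix~\ref{app:b}: $\vp\so(0)=1$ and $\vp\so'(0)=ab/c=\a$. From \erf{phi-2}, $\vp\sw'(x)=\vp\so(x)/x+O(\log|x|)$. Then $W\sim 1/x$ near $0$, which forces $C=\lim_{x\to0}x(x-1)W(x)=-1$.

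With these values the three terms of $\vp\sb'$ behave as follows as $x\to 0$:
\begin{itemize}
\item $\vp\sw' I\so\to \vp\so(0)^2 f(\vp\sb(0))/C=-f(\vp\sb(0))$;
\item $\vp\so' I\sw\to 0$, since $I\sw=O(x\log|x|)$;
\item $\g\vp\so'\to\g\a$.
\end{itemize}
Moreover $\vp\sb(0)=\g$, because $I\so(0)=I\sw(0)=0$ and $\vp\sw I\so=O(x\log|x|)\to 0$. Therefore
\[
\lt x0\vp\sb'(x)=\a\g-\a(\g-\sin\g)=\a\sin(\vp\sb(0)),
\]
which is exactly \erf{bc}.

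The obstacle I expect is the careful bookkeeping of the logarithmic terms in Step 3, namely the exact normalisation of $\vp\sw$ and hence the sign and value of $C$. If the normalisation in Appendix~\ref{app:b} differs, the constant changes, but it must still cancel against $\vp\sw'\sim\vp\so/x$. As a fallback, Lemma~\ref{lem:l4} delivers \erf{bc} for any bounded solution regardless of normalisation.
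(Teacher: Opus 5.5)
Your proposal is correct. Steps~1--2 are essentially the paper's argument. The paper also writes $\varphi_b=\varphi_2A_1-\varphi_1A_2+\gamma\varphi_1$, notes that the terms in which the integrals are differentiated cancel, picks up the extra term $f(\varphi_b)/(x(x-1))$ in $\varphi_b''$, and removes the rest using $D[\varphi_1]=D[\varphi_2]=0$. Your Abel identity $(y-1)yW(y)\equiv-1$ is sharper than Lemma~\ref{lem:l5}, which only records $g(0)=-1$ and $|g|\ge\tfrac12$, but Step~2 does not need it.

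The genuine difference is Step~3. The paper rewrites the ODE on the punctured interval as $\varphi_b'(x)=-\frac{1}{x(x-1)}\int_0^x\alpha\sin(\varphi_b(y))\,dy$ and applies the mean value theorem, i.e.\ it reruns the proof of Lemma~\ref{lem:l4}. You instead evaluate $\lim_{x\to0}\varphi_b'$ directly from $\varphi_b'=\varphi_2'I_1-\varphi_1'I_2+\gamma\varphi_1'$, using the asymptotics of $\varphi_2'$, $I_1$ and $I_2$.

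The paper's route is shorter, but its integrated form tacitly sets the integration constant $\lim_{x\to0^\pm}x(x-1)\varphi_b'(x)$ to zero on each side. That needs justification: boundedness rules out a $\log|x|$ term, or the representation formula gives $x\varphi_b'\to0$. Your computation supplies this automatically. It also shows $\varphi_b(0)=\gamma$, so the parameter of the family in Lemma~\ref{lem:l3} is exactly the boundary value, which is how $\gamma$ is used in Lemmas~\ref{lem:l9}--\ref{lem:l11}.

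Your worry about normalization is unfounded. Replacing $\varphi_2$ by $\kappa\varphi_2+c\varphi_1$ multiplies both $W$ and $\varphi_2(x)\varphi_1(y)-\varphi_1(x)\varphi_2(y)$ by $\kappa$, so the kernel in \eqref{ie} does not depend on the choice of basis. Rescaling $\varphi_1$ only changes $\varphi_b(0)=\gamma\varphi_1(0)$, and since $\varphi_1'(0)=\alpha\varphi_1(0)$, the limit is still $\alpha\sin(\varphi_b(0))$.
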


Thus, to~obtain a solution of \erf{sfe-in-x} around $x=0$, it is sufficient to construct a continuous bounded solution of the 
the integral {equation} \erf{ie} in a neighborhood of $x=0$.  The~following lemma establishes the existence of a {1-parameter} family of such
solutions for integral {equation~} \erf{ie}.  

\begin{lemma}\label{lem:l3}
There exists an interval $(-\e,\e)$,   $\,\e>0$ in~which the integral equation 
\beasm
\begin{array}{l}
{
\vp(x) \aea \vp\sw(x) \int_0^x \bsf{\vp\so(y) \, f(\vp(y))}{(y-1) \, y\, W(y)} dy 
%}
%\\
%{
%\ \ \ \ \ \ \ \ \ \ \ \ \ \   
- \vp\so(x) 
\int_0^x \bsf{\vp\sw(y) \, f(\vp(y))}{(y-1) \, y\, W(y)} dy + \g \vp\so(x) 
%\qquad 
}
\end{array}
\label{ie2}
\eeasm
has a 1-parameter family of continuous bounded solutions, $\vp\sb(x;\g)$, labeled by the parameter
$\g$.
\end{lemma}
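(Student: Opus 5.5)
The plan is a Banach fixed-point argument on the space $X_\e$ of bounded continuous functions on $(-\e,\e)$ with the sup norm, for a suitably small $\e$. Let $T_\g$ denote the right-hand side of the integral equation. The first step is to compute the kernel explicitly. The homogeneous part of \erf{sfe-in-x} can be written in self-adjoint form, $\lrr{x(x-1)\vp\uxo}\uxo + \a\vp = 0$. Abel's identity then gives $W(y) = C_W/(y(y-1))$ for a nonzero constant $C_W$, so that $(y-1)\,y\,W(y) = C_W$ is constant.

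The integral equation then reduces to
\beas
\vp(x) \aea \dsf{1}{C_W}\lrs{\vp\sw(x)\int_0^x \vp\so(y) f(\vp(y))\,dy - \vp\so(x)\int_0^x \vp\sw(y) f(\vp(y))\,dy} + \g\vp\so(x).
\eeas
This removes the apparent $1/y$ singularity in the integrand, and the only remaining singular behaviour is the logarithm in $\vp\sw$. From Appendix~\ref{app:b}, $\vp\so$ is analytic near $0$; I normalize it so that $\vp\so(0)=1$. From \erf{phi-2}, on $(-\e,\e)$ with $\e<\ruh$ we have $|\vp\sw(x)| \le A(1+|\log|x||)$ for some constant $A$, and $|\vp\so|\le B$.

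Next I check that $T_\g$ maps $X_\e$ into itself. If $\vp$ is bounded, then $f(\vp)$ is bounded, since $|f(\vp)|\le \a(|\vp|+1)$. The first term is therefore $O(|x|\,|\log|x||)$, which is continuous on $(-\e,\e)\setminus\lrc 0$ and tends to $0$ as $x\ra0$. The second term is $O\lrr{\int_0^{|x|}(1+|\log y|)\,dy} = O(|x|\,|\log|x||)$ as well, and it is continuous because $\log$ is integrable. Hence $T_\g\vp$ extends to a bounded continuous function with $T_\g\vp(0)=\g$.

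The key estimate is the contraction. Since $f'(\vp)=\a(1-\cos\vp)$, the function $f$ is globally Lipschitz with constant $2\a$. It follows that
\beas
\|T_\g\vp - T_\g\psi\|_\infty \alea \dsf{2\a}{|C_W|}\,\sup_{|x|<\e}\lrc{|\vp\sw(x)|\,B|x| + B\int_0^{|x|} A(1+|\log y|)\,dy}\,\|\vp-\psi\|_\infty \alea K\,\e(1+|\log\e|)\,\|\vp-\psi\|_\infty .
\eeas
Here $K$ depends only on $\a$, $A$, $B$ and $C_W$, and in particular not on $\g$. I choose $\e$ small enough that $K\e(1+|\log\e|)\le \f12$.

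Because $f$ is globally Lipschitz, $T_\g$ is a contraction on all of $X_\e$, so no invariant ball is needed. Banach's theorem then gives a unique fixed point $\vp\sb(\cdot;\g)\in X_\e$ for every $\g\in\mathbb R$, on an interval $(-\e,\e)$ that is the same for all $\g$. These solutions are genuinely distinct for distinct $\g$, since $\vp\sb(0;\g)=\g\vp\so(0)=\g$. In particular, taking $\g\in(0,\pi)$ realizes every prescribed value $\vp(1)\in(0,\pi)$ in static coordinates. By Lemma~\ref{lem:l2}, each of these fixed points solves \erf{deqx} and satisfies the boundary condition of Lemma~\ref{lem:l4}.

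The step I expect to need the most care is the uniform handling of the logarithmic singularity of $\vp\sw$ at $x=0$. Two things must be verified there. First, the product $\vp\sw(x)\int_0^x(\cdot)$ must be continuous at $0$, and not merely bounded. Second, the contraction constant must go to zero like $\e|\log\e|$. Both follow from the bound $|\vp\sw(x)|\le A(1+|\log|x||)$ together with the constancy of $y(y-1)W(y)$, so I would establish the Wronskian identity first and derive everything else from it.
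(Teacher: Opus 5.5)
Your proposal is correct and follows the same route as the paper. The paper runs the Picard iteration $\varphi^{(n)} = V(x;\varphi^{(n-1)}) + \gamma\varphi_1(x)$ and shows that successive differences shrink by a factor $O(\epsilon)$ independent of $\gamma$, using the same Lipschitz bound $2\alpha$ for $f$. The differences are only technical:
\begin{itemize}
\item Your Abel identity replaces the lower bound $|g|\ge\frac12$ of Lemma~\ref{lem:l5}. In fact $(y-1)yW(y)\equiv -1$ with the paper's normalization $W=v'\varphi_1^2$.
\item Your direct $|x\log|x||$ estimate replaces the integration by parts of Lemma~\ref{lem:l6}, at the harmless cost of a factor $|\log\epsilon|$ in the contraction constant.
\item Working in the space of bounded continuous functions makes Lemma~\ref{lem:l7} (analyticity of the iterates) unnecessary, and it also gives uniqueness of the fixed point for each $\gamma$.
\end{itemize}
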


{Lemmas \ref{lem:l2} and \ref{lem:l3} are proved in Appendix \ref{app:c}.  The~proofs of 
Lemmas  \ref{lem:l2} and \ref{lem:l3} rely on technical Lemmas \ref{lem:l4p5}--\ref{lem:l7}, which are stated and proved in Appendix \ref{app:c}. } 

Lemmas \ref{lem:l2} and \ref{lem:l3}
together show that there is a 
1-parameter family of bounded solutions, $\vp(x;\g)$, of~the differential  
\erf{sfe-in-x} in a neighborhood of the singularity $x=0$.  Equivalently, the~lemmas also show that there is a 1-parameter family of bounded solutions $\vp(z;\g)$ of the differential
equation \erf{sfe-in-z} in a neighborhood of the singularity 
$z =1$.

%%%%%%%%%%%%%%%%%%%%%%%%%%%%%%
\subsection{{Bounded Solutions Outside the~Horizon}}\label{sec:asympt}
As shown in  {Figure} \ref{theasymptopia}, 
bounded solutions approach the  vacuum that is closest to $\vp(1)$ as $z \ra \infty$ regardless
of the value of $\a$.   In~Lemma \ref{lem:l1}{ below,} we prove that  bounded 
solutions, for~all $\a>0$, 
behave as shown in  {Figure} \ref{theasymptopia}.

\begin{lemma}\label{lem:l1}
Let $\varphi(z)$ be a solution of
\begin{equation}
(z^2-1)\varphi_{zz} + 2 z \varphi_z + \alpha \sin(\varphi) = 0, \qquad \alpha > 0
\label{diffeqnz1}
\end{equation}
satisfying the boundary conditions
\bea
0 \slt \vp(1) \slt \pi, \qquad \vp_z(1) \deq - \dsf \a 2 \sin(\vp(1)).
\label{bcon}
\eea 
Then, for all $z \geq 1$,
we have $-\pi <\varphi(z) <  \pi$ and
\begin{eqnarray*}
\lim_{z\rightarrow \infty} \varphi(z) = 0
\end{eqnarray*}
\end{lemma}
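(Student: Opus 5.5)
The plan is to find a Lyapunov function. First I rewrite \erf{diffeqnz1} in divergence form, $((z\sq-1)\vp_z)_z + \a\sin\vp = 0$, and for $z\geq 1$ define
\beas
F(z) \aea \dsf{(z\sq-1)\,\vp_z\sq}{2\a} + 1 - \cos\vp(z).
\eeas
A direct differentiation, using the equation to eliminate $\vp_{zz}$, should give
\beas
F'(z) \aea -\dsf{z\,\vp_z\sq}{\a} \ \leq\ 0 .
\eeas
This $F$ comes from the substitution $ds = dz/(z\sq-1)$. That substitution turns the equation into a pendulum $\vp_{ss} + k(s)\sin\vp = 0$ with increasing stiffness $k=\a(z\sq-1)$, and $F$ is that pendulum's energy divided by $k$.

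At $z=1$, the boundary condition \ct{bcon} gives a finite $\vp_z(1)$. Hence the kinetic term vanishes there, and $F(1) = 1-\cos\vp(1) < 2$ because $0<\vp(1)<\pi$. Monotonicity then yields $1-\cos\vp(z) \leq F(z)\leq F(1) <2$ for all $z\geq1$. Therefore $\cos\vp(z) \geq \cos\vp(1) > -1$. Since $\vp$ is continuous and starts in $(-\pi,\pi)$, it can never reach $\pm\pi$. In fact $|\vp(z)|\leq \vp(1)<\pi$, a uniform gap that I will need later. This settles the first claim. The only care needed is justifying the behaviour at the endpoint $z=1$, where the equation is singular. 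For this I would use the boundedness of $\vp$ and $\vp_z$ near $z=1$ supplied by Corollary~\ref{cor:c2} and Lemmas~\ref{lem:l2}--\ref{lem:l3}, so that $(z\sq-1)\vp_z\sq\ra0$.

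For the limit, integrating $F'$ gives $\int_1^\infty z\,\vp_z\sq\,dz \leq \a F(1) <\infty$. Next I pass to $t=\log z$. Since $z\sq\vp_{zz}+2z\vp_z = \vp_{tt}+\vp_t$, the equation becomes
\beas
\vp_{tt}+\vp_t + \a\sin\vp \aea e^{-2t}\lrr{\vp_{tt}+\vp_t},
\eeas
which is an asymptotically autonomous damped pendulum. Moreover $\int \vp_t\sq\,dt = \int z\vp_z\sq\,dz<\infty$. From the bound on $F$ I get $|\vp_t| = z|\vp_z| \leq C$ for large $z$. Solving the $t$-equation for $\vp_{tt}$ (the coefficient $1-e^{-2t}$ is near $1$) then shows that $\vp_{tt}$ is bounded too. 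A uniformly continuous function $\vp_t\sq$ with finite integral must tend to $0$, so $\vp_t\ra0$.

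To conclude, I would show $\vp_{tt}\ra0$. The cleanest route is a Barbalat-type argument: differentiate the $t$-equation once to bound $\vp_{ttt}$, then apply the same "integrable and uniformly continuous" reasoning to $\vp_t$. Alternatively, $\vp_{tt}$ tends to $0$ in mean over unit intervals, since $\vp_t\ra0$. The equation then forces $\sin\vp(t)\ra0$. Combined with the uniform gap $|\vp|\leq\vp(1)<\pi$, the only admissible value is $0$, so $\lt z\infty\vp(z)=0$.

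I expect the main obstacle to be this last step: controlling $\vp_{tt}$ (equivalently the $e^{-2t}$ perturbation terms) rigorously enough to pass from $\vp_t\ra0$ to $\sin\vp\ra0$. I would try the differentiated-equation/Barbalat route first. The singular endpoint $z=1$ is a secondary technical point.
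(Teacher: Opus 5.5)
Your proposal is correct. The first half is the paper's argument, and the second half takes a genuinely different and more rigorous route.

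\textbf{First half (same as the paper).} Your $F$ is exactly the paper's $B=(z^2-1)V$ divided by $\alpha$. You use the same identity $dB/dz=-z\varphi_z^2$ and the same comparison $1-\cos\varphi(z)\le F(1)<2$ to keep $\varphi$ away from $\pm\pi$. By continuity you also extract the sharper, uniform bound $|\varphi(z)|\le\varphi(1)$, which the paper never states.

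\textbf{Second half (where you differ).} The paper does not prove the limit so much as motivate it. It replaces the equation for $z\gg 1$ by $z^2\varphi_{zz}+2z\varphi_z+\alpha\sin\varphi=0$ and passes to $t=\ln z$, obtaining the autonomous damped pendulum $X_{tt}+X_t+\alpha\sin X=0$. It then appeals to the physical picture that friction brings the particle to rest at $0$. It never justifies discarding the $-\varphi_{zz}$ term, and it never explicitly excludes convergence to the saddle points $\pm\pi$.

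You instead keep the exact equation as an asymptotically autonomous system and argue as follows:
\begin{itemize}
\item The dissipation identity gives $\int\varphi_t^2\,dt=\int z\varphi_z^2\,dz<\infty$.
\item The bound on $F$ and the equation give bounds on $\varphi_t$ and $\varphi_{tt}$, and on $\varphi_{ttt}$ after differentiating.
\item A Barbalat-type argument then yields $\varphi_t\to 0$ and $\varphi_{tt}\to 0$, hence $\sin\varphi\to 0$.
\item Your uniform gap $|\varphi|\le\varphi(1)<\pi$ then forces the limit to be $0$.
\end{itemize}
The paper's route is shorter and intuitive. Yours costs a few extra estimates but turns the limit statement into an actual proof. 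The unit-interval averaging variant you mention works equally well and avoids differentiating the equation.

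\textbf{Two small points.}
\begin{itemize}
\item Sign slip: since $z^2\varphi_{zz}=\varphi_{tt}-\varphi_t$, the transformed equation is $\varphi_{tt}+\varphi_t+\alpha\sin\varphi=e^{-2t}(\varphi_{tt}-\varphi_t)$, not $e^{-2t}(\varphi_{tt}+\varphi_t)$. This does not affect any of your bounds.
\item Endpoint $z=1$: you do not need the earlier lemmas there. From $((z^2-1)\varphi_z)_z=-\alpha\sin\varphi$ and continuity of $\varphi$ at $1$, it follows that $(z^2-1)\varphi_z\to 0$ and $|\varphi_z|\le \alpha/(z+1)$ near $z=1$, which is all you use.
\end{itemize}
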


Lemma \ref{lem:l1} is proved in Appendix \ref{app:a}.
Note that, if $\vp$ is a solution of 
Eq.~(\ref{diffeqnz1}), then so are $-\vp$
and $\vp + 2n\pi$ for all $n\in \mathbb Z$. {Therefore,} 
we obtain the following~corollary.

\begin{corollary}\label{cor:c1}
Let $\varphi(z)$ be a solution of
\begin{equation}
(z^2-1)\varphi_{zz} + 2 z \varphi_z + \alpha \sin(\varphi) = 0, \qquad \alpha > 0
\label{diffeqnz3}
\end{equation}
satisfying the following boundary conditions.  For~some $n\in \mathbb Z$,
\beas
 (2n-1) \pi \slt \vp(1) \slt (2n+1)\pi,   \qquad 
 \vp_z(1) \deq - \dsf\a 2 \sin(\vp(1)).
\eeas
Then, for all $z \geq 1$, we have
$
(2n-1)\pi <\varphi(z) < (2n+1)\pi$ and
\begin{eqnarray*}
\lim_{z\rightarrow \infty} \varphi(z) = 2n\pi
\end{eqnarray*}
\end{corollary}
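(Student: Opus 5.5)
The plan is to reduce Corollary~\ref{cor:c1} to Lemma~\ref{cor:c1}'s parent result, Lemma~\ref{lem:l1}, using the symmetries of \erf{diffeqnz3} noted just before the corollary: translation $\vp\mapsto\vp-2n\pi$ and reflection $\vp\mapsto-\vp$.

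Fix $n\in\mathbb Z$ and set $\psi(z):=\vp(z)-2n\pi$. Since $\sin(\psi)=\sin(\vp)$ and the derivatives are unchanged, $\psi$ satisfies \erf{diffeqnz3}. Moreover $-\pi<\psi(1)<\pi$ and
\[
\psi_z(1)=\vp_z(1)=-\dsf\a2\sin(\vp(1))=-\dsf\a2\sin(\psi(1)).
\]
It therefore suffices to show that $-\pi<\psi(z)<\pi$ for $z\ge1$ and that $\psi(z)\ra0$; adding $2n\pi$ back then gives the claim.

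There are three cases according to the sign of $\psi(1)$.

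\emph{Case $0<\psi(1)<\pi$.} The boundary conditions \erf{bcon} hold for $\psi$, and Lemma~\ref{lem:l1} applies directly.

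\emph{Case $-\pi<\psi(1)<0$.} Put $\chi:=-\psi$. Because $\sin$ is odd, $\chi$ again solves \erf{diffeqnz3}, and $0<\chi(1)<\pi$. Also
\[
\chi_z(1)=-\psi_z(1)=\dsf\a2\sin(\psi(1))=-\dsf\a2\sin(\chi(1)),
\]
so Lemma~\ref{lem:l1} gives $-\pi<\chi<\pi$ on $[1,\infty)$ and $\chi\ra0$. Both properties are symmetric under negation, so they transfer to $\psi$.

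\emph{Case $\psi(1)=0$.} This case falls outside Lemma~\ref{lem:l1}. Here the boundary condition forces $\psi_z(1)=0$, and I would argue that $\psi\equiv0$. One route is Lemma~\ref{lem:le10}, whose statement is that $\vp(1)=n\pi$ implies $\vp\equiv n\pi$; it is stated only in Appendix~\ref{app:d} and is not among the results available before this corollary. The other route is uniqueness of the bounded solution of the singular initial value problem at $z=1$ with prescribed value $\psi(1)$. That uniqueness is the content of the integral-equation construction (Lemmas~\ref{lem:l2} and~\ref{lem:l3}) only if that construction gives uniqueness and not just existence. Since the zero function is a bounded solution with these data, uniqueness would give $\psi\equiv0$, and the conclusion holds trivially with $2n\pi$ in the open interval.

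The translation and reflection steps are routine. The one point needing care is this degenerate case, because the corollary's hypothesis (an open interval $((2n-1)\pi,(2n+1)\pi)$ containing $2n\pi$) is strictly wider than the hypothesis $0<\vp(1)<\pi$ of Lemma~\ref{lem:l1} and its reflection. That is where I expect the only real obstacle: the argument needs uniqueness of the bounded solution at the singular point, and in the paper's ordering that uniqueness is either deferred to Appendix~\ref{app:d} (via Lemma~\ref{lem:le10}) or must be extracted from the contraction argument behind Lemma~\ref{lem:l3}.
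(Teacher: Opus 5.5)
Your proposal is correct and matches the paper's argument: the paper derives Corollary~\ref{cor:c1} in one line from Lemma~\ref{lem:l1} using the symmetries $\varphi\mapsto-\varphi$ and $\varphi\mapsto\varphi+2n\pi$, which are exactly your translation and reflection steps. You are right that the paper's one-liner silently skips the case $\varphi(1)=2n\pi$, but that case needs no uniqueness theory at the singular point. The quantity $B(z)=\frac{z^2-1}{2}\psi_z^2+\alpha(1-\cos\psi)$ from the proof of Lemma~\ref{lem:l1} is nonnegative and nonincreasing on $[1,\infty)$ with $B(1)=0$, so $B\equiv0$ forces $\cos\psi\equiv1$ and hence $\psi\equiv0$. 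Alternatively, citing Lemma~\ref{lem:le10} works and is not circular, since its proof does not use this corollary.
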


%%%%%%%%%%%%%%%%%%%%%%%%%%%%%%
\subsection{Bounded Solutions Inside the~Horizon}\label{sec:is}

 \erf{sfe-in-x} has two length scales---the  
length scale {$m\inv$}   specified by
 $m$, the~`mass' parameter of 
 the field, and~the  length
scale {$H\inv$} of the background spacetime specified by the Hubble
parameter $H$.
We say a soliton is `large' 
{if
$m\inv > H\inv/\sqrt 2$ ({that is,} if $ \a:=(m/H)\sq < 2$) and 
`small' if $\a > 2$. }

Inside the horizon, the~behavior of bounded solutions depends on
whether  $\a<2$
or $\a > 2$.   We consider the two cases 
separately{, below.}
% in the next two~subsections.

\subsubsection{Large Static Solitons Are~Forbidden}\label{sec:lss}

For $\a<2$,  %bounded 
solutions {that are bounded near $z=1$} with $0<\vp(1)<\pi$
cannot attain $\vp(0) = \pi$, as~shown by the examples in  {Figure} \ref{thealphalw},
and blow up at $z=-1$, as~shown by the example in {Figure} \ref{theblowUp}.   
The following theorem shows that topologically nontrivial (soliton) bounded solutions
do not exist for $\a<2$.

\begin{theorem}\label{thm:t1}
For $\a < 2$, every bounded solution of 
\bea
(z\sq-1) \vp\uu{zz} + 2z \vp\uu{z} + \a \sin(\vp) \aea 0
\label{ynk}
\eea
is topologically trivial $($i.e.,~$\lt z \infty \vp(z) \deq 
\lt z {-\infty} \vp(z) )$.
\end{theorem}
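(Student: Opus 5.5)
The plan is to reduce topological nontriviality to the existence of a zero of a shifted field inside the horizon, and then rule out such a zero for $\a<2$ by a Rayleigh-quotient comparison with the Legendre operator, whose first nonconstant eigenfunction $P_1(z)=z$ has eigenvalue exactly $2$.

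\textbf{Reduction to a zero inside the horizon.} Let $\vp$ be a bounded solution of \erf{ynk} on $\mathbb R$.

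By Corollary \ref{cor:c2}, applied at $z=1$ and, through $z\ra -z$, at $z=-1$, the derivative $\vp'$ has finite limits at $z=\pm1$. If $\vp(1)$ or $\vp(-1)$ is an integer multiple of $\pi$, Lemma \ref{lem:le10} makes $\vp$ constant, hence trivial. Otherwise, Corollary \ref{cor:c1}, together with its mirror image for $z\leq -1$, shows the following. The limit $\lt z\infty\vp$ is the even multiple of $\pi$ nearest $\vp(1)$, and $\lt z{-\infty}\vp$ is the one nearest $\vp(-1)$.

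So if $\vp$ is nontrivial, $\vp(1)$ and $\vp(-1)$ lie in different intervals $((2n-1)\pi,(2n+1)\pi)$. By continuity on $[-1,1]$, there is then some odd multiple $(2k+1)\pi$ and a point $z_0\in(-1,1)$ with $\vp(z_0)=(2k+1)\pi$.

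Set $v=\vp-(2k+1)\pi$. Since $\sin\vp=-\sin v$, we get
\beas
\lrr{(1-z\sq)v'}' + Q(z)\,v \aea 0,\qquad Q\dfn \a\,\dsf{\sin v}{v}\ \leq\ \a\ <\ 2 .
\eeas
Here $v$ is bounded with bounded derivative up to $z=\pm1$, and $v(z_0)=0$ with $v\not\equiv 0$.

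\textbf{Energy identity and Legendre comparison.} By the symmetry $z\ra-z$, assume $z_0\leq 0$, and work on $I=[z_0,1]$. Multiply the equation by $v$ and integrate over $I$. The boundary term $(1-z\sq)vv'$ vanishes at $z_0$ because $v(z_0)=0$, and at $z=1$ because $v,v'$ stay bounded while $1-z\sq\ra0$. This gives
\beas
\int_{z_0}^1 (1-z\sq)\,v'^2\,dz \aea \int_{z_0}^1 Q\,v^2\,dz \ <\ 2\int_{z_0}^1 v^2\,dz ,
\eeas
where the strict inequality uses $v\not\equiv0$.

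The contradiction will come from the Poincaré-type inequality
\beas
\int_{z_0}^1 (1-z\sq)\,w'^2 \ \geq\ 2\int_{z_0}^1 w^2 ,\qquad w(z_0)=0,\ z_0\leq 0,
\eeas
for bounded $w$ with finite weighted energy. On $[0,1]$ this holds because the lowest eigenvalue of $-((1-z\sq)w')'$ with Dirichlet data at $0$ and regularity at $1$ is attained by $P_1=z$, with eigenvalue $2$. One can verify this directly with the Picone identity. Writing $w=z\,g$ and integrating by parts gives
\beas
\int_0^1(1-z\sq)w'^2-2\int_0^1 w^2=\int_0^1(1-z\sq)z\sq g'^2\geq0 .
\eeas
For $z_0<0$ the same Picone computation is not directly available, since $z$ changes sign on $[z_0,1]$. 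Instead I would use domain monotonicity. Either restrict to the subinterval between consecutive zeros of $v$ that reaches $z=1$ (on which $v$ satisfies the same identity), or compare with the Dirichlet problem at the largest zero. The aim is to reduce to the case where the left endpoint is $\geq 0$, or to one where $I\supset[0,1]$; a longer interval only lowers the relevant eigenvalue, so it must be handled accordingly.

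\textbf{Expected obstacle.} The main obstacle is getting the zero-location bookkeeping right. Choose $z_*$ to be the largest zero of $v$ in $(-1,1)$ and $z^*$ the smallest. If $z_*\geq0$, the argument above applies on $[z_*,1]$, where the domain is shorter and the eigenvalue is at least $2$. If $z_*<0$, then $z^*\leq z_*<0$, and the mirrored argument on $[-1,z^*]$ applies after reflecting $z\ra-z$, which preserves both the equation and the bound on $Q$.

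In either case the strict inequality $\int(1-z\sq)v'^2<2\int v^2$ contradicts the Poincaré-type inequality. Hence $v$ has no zero in $(-1,1)$, and $\vp$ is topologically trivial.

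The technical points to check carefully are two. First, the boundary terms at the singular endpoints must vanish, which follows from boundedness of $v'$ via Corollary \ref{cor:c2}. Second, the Picone integration by parts near $z=0$ and $z=1$ must be justified for $g=w/z$.
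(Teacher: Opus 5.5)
Your proof is correct, and its core step takes a genuinely different route from the paper.

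The reduction is essentially the gluing argument of Lemma~\ref{lem:le11}: bounded solutions have finite $\varphi'(\pm1)$ by Corollary~\ref{cor:c2}, integer multiples of $\pi$ at $z=\pm1$ are excluded by Lemma~\ref{lem:le10}, and Corollary~\ref{cor:c1} then forces a nontrivial solution to cross an odd multiple of $\pi$ inside the horizon.

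After that the routes diverge. The paper works in the hypergeometric coordinate and shows by sign-tracking that a solution with $0<\varphi(1)<\pi$ stays in $(0,\pi)$ on $0\le z\le 1$. If $\varphi''\le 0$ throughout, it uses $\varphi\le\gamma+\frac{\alpha}{2}\sin\gamma<\pi$. Otherwise it follows the first inflection point, using the third-derivative identity and $2+\alpha\cos\varphi>0$ to keep $\varphi'$ and $\varphi''$ positive up to the first hitting of $\pi$, where the equation gives a contradiction. You instead combine the energy identity for $v=\varphi-(2k+1)\pi$ with the sharp weighted Poincar\'e inequality on a half-interval, whose extremal is the Legendre mode $P_1=z$. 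The odd shift is what makes $Q=\alpha\sin v/v\le\alpha$.

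Your route buys three things:
\begin{itemize}
\item It explains why the threshold is exactly $2=\ell(\ell+1)$ at $\ell=1$.
\item It replaces the derivative bookkeeping by a single integration by parts, which needs only $(1-z^2)vv'\to0$ at the horizon.
\item It also settles $\alpha=2$, since $\sin v/v<1$ for $v\neq0$ makes $\int Qv^2<2\int v^2$ strict anyway.
\end{itemize}
The paper's route gives the sharper pointwise confinement $0<\varphi<\pi$ on $[0,1]$; it also rules out hitting $0$, via Lemma~\ref{lem:le12}. Yours only rules out odd multiples of $\pi$, but that is all the theorem needs.

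One simplification: your initial normalization points the wrong way. Use $z\to-z$ to arrange $z_0\ge 0$ rather than $z_0\le 0$. Then $[z_0,1]\subset[0,1]$ and any zero will do, so you need no largest or smallest zero. The Picone boundary term $\frac{1-z^2}{z}w^2$ vanishes at both ends. Your final case split on $z_*$ amounts to exactly this reflection.
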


The proof of the theorem is lengthy and is presented in
Appendix \ref{app:d}. The~proof relies on technical Lemmas
\ref{lem:le9}--\ref{lem:le11}, which are stated and proved in 
Appendix \ref{app:d}.  We describe the outline of the proof 
of Theorem \ref{thm:t1}, 
below. 

As {discussed} %mentioned 
in Section~\ref{sec:lag}, {without loss of 
generality,} we can assume that a solitonic field 
configuration  with charge $Q$ satisfies the following
conditions:
\beas
\lt z {-\infty} \vp(z) \deq -2\pi Q, \qquad 
\vp(0) \deq -\pi Q,\qquad 
\lt z {\infty} \vp(z) \deq 0
\eeas
Further, given
the symmetries of the Lagrangian, we can restrict 
attention to $z\geq 0$. As~Corollary \ref{cor:c1} shows, if~
we insist that $\lt z \infty \vp(z) = 0$, then we must have
$-\pi < \vp(1) < \pi$. 
 
{We} prove Theorem
\ref{thm:t1} 
{by showing}
that, for~ $\a<2$ {and}  $-\pi < \vp(1)<\pi$, 
every bounded solution satisfies 
$-\pi < \vp(0) < \pi$.  But~a topologically nontrivial soliton,
satisfying the above condition,
{can have} integer nonzero charge $Q$ 
only if $\vp(0) \deq {\pm}\pi Q$.
Thus, when $\a < 2$, topologically nontrivial static solitons
cannot~exist.

%%%%%%%%%%%%%%%%%%%%%%%%%%%%%%
\subsubsection{Existence of Small Static~Solitons}\label{sec:sss}
%\textls[-15]
{The behavior of bounded solutions inside the 
horizon  is  illustrated for~$\a =5$ in  {Figure} \ref{theshoot}.   The~
behavior of bounded solutions for all other $\a>2$ is qualitatively similar. }

{In Lemma \ref{lem:l9} (Appendix \ref{app:e}) we   
confirm that $\vp(0)$ is a continuous function of $\vp(1)$.  Using
the continuity property, in~
Lemmas \ref{lem:l10} and \ref{lem:l11}, we establish that, for~
$\a>2$, one can 
find a constant $0<\g\str<\pi$ such that, if $\vp(1) = \g\str$, then 
$\vp(0) = \pi$.    \mbox{Lemmas  \ref{lem:l10} and \ref{lem:l11},} together 
with Corollary \ref{cor:c1}, prove the following theorem.
}

\begin{theorem}\label{thm:t2}
For every $\a > 2$, the~equation 
\beas
(z\sq -1)\vp\uu{zz} + 2z \vp\uu z + \a \sin(\vp) \aea 0
\eeas
has a {bounded} topologically nontrivial soliton solution.
\end{theorem}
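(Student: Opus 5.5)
Theorem \ref{thm:t2} will follow from a shooting argument in the parameter $\g := \vp(1) \in (0,\pi)$, built on the $1$-parameter family of bounded solutions near $z=1$ (Lemmas \ref{lem:l2} and \ref{lem:l3}). For each $\g\in(0,\pi)$ let $\vp(z;\g)$ be the bounded solution with $\vp(1;\g)=\g$ and $\vp'(1;\g) = -(\a/2)\sin\g$, as forced by Corollary \ref{cor:c2}. For $z\ge 1$, Lemma \ref{lem:l1} gives $\vp(z;\g)\to 0$ as $z\to\infty$ with no further condition. The whole task is therefore to integrate inward from $z=1$ to $z=0$ and find $\g^*$ with $\vp(0;\g^*)=\pi$. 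The soliton is then completed by the symmetry $\vp(z)\mapsto 2\pi-\vp(-z)$, which keeps $\vp(0)=\pi$. Matching derivatives at $z=0$ is automatic, because $\vp'(0)$ is the same for both pieces. The resulting solution is bounded, since $z=-1$ is regular for the reflected piece, and it has charge $\pm1$ by \erf{tc}.

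\textbf{Step 1 (continuity).} Using Lemma \ref{lem:l9}, the map $F(\g):=\vp(0;\g)$ is continuous on $(0,\pi)$. This comes from continuous dependence on $\g$ of the fixed point of the integral equation \erf{ie} near $x=0$, combined with standard ODE continuity on $[0,1-\e]$, where the equation is regular.

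\textbf{Step 2 (small $\g$, Lemma \ref{lem:l10}).} Show that $F(\g)<\pi$ for some $\g$. As $\g\to0^+$, the solution is uniformly close to $\vp\equiv0$ on $[0,1]$, so $F(\g)\to 0$. Alternatively, one can linearize: $\vp\approx \g\,\vp\so(x)$, whose magnitude stays $O(\g)$ on $x\in[0,1/2]$. Either way $F(\g)<\pi$ for small $\g$.

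\textbf{Step 3 (crossing, Lemma \ref{lem:l11}).} Show that $F(\g)>\pi$ for some $\g$; this is the main obstacle, and it is exactly where $\a>2$ must enter. I would first try $\g\to\pi^-$. Write $\vp=\pi-\eta$; then $\eta$ satisfies, to leading order, the linear equation $(z^2-1)\eta''+2z\eta'-\a\eta=0$. In hypergeometric form this is \erf{hge} with $a+b=c=1$ and $ab=-\a$, and its solution bounded at $z=1$ is $\eta(1)\,{}_2F_1(a,b;1;x)$ with $x=(1-z)/2$. The claim needed is that $\eta$ changes sign on $z\in[0,1)$, equivalently that $\vp$ crosses $\pi$ before reaching $z=0$, precisely when $\a>2$. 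I expect the threshold to come from comparing this function with $\vp\so$ and checking its value at $x=1/2$. Since the analogous parity solutions are Legendre functions $P_\nu(z)$ with $\nu(\nu+1)=-\a$, I would use the known value of $P_\nu(0)$, or equivalently Gauss's formula for ${}_2F_1$ at argument $1/2$, and check whether it becomes negative for $\a>2$. If the sign change happens strictly inside $(0,1)$, a Sturm comparison between the nonlinear and linearized equations, with the error made small by taking $\g$ close to $\pi$, transfers the crossing to the nonlinear solution and gives $F(\g)>\pi$.

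If that linearization does not give a clean threshold at $2$, the fallback is an energy or Sturm comparison using $\sin\vp\ge(2/\pi)\min(\vp,\pi-\vp)$. With this bound one would argue that, for $\a>2$, the restoring term is strong enough to drive $\vp$ past $\pi$ on $[0,1]$ for some $\g$. The intermediate value theorem applied to the continuous $F$ from Step 1 then yields $\g^*\in(0,\pi)$ with $F(\g^*)=\pi$. Together with Corollary \ref{cor:c1}, which fixes the behavior at infinity, this gives the soliton.
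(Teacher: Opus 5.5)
Your overall framework is sound and parallels the paper's: shoot in $\gamma=\varphi(1)$, get continuity from Lemma~\ref{lem:l9}, show an undershoot for small $\gamma$, and reflect via $\varphi(z)\mapsto 2\pi-\varphi(-z)$. But Step~3, the only place where $\alpha>2$ enters, does not go through as written. There are four problems.

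\textbf{The Legendre index has the wrong sign.} To first order, $\eta=\pi-\varphi$ satisfies $((1-z^2)\eta')'+\alpha\eta=0$. Since $P_\nu={}_2F_1(-\nu,\nu+1;1;x)$ has $ab=-\nu(\nu+1)$, the condition $ab=-\alpha$ gives $\nu(\nu+1)=+\alpha$. With your $\nu(\nu+1)=-\alpha$ you are looking at the linearization about $0$. Its bounded solution, a conical function, is positive on $(-1,1]$, so your check would report no crossing at all.

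\textbf{The sign of $P_\nu(0)$ is the wrong test.} Even with the correct index, $P_\nu(0)=\sqrt{\pi}/\bigl(\Gamma(\tfrac{1-\nu}{2})\Gamma(1+\tfrac{\nu}{2})\bigr)$ is negative for $1<\nu<3$ (i.e.\ $2<\alpha<12$) but positive for $3<\nu<5$ (i.e.\ $12<\alpha<30$). What does hold for every $\alpha>2$ is that $P_\nu$ has a simple zero in $(0,1)$. This follows by comparison with $P_1(z)=z$, the case $\alpha=2$. With $u=z$ and $v=P_\nu$ one has $\bigl[(1-z^2)(u'v-uv')\bigr]'=(\alpha-2)uv$. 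Integrating over $[0,1]$ gives $-v(0)=(\alpha-2)\int_0^1 z\,v\,dz$, which is impossible if $v>0$ on $(0,1]$.

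\textbf{An interior crossing does not give $F(\gamma)>\pi$.} Crossing $\pi$ inside $(0,1)$ does not imply $F(\gamma)=\varphi(0;\gamma)>\pi$. For $12<\alpha<30$ the linearization predicts $F(\gamma)<\pi$ for $\gamma$ near $\pi$, even though $\varphi$ exceeds $\pi$ inside. So the intermediate value theorem applied to $F$ does not close the argument. You need to shoot on $M(\gamma)=\max_{[0,1]}\varphi(\cdot;\gamma)$, as in Lemma~\ref{lem:l11}. At $\gamma^*=\inf\{\gamma: M(\gamma)\ge\pi\}$, the value $\pi$ cannot be attained at an interior point: there $\varphi'=0$, and uniqueness at a regular point would force $\varphi\equiv\pi$. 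Hence it is attained at $z=0$.

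\textbf{The linear-to-nonlinear transfer is not justified.} You need $(\pi-\varphi(\cdot;\gamma))/(\pi-\gamma)\to P_\nu$ uniformly on $[0,1]$, including at the singular point $z=1$. That requires redoing the integral-equation estimates of Appendix~\ref{app:c} for $ab=-\alpha$; Lemma~\ref{lem:l9} alone does not provide it. Your fallback via $\sin\varphi\ge\frac{2}{\pi}\min(\varphi,\pi-\varphi)$ loses a constant and cannot produce the sharp threshold $2$.

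The paper's Lemma~\ref{lem:l10} avoids all of this with an elementary nonlinear argument in the hypergeometric coordinate, where $x=\tfrac12$ corresponds to $z=0$. It chooses $\gamma$ with $2+\alpha\cos\gamma<0$, which is possible precisely because $\alpha>2$. Suppose $\varphi<\pi$ on $[0,\tfrac12]$. Then $\varphi_x>0$ there and $\varphi_{xx}(0)=\tfrac12(2+\alpha\cos\gamma)\alpha\sin\gamma<0$. Moreover $\varphi_{xx}$ can never return to $0$: at such a point $\varphi_{xxx}=(2+\alpha\cos\varphi)\varphi_x/(x(1-x))<0$, because $\varphi$ has only increased from $\gamma$. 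Yet at $x=\tfrac12$ the equation forces $\varphi_{xx}(\tfrac12)=4\alpha\sin\varphi(\tfrac12)>0$. This gives $M(\gamma)\ge\pi$ directly, and Lemma~\ref{lem:l11} then performs the infimum step.

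Repaired as above, your linearization route would also be a valid proof, and it explains why $2$ is sharp: $P_1(z)=z$ vanishes exactly at $z=0$. As proposed, however, the decisive step is missing.
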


%%%%%%%%%%%%%%%%%%%%%%%%%%%%%
\subsection{Back~Reaction}\label{sec:br}

In the above analysis we assumed that the de Sitter background was non-dynamical and we ignored the back reaction of the soliton on the  background.  In~this 
section, we describe two approaches to incorporating the back reaction of 
a soliton 
%and~
{in order to} make the analysis
self-consistent.  The~latter approach, which we 
adopt,  makes the 
analysis self-consistent and also justifies our
assumption that the {background} de Sitter metric is unaffected
by the static soliton for all values of $\a$.

It is well known that, due to its symmetries, the~Riemann tensor has only one independent
component in 1 + 1 dimensional spacetime and~is given by 
\beas
R\dn{\a\beta\m\n} \aea \dsf{R} 2 \lrr{\gd \a\m \gd \beta\n - \gd \a\n \gd \beta\m}
\eeas
%\textls[-15]
{where $R$ is the Ricci scalar and $g\dmn$ the~metric. 
As a result, the~Einstein tensor $G\dmn$ vanishes  {\it  {for every metric}} in 1 + 1 dimensional spacetime} ({$R\dn{\beta\n} \deq 
\f12 R \gd\beta\n \implies G\dn{\beta\n} = 
R\dn{\beta\n} - \f12 R \gd\beta\n \deq 0$}). %MDPI: \hl{Footnote}s are not supported in our journal. We have therefore included this paragraph in the main text. Please confirm. The following highlights are the same. %Author confirmed.  The italics are needed, and hence retained.
Einstein's equation

\beas
G\dmn + \Lambda g\dmn \aea \kappa  T\dmn, \qquad \qquad \kappa \deq \dsf{8\pi G\uu N}{c\tu 4}
\eeas 
where $\Lambda$ is the cosmological constant and $T\dmn$ the~energy momentum tensor of matter, reduces to 
\bea
\Lambda g\dmn \aea \kappa T\dmn. \leqn{ee}
\eea
Since the Einstein tensor vanishes for every 
metric,  one consistent approach  in~1 + 1 spacetime
would be to take the metric of spacetime to be
the solution of Eq. \ct{ee}.   {If}~$T\dmn$ in 
{ Eq.} \ct{ee}  {comes from the sine-Gordon
field,} {$\phi$},  {then }
$T\dmn \deq \pmd \phi \pnd \phi - g\dmn \cll$. 
{Therefore,} 
\beas
g\dmn \deq \brf\kappa\Lambda \lrs{\pmd \phi\pnd \phi - 
g\dmn \cll} \quad \text{ or } g\dmn \deq \dsf{\brf\kappa\Lambda \pmd \phi\pnd \phi}{1 + \brf\kappa\Lambda\cll}
\eeas 
{Using the sine-Gordon lagrangian}  \ct{sga}, {we conclude that
the metric} $g\dmn$ {is the solution to }
%then the metric $g\dmn$ is the solution to 
%the following nonlinear {equation} ({$T\dmn \deq \pmd \phi \pnd \phi - g\dmn \cll$.  
%Therefore, $g\dmn \deq \brf\kappa\Lambda \lrs{\pmd \phi\pnd \phi - 
%g\dmn \cll}$ or $g\dmn \deq \dsf{\brf\kappa\Lambda \pmd \phi\pnd \phi}{1 + \brf\kappa\Lambda\cll}$. Equation \ct{efg} follows
%by using the sine-Gordon Lagrangian in \ct{sga}.})
\bea
%g\dmn \deq \dsf{\brf\kappa\Lambda \pmd \phi\pnd \phi}{
%1 + \brf\kappa\Lambda\lrs{\f12 \gu\a\beta \pad \phi \pbd \phi - \brf{m\sq}{\beta\sq} (1-\cos(\beta\phi))}.} 
g_{\mu\nu} = \frac{(\frac{\kappa}{\Lambda}) \partial_{\mu} \phi \partial_{\nu} \phi}{
1 + (\frac{\kappa}{\Lambda})[\frac{1}{2} g^{\alpha\beta} \partial_{\alpha} \phi \partial_{\beta} \phi - (\frac{m^{2}}{\beta^{2}}) (1-\cos(\beta\phi))]}. \label{efg}
\leqn{efg}
\eea
In the action \ct{sga}, one would use the metric derived as a solution of \ct{efg}{,} instead of the de Sitter
metric{,} and attempt to construct 
a static soliton solution.  {Even if the nonlinear equation}
\ct{efg} {could be solved,}
 {i}t is not clear, {\it a priori}, 
that, in~such an approach  the~solution of \ct{efg} would yield a de Sitter
metric or even an approximation to a de Sitter metric {(if a soliton interpolates between two
vacua,  then $T\dmn \ra 0$ at spatial 
infinity; therefore, $g\dmn \ra 0$ asymptotically; such a  $g\dmn$ would not be a de Sitter metric).}
%,
%even if the formidable problem could be~solved.

{
The alternative approach, which we adopt, is to
embed   sine-Gordon theory {that is} minimally coupled
to gravity, in~the  
so-called Jackiw--Teitelboim  (JT) theory of gravity in 
1~+ 1 spacetime~\cite{jack,teit,turi,mtya}{. The} JT theory 
%contains a dilaton  {field} {$\psi$} ({JT theory has
{has a term}
\beas
S_{_{JT}} = \int d\sq x \, \smg \, \psi\, (R-\Lambda)
\eeas
{that contributes to the bulk  action, where }   
%and also a surface
%term.  
\mbox{$\psi$ is} the dilaton field, $R$, the~Ricci scalar
and $\Lambda$, the~cosmological constant. There is no 
kinetic term for $\psi$ in the JT theory.}  
%in addition to the 
%sine-Gordon field. %~\cite{jack,teit,turi,mtya}.  
The equation of motion of the dilaton {field %,
%which behaves as a Lagrange multiplier, 
yields}
the constraint 
\beas
R \aea \Lambda
\eeas 
\hs{0.2}{Variation of the total action  with respect to the
metric yields a dynamical equation for the 
dilaton field{, $\psi$,}
%{---obtained by varying 
%the metric, in~ both $S_{JT}$ and also in the sine-Gordon
%action, $S_{SG}$, shown in} Eq.~\ct{sga}, {is  
%followed by integration by parts to
%transfer the covariant derivative operators to act on
%the dilaton field $\psi$---}
in~which the energy-momentum
tensor of the sine-Gordon soliton acts as a 
source term; {the dynamical equation for $\psi$ is
obtained by varying 
the metric, in~ both $S_{_{JT}}$ and in the sine-Gordon
action, $S_{_{SG}}$, shown in} Eq.~\ct{sga}, { 
followed by integration by parts to
transfer the covariant derivative operators to act on
the dilaton field $\psi$.}  Thus, a~matter field, such as the 
sine-Gordon field, does not impact the metric
in JT gravity but, instead, impacts the dynamics
of the dilaton field.  The~metric can be stipulated
to be de Sitter and~is not impacted by the 
presence of the sine-Gordon soliton.  There is
no back reaction of the sine-Gordon soliton
on the metric for~any value of $\a$,
and the conclusions of the 
previous sections, including the threshold $\a=2$, remain unchanged. 
}

{Taking the metric to be de Sitter in the JT gravity,  we can compare the energy
in the static soliton {inside the horizon}  with the 
energy resident in the cosmological constant 
inside the horizon. 
\mbox{A straightforward} derivation of the Riemann tensor, starting with the 
de Sitter metric, shows that the Riemann tensor in de Sitter spacetime
is 
\beas
R\dn{\a\beta\m\n} \aea  H  \sq (\gd\a\m \gd\beta\n - \gd\a\n \gd\beta\m).
\eeas
Hence, we conclude that  $\Lambda = 2H\sq$.
}

{Starting with the action \ct{sga},
and using the previous convention
$\vp(z) := \beta\, \psi(z/H)$, where 
$\phi(t,\chi) := \psi\lrr{u}, \ 
u = e\tu{Ht}\, \chi$, a~straightforward 
calculation yields the energy density  
\beas
T\tu{00} 
\aea {H\sq \over \beta\sq} \lrc{
\f12 (z\sq+1) \vp_z\sq + \a (1-\cos(\vp))}  \deq 
{\Lambda \over 2\beta\sq} \lrc{
\f12(z\sq+1) \vp_z\sq + \a (1-\cos(\vp))}
\eeas  
}

{As is well known~\cite{cole}, the~weak coupling 
regime corresponds to $\beta\sq < 8 \pi$. So, we
set $\beta\sq = \beta\sz\sq \dfn 8\pi-\pot{-6}$. 
Focusing on the regime $\a\sim 2$, we take 
$\a = \a\sz := 2.0001$.   Then, the~ratio of
the soliton's energy within the horizon, denoted
$E\uu S$, to the
energy due to the cosmological constant within
the horizon, denoted
$E\uu \Lambda$, in~units where $8\pi G = c = 1$, is obtained {numerically} 
%{{(the integral was evaluated as a 
%Riemann sum with $\pot 7$ mesh points in
%the range $(-1,1)$ via {MATLAB  R2024b).} 
%%MDPI: Please state which version of the software was used.
%%Author: stated
%  The~computation was done using double precision
%with a machine epsilon of $\scn{2.22}{-16}$. 
%\eqr{sfe-in-z} was integrated using
% {\tt {ode45}}, %MDPI: Please confirm if the monospaced text are necessary; if not, please remove them. Please check entire manuscript.
% %Author: need the monospaced text because it is the name of a builtin MATLAB function.
%a~built-in function in MATLAB.
% The computation was started at $z = 1- \potm{12}$ and~integrated backwards to 
% $z = -1 + \potm{12}$.  The~integrity of the 
% computation was verified by confirming 
% that $\vp(1-\potm{12}) = \vp(-1+\potm{12})$
% at double precision}  
% } 
 as 
\bea
\dsf{E\uu S}{E\uu \Lambda} \aea \dsf{\ds\int_{-1}^1
\dsf{\Lambda}{2\beta\sz\sq} \lrc{
\f12 (z\sq+1) \vp_z\sq + \a\sz (1-\cos(\vp))}\, dz}{2\Lambda} \approx 0.0796
\leqn{esbyl}
\eea
{The integral in} \ct{esbyl} {was evaluated as a 
Riemann sum with $\sim \scn 2   9$ mesh points in
the range $(-1,1)$ using  MATLAB  R2024b.  
%MDPI: Please state which version of the software was used.
%Author: stated
The~computation was done using double precision.
%with a machine epsilon of $\scn{2.22}{-16}$.
} 
\eqr{sfe-in-z} {was integrated using}
 {\tt {ode45}}, %MDPI: Please confirm if the monospaced text are necessary; if not, please remove them. Please check entire manuscript.
 %Author: need the monospaced text because it is the name of a builtin MATLAB function.
{a~built-in function in MATLAB R2024b.
 The computation was started at $z = 1- \potm{9}$ and~integrated backwards to 
 $z = -1 + \potm{9}$; the displacement of $\potm 9$
 was determined by memory constraints on the 
 size of the grid.  The~integrity of the 
 computation was verified by confirming 
 that $\vp_z(1-\potm{9}) = \vp_z(-1+\potm{9})$.}
 Within the horizon, the~energy in the soliton 
field   is $\sim$$8$\% of the 
energy  due to the cosmological constant for~the
above choice of parameter values in the weak
\mbox{coupling regime.}

\section{Heuristic Estimates Based on Tensile and Tidal~Forces}\label{sec:ttf}
{In this section, we present heuristic arguments
based on the interplay of tensile force and tidal force.
On the one hand, the~tensile force of a soliton resists
inflationary stretching of the soliton and~acts to preserve
a soliton as a compact field configuration. On~the other hand,  
the tidal force of the inflationary background  tends to stretch 
the soliton.  If~a soliton is to be static, that is, if~ its physical
dimensions are to buck inflationary stretching,
then its internal
tensile force must be strong enough to counteract the 
tidal stretching of the soliton 
by the background.
The arguments we present are crude. They
have only a qualitative heuristic value. }

{ 
In
Section~\ref{sec:s2}, we use a heuristic argument to
 obtain an estimate of the tensile force in a 
Lorentzian sine-Gordon soliton and{, using the estimate,} 
 a threshold for existence of a
 static sine-Gordon soliton in a de Sitter background; the estimate
we obtain with the crude heuristic argument
agrees with  the exact estimate $\a=2$ to~
within an $O(1)$ factor.}

{ In
Section~\ref{sec:s3}, we use a similar heuristic argument to 
obtain an estimate of the tensile force in a Lorentzian
$SO(3)$ `t Hooft--Polyakov monopole and{, using the estimate,} 
a threshold for existence of static $SO(3)$ `t Hooft--Polyakov
monopole in a de Sitter background.  The~existence of
the threshold,
suggested by our heuristic argument,
remains to be confirmed by an analytical argument.
Guided by the heuristic argument,
 we propose a conjecture that there is an $O(1)$ 
threshold
for the existence of a static 
$SO(3)$ `t Hooft--Polyakov monopole in 
non-dynamical de Sitter
background.  
}
%%%%%%%%%%%%%%%%%%%%%%%%%%%%%%%%%%%%%%%%%%%
\subsection{Sine-Gordon~Solitons}\label{sec:s2} 
% Section~\ref{sec:s1} and~Appendices \ref{app:c} to 
%\ref{app:e} present  a rigorous argument to show
% that static solitons
%exist in 1 + 1 non-dynamical de Sitter spacetime if and only
%{if} %MDPI: Please confirm if the bold formatting in variables/equations is necessary; if not, please remove it. Please check entire manuscript and confirm.
% $\scll\uu{dS} > \sqrt 2 \, \scll\uu{SG}$, where $\scll\uu{dS}$ is
%the characteristic length scale of the de Sitter spacetime
%and $\scll\uu{SG}$ is the characteristic length scale of the
%sine-Gordon field.    In~this section, we re-derive the above
%threshold, up~to an O(1) factor, using a crude argument that
%provides  a {heuristic} explanation for the existence of the above~threshold.   
 
In spacetime with  nonzero Riemann tensor 
the separation between two neighboring{,} initially parallel{,} 
 geodesics accelerates
as one moves along the fiducial geodesic.   Since a soliton is an 
extended object two neighboring points in a {soliton %tend to 
move} along  different geodesics{. In de Sitter spacetime
the separation between two neighboring points grows with 
time; in other words, the de Sitter background pulls the 
points apart.} 
%the~separation between {the points}  
%accelerating{; in de Sitter spacetime the neighboring points
%are pulled apart}.  
If the soliton is to  remain a compact static object, the~internal 
tensile forces of the soliton must be strong enough to resist
 the tidal force of the background
spacetime that seeks to tear the soliton~apart.

A rigorous study of the interplay between the internal tensile
force of a soliton and the tidal force of the background requires
analysis of the soliton configuration in the curved background.  
We consider a simpler problem.  We estimate the tensile force
in a sine-Gordon soliton in 1 + 1 Lorentzian spacetime and
compare it with the tidal force operative in de Sitter spacetime.
We show that  even the rather simple analysis is able to reproduce the 
rigorously derived threshold to within an $O(1)$ factor. 

In a Lorentzian background the sine-Gordon Lagrangian is
\bea
\cll \aea \f12\, \pmd\vp \, \pmu\vp - U(\vp), 
\qquad U(\vp) \dfn m\sq  \lrr{1- \cos(\vp)}
\label{sgl}
\eea
{(The Lorentzian sine-Gordon field 
$\vp$ has been
redefined to absorb the coupling constant $\beta$; the resulting overall
 multiplicative factor, $\beta^{-2}$, is irrelevant to our argument; see  } \erf{al}.{)} 
Time-independent solutions of the Euler--Lagrange equation satisfy
\bea
\vp\uxx(x) \aea U'(\vp)   \label{ssg}
\eea
As Coleman observed, if~we interpret  $x$ as time and~$\vp$ as the position of a particle, then
\erf{ssg} describes the behavior of a
particle of unit mass moving in potential $-U(\vp)$. A~soliton solution 
that interpolates between two adjacent minima of $U(\vp)$
corresponds to the trajectory of the 
particle moving, from~infinite past to infinite future,
 between two adjacent 
maxima of $-U(\vp)$.  Since the energy of the particle is conserved,
and, assuming the potential energy as~well as the kinetic energy
of the particle at~the local maximum of $-U(\vp)$
is zero, we have
{
\bea
\f12\, {(\vp'(x))\sq} \aea U(\vp) \label{ec}  
\eea
}

\noindent One can verify that the time-independent 
topologically nontrivial soliton solution
of the sine-Gordon theory, {$\ti \vp(x)$,}  given by  
\bea
\ti\vp(x) \aea 4 \tan\inv \lrr{e\tu{mx}} \label{ssol}
\eea
satisfies \erf{ec}.
From \erf{ec} we see that the energy density of the 
 soliton $\ti\vp$  is given by
\beas
\cle(x) \aea \dsf12{(\ti\vp{'}(x))\sq}  + U(\ti\vp(x)) \deq 2\, U(\ti\vp(x))   
\deq  16\,m\sq \bsf{ e^{2mx}}{(1+e^{2mx})\sq}
\eeas 
Thus{,} most of the energy of the sine-Gordon soliton is contained within a distance of   $l \sim \xi m\inv$ from the center, where $\xi \sim O(1)$.
We define a system $S$ to be the soliton configuration within a 
fiducial box $[-l,l]$ and the energy contained in the system $S$ as  
\bea
E(l) \aea  
\deq 2 \int_0^l \cle(x)\, dx
\deq 8 m \bsf{e\tu{2ml}-1}{e\tu{2ml} + 1}; \qquad
E(\infty) - E(l) \deq \dsf{16m}{e\tu{2ml} + 1}\qquad
\label{el}
\eea

Consider stretching the soliton as $\ti \vp\uu\lambda(x) \deq \ti\vp(x/\lambda),\, 
\lambda \geq 1$, where $\ti\vp\uu\lambda(x)$ represents the stretched configuration.
As a result of the stretching, the~fiducial segment of the soliton in the 
interval $[-l,l]$ stretches to span $[-\lambda l, \lambda l]$, and~the energy in the
stretched segment, denoted $E\uu\lambda(\lambda l)$,  is given by
\beas
E\uu\lambda(\lambda l) \aea 2 \int_0^{\lambda l} \lrs{\f12 (\ti\vp\uu\lambda'(x))\sq + U(\ti\vp\uu \lambda(x))} dx
\deq \brf 1 2   \lrs{\lambda + (1/\lambda)} \, E(l)
\eeas
The increase in the internal energy in the 
fiducial system as a result of the stretching is 
\bea
\Delta E\uu\lambda \aea E\uu\lambda (\lambda  l) - E(l) \deq \f12 \lrr{\lambda  + \f1 \lambda  -2} E(l)
\eea
%%%
Clearly $E\uu\lambda(\lambda l)$ is minimized at $\lambda = 1$, and the 
increase in the internal energy in the system $S$ can be 
interpreted as the work done on the system  against its
internal tensile force, which resists stretching.  We define
$F(\lambda)$  as the centripetal tensile force acting at the 
boundary of the stretched fiducial segment that has been
stretched by a factor $\lambda$.  Denoting the change in the
size of the system as $q \dfn 2(\lambda l - l)$, we have
\beas
\Delta E\uu\lambda \aea 2 \int_1^\lambda F(v) \, l\, dv
\eeas
from which we obtain
\bea
F(\lambda) \aea \brf 1 {2l} \dbyd{ }{\lambda}\, \Delta E\uu \lambda  \deq \brf 1 {4l} E(l) \lrr{1 - \dsf 1 {\lambda\sq}}
\label{cf}
\eea
$F(\lambda)$ is the centripetal force that the 
system $S$ exerts at $x=\lambda l$ on its complement, which is the 
segment of the stretched soliton beyond $\lambda l$.  The~mass of the complement, denoted $m\uu c$ {in units in which the velocity of light in vacuum, $c=1$}, is
\beas
m\uu C \dfn \f12 \lrr{E\uu\lambda(\infty) - E\uu l(\lambda l)}\aea
\f12 \lrr{\lambda + \f 1\lambda} (E(\infty)-E(l))
\eeas
%\textls[-20]
{Assuming that the complement of the stretched system
behaves like a rigid body, the~centripetal acceleration
at $x=\lambda l$  that arises when the external 
stretching force is turned off is}  
\bea
a_\lambda \aea 
\dsf{F(\lambda)}{m\uu C} \deq 
\f{e\tu{2ml} -1}{4l}     \bsf{\lambda\sq-1}{\lambda(\lambda\sq + 1)}
\label{al}
\eea
{(The overall factor $\beta^{-2}$ that  arises from the redefinition of 
the field (see the remark after} \erf{sgl}{) appears in both $F(\lambda)$
and $m_{_C}$, and is hence irrelevant to the calculation of $a_{\lambda}$
in} \erf{al}, {as we remarked earlier.
)}

The centrifugal tidal acceleration 
%{(Strictly, the~expression for tidal acceleration holds
%only for small displacement.  As~we show below,
%applying the expression for the displacement 
%$x=\lambda l$ is justifiable.)} 
of the displacement 
$\chi\tu\a$ is given {by} %MDPI: References should be numbered in order of appearance. We noticed that “Ref hart (Bibliography 22)” appears after “Ref cole (Bibliography 20)”; please rearrange all the references so that they appear in numerical order.
 \cite{hart}
\beas
a_t\tu\a \aea\nabla\dn u \nabla\dn u \chi\tu\a\deq  - R\tu\a\dn{\beta\m\n} u\tu\beta u\tu\n \chi\tu\m
\eeas
where $u\tu\beta$ and $u\tu\n$ are the tangents to the two 
geodesics and~$\chi\upm$ the displacement between the
two geodesics.  
{It should be noted that  the~expression for tidal acceleration holds
only for small displacement.  We show below that the 
small-displacement assumption is satisfied in our argument; see the 
discussion preceding and following} Eq. (\ref{ila}). 
%As~we show below,
%applying the expression for the displacement 
%$x=\lambda l$ is justifiable. 
}
We take $u\tu\beta = u\tu\n = (1,0)$.  Therefore,
the components of the Riemann tensor of interest to us are 
$R\tu\a\dn{0\m 0}$.    Calculation of the Riemann tensor 
shows the only nonzero component  of $R\tu\a\dn{0\m 0}$
is $R\uo\dn{010} = - H\sq$.   Taking $\chi\tu\a = (0, \lambda l)$, 
the tidal acceleration is
\bea 
a_t \aea H\sq \lambda l
\label{at}
\eea

Consider stretching the sine-Gordon soliton using external force 
by a factor $\lambda $ in 
Lorentzian spacetime, as~ described above. As~a result
of the stretching, centripetal tensile forces arise in the 
soliton, as~described in \erf{cf}.  At~$t=0$ we turn off
the external stretching force and simultaneously turn on the 
expansion of the background de Sitter spacetime with~Hubble parameter $H$.  The~de Sitter background induces
a centrifugal tidal acceleration in the soliton.  The~question we are 
interested in is: {\it  {For what value of $\lambda$ are the 
centripetal tensile acceleration of the soliton 
and the centrifugal tidal acceleration balanced at
the boundary of the stretched fiducial system?}} %MDPI: Please confirm if the italics are necessary; if not, please remove them. The following highlights are the same.

We take the size of the fiducial system $S$ to be $l= m\inv$,
the characteristic size of the sine-Gordon soliton.  
Equating the tensile and tidal accelerations at $l=m\inv$
using Eq.  \eqref{al} and \eqref{at} and~setting $\a = {(m/H)\sq}$,
we obtain
\beas
\dsf{\lambda \sq -1}{\lambda \sq (\lambda \sq + 1)} \aea  q, \qquad q := \bsf{4}{e\sq -1 }\dsf 1 \a \approx \dsf{0.6261}\a 
\eeas
Setting $w = \lambda \sq$, we obtain the equation
\bea
q w\sq + (q-1) w + 1  \aea 0 \label{eqw}
\eea
We need at least one of the solutions of the above quadratic
equation to be positive, which can happen only if 
$q < 1$, that is, $\a \gtrsim 0.6261$. In~addition, since $w=\lambda \sq$
must be real, we must have $(q-1)\sq - 4q \geq 0$,
or, in~terms of $\a \approx 0.6261/q$, we must have
\bea
(\a-0.1074)(\a-3.6490) \ \geq\   0 \label{reqe}
\eea
The inequality (\ref{reqe}) is satisfied if
$
\a \lesssim 0.1074$  or $\a \gtrsim 3.6490
$. 
Since positivity of $w$ requires $\a \gtrsim 0.6261$, we
conclude that  a {real  $\lambda $ exists} only if 
\bea
\a \gtrsim 3.6490 \label{ce}
\eea 
For $\a \lesssim 3.6490$, 
$\lambda $ would be complex, which means that tensile
and tidal accelerations cannot be at equilibrium at the
boundary of the fiducial system $S$ at any
real $\lambda $; in other words, the~system 
cannot be a  
static configuration.  It is worth comparing
the inequality  (\ref{ce}) with the result of the exact analysis
in Section~\ref{sec:s1}, which showed that static soliton exists
only if $\a > 2$.
 
In using the expression for tidal acceleration, we made
the assumption that the displacement $\lambda  l$ was `small'. That is,
in terms of the characteristic length scale of the background,
we assumed $\lambda  l/ H\inv \lesssim 1$.  For~$l \sim m\inv$,
the assumption we made was  
\bea
\lambda  \lesssim \sqrt \a. \label{ila}
\eea
{As verified below, the  assumption is always  satisfied 
by 
the smaller root of} \erf{eqw}, {denoted} $\lambda\so$.
{   At $\lambda\so$, 
 the inequality}
(\ref{ila}) {can be written as}
\bea 
1-q - 2\,q\, \a \slt \sqrt{(1-q)\sq - 4 q} \label{omi}
\eea 
 Since  {$2\,q\, \a \approx 1.2521 > 1$ and $q > 0$}, 
 the~
left-hand side of  (\ref{omi})  is negative and 
inequality  (\ref{omi})  is always satisfied for $\a >$ {3.6490}.  
%for $\a \gtrsim
%3.6490$  
%by the smaller root 
%derived from \erf{eqw}. 

The above  analysis is crude  
since it estimates the tensile force of the soliton in Lorentzian
and not de Sitter spacetime.  

\subsection{$`$t Hooft--Polyakov~Monopole}\label{sec:s3}
In this section we  
compare {a heuristic} estimate of
 the tensile force within the {Lorentzian} `t Hooft--Polyakov
monopole{, in the Prasad--Sommerfeld limit,} and the tidal force 
{the monopole} %it 
experiences in  
de Sitter spacetime to~obtain a {heuristic} threshold for the 
existence of static `t Hooft--Polyakov monopole
in de Sitter spacetime.  
`t Hooft--Polyakov monopoles are solitons of a  
Yang--Mills gauge theory in which the gauge fields
couple to a triplet of Higgs scalars.  
The Lagrangian is 
\bea
\cll \aea - \f14 G\dmn\tu a (G\umn)\tu a+ \f12 D\dnm \vp\tu a
D\upm \vp\tu a- \f\xi 4 (\vp\tu a \vp\tu a - F\sq)\sq, \qquad a=1,2,3\qquad
\label{yml}
\eea 
where {$F$ is the minimum of the Higgs potential, and~}
\beas
(D\dnm \vp)\ua
\aea 
\pmd \vp\ua +  g\, \lsc abc  A\dnm\tu b \vp\tu c, \quad 
G\dmn\ua \deq  
 \pmd A\dnn\ua  - \pnd A\dnm\ua + g \, \lsc a b c A\dnm\tu b A\dnn \tu c, 
 \quad {a,b,c = 1,2,3}
\eeas
The equations of motion are
\beas
(D\dnm D\upm \vp)\ua \aea - \xi (\vp\sq - F\sq)\, \vp\ua, \qquad
 D\dnm (G\tu{\m\n})\tu a \deq  g \lsc abc   (D\upn \vp)\tu b \vp\tu c
\eeas
If we choose the temporal gauge $A\ua\sz \deq 0$ and consider a time-independent solution, then
the energy is given by 
\beas
E \aea \int \, d\cu x \lrc{\f14 (G\dn{ij})\ua (G\dn{ij})\ua + \f12 (D\dn i\vp)\ua (D\dn i\vp)\ua
+ \dsf\xi 4 (\vp\sq - F\sq)\sq}
\eeas
%{\textls[-15]
 where the sum over repeated indices is  implied; that is,   
\beas
(G\dn{ij})\ua (G\dn{ij})\ua \aea  \sum_{i,j,a = 1}^3 (G\dn{ij}^a)\sq, \quad 
(D\dn i\vp)\ua(D\dn i\vp)\ua \deq \sum_{i,a = 1}^3 (D\dn i \vp\ua)\sq.
\eeas   
{In} the Prasad--Sommerfeld limit~\cite{prso}, in~which the Higgs self-coupling vanishes,  the~equations of motion are 
\bea
(D\si D\ui \vp)\ua \aea 0, \qquad D\si (G\tu{ij})\ua \deq g \lsc abc (D\uj \vp)\tu b \vp\tu c
\label{eomt}
\eea
and the energy is
\bea
E\aea \int \, d\cu x \, \clh, \qquad 
\clh \deq \lrc{\f14 (G\dn{ij})\ua (G\dn{ij})\ua + \f12 (D\dn i\vp)\ua (D\dn i\vp)\ua}
\label{ent}
\eea 
Prasad and Sommerfeld showed that the equations of motion \erf{eomt} are satisfied by 
\bea
\ba{ll}
\vp\ua \deq x\ua K(r),  \qquad K(r) \deq  \brf 1 {gr\sq} \lrs{m\uu V \, r \coth(m\uu V r) - 1} 
\\ 
\ \\
A\ua\si \deq \e\dn{aij} x\uj W(r),  \qquad W(r) \deq \brf 1{gr\sq}\lrs{1 - m\uu V r/ \sinh(m\uu V r)}
\ea, \qquad m\uu V \deq gF; \qquad 
\label{soln}
\eea
Further, the~solution shown in \erf{soln} satisfies the Bogomolny equation
\beas
\f12 \lsc ijk G\dn{ij}\ua \aea - (D\tu k\vp)\ua
\eeas
from which we obtain the equipartition of energy density between
the Higgs and \mbox{gauge fields}
\bea
\f12 (D\dn i \vp)\ua (D\dn i\vp)\ua 
\aea \f14 G\dn{ij}\ua (G\dn{ij})\ua \label{epe}
\eea
Using the notation of Eq.  \eqref{ent} and \eqref{soln}   we see that the energy density  of the soliton is spherically symmetric
\beass
\clh(r) \aea \clh\uu A(r) + \clh\uu\vp(r),  \quad 
\clh\uu \vp(r)  \deq \f12 (D\si\vp)\ua (D\si\vp)\ua, \quad 
\clh\uu A(r) \deq \f14 G\dn{ij}\ua (G\dn{ij})\ua 
\eeass
The characteristic length scale of the soliton is $\rho \sim \mv\inv$.
We define the fiducial system, $S$, as~the Higgs and gauge field
configurations in the static soliton within a sphere of radius
$\rho\sim \mv\inv$ centered at the origin. 
The energy in the system and its complement are  
 \beasm
E(\rho) \aea 4\pi \int_0^\rho \, r\sq\,   \clh (r)\, dr, \quad
E\sps c(\rho) \deq 4\pi \int_\rho^\infty \, r\sq\,   \clh (r)\, dr,
\qquad \label{esc}
\eeasm
Following the argument for the sine-Gordon soliton, we stretch the
soliton with 
a dilatation $\vec y \deq \lambda \vec x, \ {\lambda > 1}$.  The~Higgs and gauge fields in the 
new coordinate system are
\beas
\ti\vp(\vec y ) \aea \vp(\vec y/\lambda), \qquad \ti A\si\ua (\vec y) 
\deq \dsf 1 \lambda A\si\ua (\vec y/\lambda)
\eeas
The energy densities of the Higgs and gauge fields in the 
stretched system are denoted $\ti\clh\uu{\ti\vp}(r)$ and 
$\ti \clh\uu{\ti A}(r)$.  
The energy   in the stretched system is 
\beass
\ti E(\lambda\rho) \aea 4\pi \int_0^{\lambda\rho} r\sq \, 
\lrs{\ti\clh\uu{\ti A}(r)  + \ti \clh\uu{\ti\vp}(r)} \, dr
 \deq 4\pi \lambda\cu \int_0^{\rho}  s \sq \, 
\lrs{\brf 1 \lambda\tu 4 \clh\uu{ A}(s)  +  \brf 1 \lambda  \sq\clh\uu{ \vp}(s)} \, ds
\eeass
Using the equipartition of energies shown in \erf{epe} we obtain
\beas
\ti E(\lambda\rho) \aea \f12 \lrr{\lambda + \f 1 \lambda} E(\rho)
\eeas
The increase in the internal energy of the system, as~a result
of stretching, is the 
work done against the negative pressure $P(\lambda)$ in the 
system and~is
\beas
\De E \aea \ti E(\lambda\rho) - E(\rho) \deq 
\f12 \lrr{\lambda + \f 1 \lambda - 2} E(\rho)\deq 
\int_1^\lambda P(\lambda') \, dV 
\deq \int_1^\lambda f(\lambda') \, \rho\, d(\lambda')
\eeas
where $f(\lambda')$,   the~total centripetal restoring force arising
at the boundary of the system stretched by a factor $\lambda'$,
is given by
\bea
f(\lambda) \aea \dsf 1 \rho \dbyd{\De E}\lambda \deq \f1{2\rho} \lrr{1-\f1{\lambda\sq}} E(\rho)
\label{force}
\eea
We assume that the centripetal force at the boundary of the stretched system
acts on the complement of the stretched system, which we 
assume behaves as a rigid body of mass $\bar m$ given by
\beas
\bar m \aea 4\pi \int_{\lambda\rho}^\infty r\sq \lrs{\ti\clh\uu{\ti \vp}(r)
+\ti \clh\uu{\ti A}(r) } \, dr
\deq \f12\lrr{\lambda + \dsf 1\lambda} E\sps c(\rho)
\eeas
where $E\sps c(\rho)$ is defined in \erf{esc}.  The~{magnitude
of the } centripetal { tensile }
acceleration, as~a function of scale factor $\lambda$, is then
\bea
a\dn \lambda \aea \dsf{\lrr{\lambda\sq- 1}}{\lambda \rho \lrr{\lambda\sq + 1}} \, \dsf{E(\rho)}{E\sps c(\rho)} \label{ate}
\eea

To determine the tidal force in 3 + 1 de Sitter, again we take
the tangents of the fiducial and a nearby spatially separated
geodesic to be $u=(1,0,0,0)$.  The~spatial separation between
the geodesics is taken to be $\chi = (0, \vec\chi)$.
Then the relevant components of the Riemann tensor are $R\tu i\dn{\ \ 0 j 0}$.  Straightforward calculation yields
\beas
R\tu i \dn{\ \ 0 j 0} \aea - \de\tu i\dn j \, H\sq
\eeas
where $H$ is the Hubble constant.  Then the {magnitude 
of the centrifugal} tidal acceleration
at the boundary of the stretched system, that is, at~a distance 
$\lambda \rho = \lambda /\mv$ from the origin, is
\bea
a\dn t \aea  {H\sq\, \lambda \, \rho} \deq  H\sq \brf\lambda \mv \label{ati}
\eea
The tensile and tidal accelerations are balanced at the 
boundary of the stretched system if, at $r = \lambda /\mv$,
we have, using Eq.  \eqref{ate} and \eqref{ati},
{
\bea
\dsf{(\lambda \sq -1)}{\lambda \sq\, (\lambda \sq+1)}   \aea \dsf 1 {\kappa \a\sps{tP}} 
\leqn{eql}
\eea
}

\noindent where ${\a\sps{tP}} \dfn  (\mv /H)\sq$,    
and $\kappa \dfn E(\mv\inv)/E\sps c(\mv\inv)$.
Setting $w \dfn \lambda \sq$,
the above equation becomes
\bea
w\sq + (1-\kappa{\a\sps{tP}}) w + \kappa{\a\sps{tP}} \aea 0 \leqn{eqw}
\eea
Using
Eq.  \eqref{soln} and \eqref{epe} it can be verified that 
\bea
\clh(r)
\aea (r\, K'(r) + K(r))\sq + 2\, K(r)\sq\, (1 - g\, r\sq\, W(r))\sq \label{ede}
\eea
{Since $\clh(r) \geq 0$, from~(\ref{esc}) and the definition of $\kappa$, 
it follows that $\kappa \geq 0$.  Using (\ref{ede}) and (\ref{esc}), $\kappa$ can be obtained by numerical
integration %and is 
{to be} $\kappa \approx 1.4820$. }
{Recalling that ${\a\sps{tP}} \geq 0$, consistency of
\ct{eql} requires that $\lambda \sq \geq 1$. From~\ct{eql}, it then follows that 
$\kappa{\a\sps{tP}} \geq 1$ or 
}
\bea
{\a\sps{tP}} \geq \dsf 1 \kappa \ \approx \ 0.6748 \label{c1}
\eea
Further, for~$w=\lambda \sq$ to be real, we must have
\beas
(1-\kappa{\a\sps{tP}})\sq - 4\kappa{\a\sps{tP}} \deq \lrr{\kappa{\a\sps{tP}} - (3+2\sqrt 2)}\lrr{\kappa{\a\sps{tP}} - (3-2\sqrt 2)} \geq 0
\eeas
or 
\bea
{\a\sps{tP}} \ \leq \ 0.1158, \qquad \mbox{or} \qquad {\a\sps{tP}}\geq 3.9328
\label{c2}
\eea
From (\ref{c1}) and (\ref{c2}) we conclude that, if 
${\a\sps{tP}} = (\mv/H)\sq < 3.9328$, then there is no (real) scale factor
at which the tidal and tensile forces can reach equilibrium
at $r\deq \lambda /\mv$.   

{Figure} \ref{thetidal} shows a plot of the ratio $a\uu\lambda /a\uu t$ as a function of {$\lambda$,  for } %${\a\sps{tP}}$ for~
${\a\sps{tP}} = 1, \ldots, 6$.  As~the plots show,
when {${\a\sps{tP}} \lesssim 3.9328$}, the~tidal force is stronger the tensile force at 
$r = \lambda /\mv$ for  every value of scale factor $\lambda $, and~the strong
tidal forces do not permit the existence of static {`t Hooft--Polyakov monopoles}. 

{
The existence of the 
threshold, suggested by the above heuristic argument,
remains to be confirmed analytically.  However,
it {serves} as a plausibility argument
for the \mbox{following conjecture.}

\begin{conjecture}
{Static 't Hooft-Polyakov monopole solution exists 
in non-dynamical de Sitter
spacetime, in the Prasad-Sommerfeld
limit, if and only if  $\a := (\mv/H)\sq \geq \a\sz$,
where $\a\sz \sim O(1)$ is a constant, $\mv$ is the mass
of the vector boson and $H$ is the Hubble 
constant.} 
%There exists an ${\a\sps{tP}}\sz \sim O(1)$ {such that}  for
%all ${\a\sps{tP}} \dfn (\mv/H)\sq > {\a\sps{tP}}\sz$, static `t Hooft--Polyakov monopole solution exists {and} for ${\a\sps{tP}}  < {\a\sps{tP}}\sz$, a 
%static monopole solution does not exist
%in non-dynamical de Sitter spacetime in the Prasad--Sommerfeld
%limit; $\mv$ is the mass
%of the vector boson and $H$ is the Hubble 
%constant. 
\end{conjecture}
}

\begin{figure}[H]
\hspace{-16pt}
\includegraphics[width=5.5in]{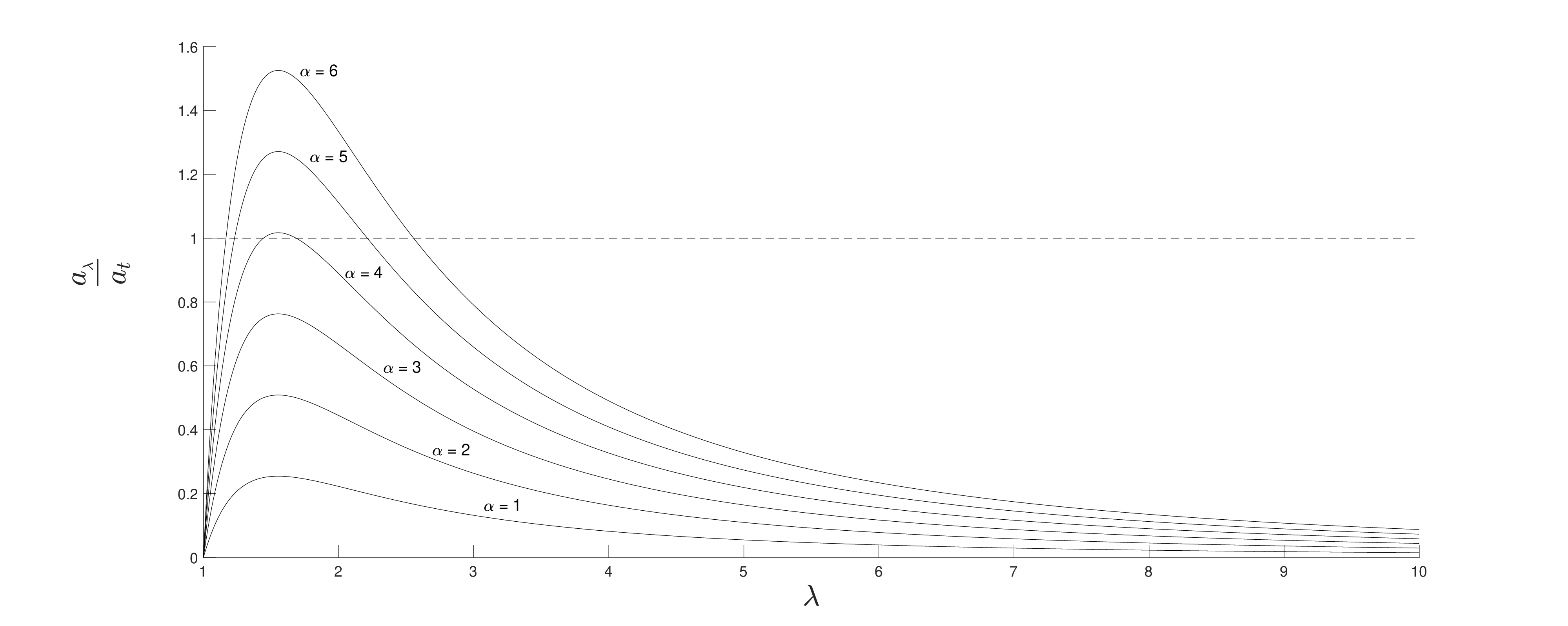}
\caption{{${a\uu\lambda }/{a\uu t}$} %MDPI: We moved figure here to reduce empty space above, please check and confirm.
 as a function of $\lambda $ for~
${\a\sps{tP}} = 1,\ldots, 6$.  {$\a\sps{tP}$ is abbreviated to $\a$
in the labels shown in the plots.} The plots (solid lines) 
show that, for $\a\sps{tP} < 3.9328$,
the tidal forces are too strong{---$a\dn\lambda /a\dn t < 1$---}and do not allow static `t Hooft--Polyakov monopole
to exist in de Sitter spacetime.  The dashed line represents 
${a\uu\lambda }/{a\uu t} \equiv 1.$}\label{thetidal}
\end{figure}   

\section{Secondary~Inflation}\label{sec:s4}
Linde~\cite{li94,li942} has suggested that, although the Hubble constant $H$ 
in new inflation  
is $\sim~10^{10}$ {GeV} %MDPI: Units should not be italics. Please confirm this revision. The following highlight is the same.
, which is about five orders of magnitude smaller than {the mass of the 
$X$ boson,} $\mx \sim 10^{15}$ {GeV}, making $\a\sps{GUT}\dfn (\mx/H)\sq \sim 10^{10}$, 
the inflationary background could trigger secondary inflation
at a {GUT} monopole's core.
Linde's argument is that, near the core, the $X$ boson is nearly massless and  the  field gradients near the core of 
the monopole are rapidly decreased once inflation starts in the background,  creating
conditions suitable for the onset of secondary  inflation at the core of the~monopole.  
  
{In Section~\ref{iotpm}, we 
present a heuristic argument, which suggests that 
a weak inflationary background cannot initiate secondary
inflation at~the core of an $SO(3)$ `t Hooft--Polyakov monopole. 
Based on the heuristic analysis of secondary inflation in the
`t Hooft--Polyakov monopole, we conjecture,  
in Section~\ref{iosufm}, that  the secondary
inflation in the {GUT} monopole,  suggested by Linde, is infeasible.
}

\subsection{Secondary Inflation Inside `t Hooft--Polyakov~Monopole}\label{iotpm}
This section presents a heuristic answer to the following   question: 
{\it   
 {Consider placing an initial field configuration corresponding to the {Lorentzian}
 `t Hooft--Polyakov monopole   in~an inflating background.   Under~what conditions does the inflating background {drive} %exponentially decrease 
 the initial field gradients near~the core of the monopole to~zero?}} %MDPI: Please confirm if the italics are necessary; if not, please remove them.   
  
If inflation is to drive the field gradients   near the core in the initial monopole configuration to zero, then, {as the monopole is stretched by 
inflation,}  the tidal force of the de Sitter background must exceed the
tensile force of the monopole near the core of the monopole.  Specifically,  at~some 
{$\lambda\rho = \lambda\xi \mv\inv$}, where $0 < \xi < O(1)$ {and $\lambda > 1$},
the centrifugal tidal acceleration must exceed the centripetal tensile~acceleration.  

Using the arguments that preceded Eq. ~(\ref{ate}) and 
(\ref{ati}),  and~defining the fiducial system to be the
 Higgs and gauge fields
inside a sphere of radius $\rho= \xi\mv\inv$, 
 the magnitudes of the centripetal 
tensile acceleration  and centrifugal
tidal acceleration, at~$\lambda\rho = \lambda\xi \, m\dn V\inv$, 
denoted $a\dn\lambda(\xi/\mv )$ and $a\dn t(\xi/\mv)$,
after the monopole has been stretched by a 
factor $\lambda$, are
\beas
\begin{array}{l}
{
a\dn\lambda  \lrr{\xi/\mv}  \aea \dsf{\lambda\sq -1}{\lambda (\lambda\sq + 1)} \brf{\mv}\xi 
\dsf{E(\xi\mv \inv)}{E\sps c(\xi\mv\inv)}
}
\\
{
a\dn t\lrr{\xi/\mv} \aea H\sq \brf{\lambda\xi} {\mv} 
}
\end{array}
\eeas
If the tidal force dominates at a distance $\lambda\rho=\lambda\xi/\mv$     
from the center of the monopole  for~some $0<\xi<O(1), \ \lambda>1$, then we must have
\bea
a\dn\lambda  \lrr{\xi/\mv} < a\dn t  \lrr{\xi/\mv},  \ \implies 
\brf{\mv}{H}\sq  \bsf{E(\xi\mv\inv)}{E\sps c(\xi\mv\inv)} \dsf 1 {\xi\sq}\ <  \dsf{\lambda\sq (\lambda\sq+1)}{\lambda\sq-1}
 \leqn{linde0}
\eea
Defining
\bea
q(\lambda) \dfn \dsf{\lambda\sq (\lambda\sq+1)}{\lambda\sq-1}, 
\qquad k(\xi) \dfn \dsf 1 {\xi\sq} \bsf{E(\xi\mv\inv)}{E\sps c(\xi\mv\inv)}
\qquad \label{linde}
\eea
Eq. \ct{linde0} becomes
\bea
\a\sps{tP}\, k(\xi)  <  q(\lambda) \leqn{akq}.  
\eea
Inequality \ct{akq} must be satisfied for the tidal acceleration to dominate
tensile acceleration at~a distance $\lambda\rho=\lambda\xi\mv$
when the monopole has been stretched by a factor $\lambda > 1$.

 {Figure} \ref{thelinde} shows the plot of $q(\lambda)$ and  
{$k(\xi)$}.   $q(\lambda)$ attains its minimum   $q_{min}\sim 5.8285$ at $\lambda\str \sim 1.55$ and 
  {$k(\xi)$} attains its minimum value  $k_{min}\sim 1.4687$ at~
  {$\xi\str \sim 1.1300$}.
 
\begin{figure}[H]
\includegraphics[height=2.5in]{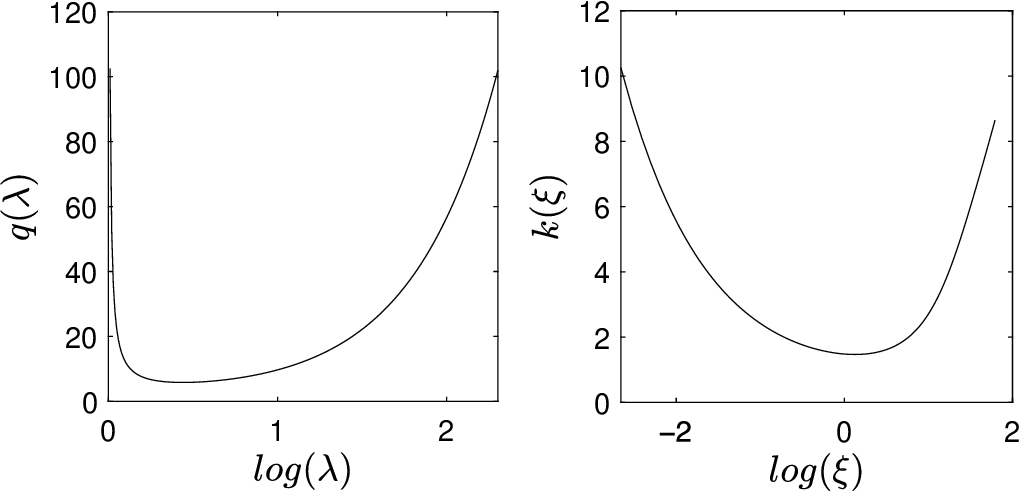} 
\caption{ {Plots} %MDPI: Please change the hyphen (-) into a minus sign (−, “U+2212”) in the figure, e.g., “-1” should be “−1”.
 of $k(\xi)$ and $q(\lambda)$.}\label{thelinde}
\end{figure}

Let  $\a\sps{tP}\sz = q_{min}/k_{min} \approx 3.9685$.   If~$\a\sps{tP} < \a\sps{tP}\sz$, then
\beas
\a\sps{tP} \ala \a\sps{tP}\sz \deq \dsf{q_{min}}{k_{min}} \ \leq  \dsf{q(\lambda)}{k_{min}}, 
\text{ for all $\lambda \geq 1$ at $\xi\str \sim 1.13$}
\eeas
That is, at~a distance $\rho = \xi\str \mv$ from the center, tidal 
acceleration is at least as large as the tensile acceleration for all $\lambda$. 
In other words, there is at least one distance from the center 
($\xi\str \mv\inv$) in 
the unstretched monopole at~
which the tidal force is at least as large as the 
 tensile force for all $\lambda>1$.

{The more interesting case  is   when $\a\sps{tP} > \a\sps{tP}\sz$.   
We note that $q(\lambda)$ diverges as $\lambda \ra 1$.  
As $\lambda$ increases from 1,
$q(\lambda)$ decreases until $\lambda$ reaches $\lambda\str$.  Since $\a\sps{tP} > \a\sps{tP}\sz$, 
there is a  
$\lambda\sz \in (1, \lambda\str)$ at which
\beas
q(\lambda\sz) \aea \a\sps{tP}\, k_{min}
\eeas
For any $\lambda \in (\lambda\sz, \lambda\str)$, 
\beas
q(\lambda) < q(\lambda\sz) \deq \a\sps{tP} \, k_{min} \leq \a\sps{tP} k(\xi), \text{ for all $\xi$.}
\eeas
That is, beyond~the stretching factor $\lambda\sz$, inequality \ct{akq} is not
satisfied at any $\xi$. In~other words, beyond~stretching factor $\lambda\sz$,
the tensile acceleration exceeds the tidal acceleration 
at the  monopole's core
%{We reiterate that the 
%estimate of the tidal force is valid only for `small' displacements.  Thus
%the argument cannot  be trusted far from the center of the monopole. However, we are interested in inflation near the core.}, 
and~stretching of
the monopole by the inflationary background  {ceases}. 
  %{\label{fn:lz}{
{Specifically,}
\bea
\lambda\sz(\a\sps{tP}) 
%\aea \dsf 1 {\sqrt 2} \lrs{(\a\sps{tP} k_{min} - 1) - \sqrt{(\a\sps{tP})\sq k_{min}\sq - 6 \a\sps{tP} k_{min} + 1}}\tuh \\
\aea  
\dsf 1 {\sqrt 2} \lrs{(1.4687\a\sps{tP}   - 1) - \sqrt{2.1571(\a\sps{tP})\sq   - 8.8122 \a\sps{tP}   + 1}}\tuh. \label{lzatp}
\eea
 For $\a\sps{tP} \geq 628$, $\lambda\sz(\a\sps{tP}) < 1$, 
which does not represent stretching. 
$\lambda\sz(\a\sps{tP})$ decreases monotonically as $\a\sps{tP}$ increases from
4 to 627, with~$\lambda\sz(627) \deq 1 + \scn{2.12}{-6}$.
%}).%}

{The  discussion in this section  is  a 
suggestive heuristic argument, which is 
useful  
 only as a plausibility argument for the
conjecture stated in Section~\ref{iosufm}.  
}  

\subsection{Secondary Inflation in {GUT} Monopole}\label{iosufm}

{Unlike the simple 
$SO(3)$ gauge theory of Higgs and gauge fields, used in the 
construction of the `t Hooft--Polyakov monopole,
the $SU(5)$
gauge theory, underlying new inflation, has  Higgs, gauge
and fermion fields; monopole
solution  arises when $SU(5)$ is broken at the
GUT scale.  $SU(5)$ monopole has been 
discussed in the literature~\cite{vach}.  
In the following discussion,
we suggest that a heuristic argument,
analogous to that presented in 
Section~\ref{iotpm},  can be 
developed for $SU(5)$
monopoles.  Again, we restrict attention
to  the 
Prasad--Sommerfeld limit in which the 
coupling constant in the Higgs
potential---which is independent of
the unified gauge coupling constant
of $SU(5)$---goes to zero.}
 
{
Specifically, we can consider a 
fiducial system comprising the fermion, gauge and Higgs fields of an $SU(5)$ monopole,
within a sphere $S$ of radius $\rho = \mx\inv$ from the center of the monopole, where  
$\mx$ is the mass of the $X$ boson that
emerges when the $SU(5)$ symmetry is 
broken.   An~estimate
of the tensile force  in the monopole can
be obtained by regarding the increase in 
energy when the   fiducial system is stretched
by a dilatation factor $\lambda$
as the work done against a centripetal
tensile force acting on the fields outside the
sphere of radius $\lambda\, \mx\inv$.   
The tidal acceleration of the 
de Sitter spacetime  is given by Eq.~(\ref{ati}).}

{The heuristic argument in
Section~\ref{iotpm} suggests that, for~an $SO(3)$
monopole, there is a $\a\sps{tP}\sz \approx 3.9685
\sim O(1)$, and, for
\beas
\a\sps{tP}\dfn 
\brf{\mv}{H}\sq >\a\sps{tP}\sz 
\eeas
the tidal force of the inflationary 
background is too weak to continue stretching
the core of the 
monopole beyond some small
dilatation factor $\lambda\sz$. In~other words,
after the soliton is 
stretched by a factor $\lambda\sz$, the~centripetal
force dominates the tidal force, preventing
further stretching; see the discussion leading up to 
and following \eqr{lzatp}.
%{Footnote} \ref{fn:lz}.}

{It seems plausible that, for the GUT monopole too, there is a $\a\tu{GUT}\sz
\sim O(1)$, and, for 
\beas
\a\tu{GUT} \dfn 
\brf{\mx}{H}\sq > \a\tu{GUT}\sz
\eeas
the tidal force of the inflationary background is 
too weak to continue stretching the core of an
$SU(5)$ monopole beyond some small 
dilatation factor $\lambda\sz\sps{GUT}$. } 

Based on the heuristic argument for the 
$SO(3)$ `t Hooft--Polyakov monopole and~the speculated validity of an analogous
argument for GUT monopole, we state
the \linebreak  following conjecture.

\begin{conjecture}
{For $%\a \tu{GUT}:=
(\mx/H)\sq \gtrsim O(1)$, where $\mx$ is the mass of the 
$X$ boson and $H$ the Hubble constant of inflationary 
background,   
secondary inflation cannot occur 
at the core of  a GUT monopole.}
\end{conjecture}

{
Our conjecture is 
incompatible with Linde's suggestion that,  even 
at  $\a\tu{GUT} \approx \pot{10}$, the~
inflationary background can  stretch the 
core of a   GUT  monopole sufficiently 
to make the field gradients vanish near the core.
Our conjecture 
can be settled with an 
exact analysis of the evolution of a
GUT monopole in the 
de Sitter background that may have 
prevailed during the GUT phase 
transition. }

\section{Discussion}\label{sec:disc}
{Besides the two conjectures stated above,
we mention   two other open problems in~Sections~\ref{sec:stab}
and \ref{sec:end}. The~first problem
pertains to the stability of static solitons of de Sitter spacetime.
The second problem  pertains to the behavior of static solitons of 
de Sitter spacetime when inflation ends.   
We also discuss below, the~ impact of 
quantum corrections and the precision of numerical calculations  
presented in the paper.}

\subsection{Stability of the~Soliton }\label{sec:stab}
{The stability of the soliton solution described
in Section~\ref{sec:s1} is an open problem.
The equation satisfied by  $\de\phi(\chi,t)$,
a  small perturbation of the  soliton solution, can be  derived starting with Eq.~(\ref{ele}); 
not surprisingly, the~equation for $\de\phi(\chi,t)$ depends on 
the soliton solution, which is not available in
closed form.    Establishing the 
stability of the soliton involves showing
that $\de\phi(\chi,t)$ decays for arbitrary initial
conditions. 
}

\subsection{Behavior of Solitons at the End of~Inflation}\label{sec:end}
{When inflation ends, $H\ra 0, \, \a \ra \infty$. If~spacetime becomes
Lorentzian after inflation ends, it is natural  to ask if a sine-Gordon 
soliton solution in 
de Sitter spacetime  approaches the sine-Gordon soliton solution of flat spacetime
as $H\ra 0$.
The answer may be nuanced. In~the $H\ra 0$ limit,
the equation for static solutions in de
Sitter \mbox{spacetime---\eqr{sfe-in-z}---does} not approach the equation for static solutions in flat spacetime.  At~every $H\neq 0$, however small $H$ is, \eqr{sfe-in-z} has 
two singularities at $z= \pm 1$ or $u =\pm H\inv$. 
At $H=0$, the~number of singularities
of the equation for static solutions drops {\it discontinuously} to zero.  Given
the discontinuity, we may not expect the static solitons in de Sitter spacetime
to morph continuously into the static solitons of flat spacetime as~$H\ra 0$.   
  The 
behavior of static solitons in de Sitter spacetime when inflation ends
warrants \mbox{closer examination.}}

\subsection{Quantum~Corrections}

{Quantization of soliton solutions has been 
discussed in~\cite{raja}.  In~the regime \mbox{$\a:=(m/H)\sq < 2$,} where quantum fluctuations are important,
a classical static soliton solution  does not exist. In~
the regime $\a\gtrsim 2$, the~static soliton 
solution in 
de Sittter spacetime, denoted $\phi_s(\chi,t)$, can be regarded as a classical
background for quantum fluctuations $\zeta(\chi,t)$; the classical solution discussed
in Section~\ref{sec:s1} can then be retained unchanged.
The equation for $\zeta(\chi,t)$, derived from 
\eqr{ele}, is linear---in the small $\lrb{\zeta}$ limit---but with a damping term, which is  related to the suppression of quantum fluctuations
by the inflation in the background.  
}  

\subsection{Numerical~Calculations}\label{sec:numerics}

{The numerical calculations reported in this paper were done in MATLAB {R2024b} using its default double precision arithmetic.  The~double 
precision's  machine  epsilon  is \mbox{$\de\sz =\scn{2.22}{-16}$;} 
that is, $\de\sz$ is the smallest number for which  $1$ and $1+\de\sz$ are regarded as different numbers in double precision
arithmetic.  The~machine epsilon determines
how closely one can approach the singularity $z=1$ in numerical investigations.
We solved the differential equation using the   built-in routine, {\tt {ode45}},
which expects the second-order differential equation, such as \eqr{sfe-in-z},
to be presented as a system of first-order equations, such as
\beas
\begin{array}{l}
{
 \vp'(z) \aea \psi(z)
 }
 \\
{
\psi'(z) \aea -\bsf{2z \psi(z) +\a \sin(\vp(z))}{(z+1)(z-1)}
}
\end{array}
\eeas
If $|z-1|<\de\sz$, then, numerically, it is treated as $z-1=0$, and the numerical
calculation leads to an overflow error.  
}
{Therefore, we integrated the differential equation only up to $z=1 \pm 
\potm{15}$ to ensure that the numerical calculations
remained within the limits set by the machine epsilon.  The~differential
\eqr{sfe-in-z} was solved separately on either side of $z=1$.}

{None of the results
presented in this paper {relies} on the exactness of our
numerical calculations.  The~results of the numerical
calculations  have only heuristic value and serve  only as 
plausibility arguments for the conjectures we propose. }

%%%%%%%%%%%%%%%%%%%%%%%%%%%%%%%%%%%%%%%%%%
\vspace{6pt}

%\funding{ {This research was supported by the Office of Naval Research grant number N-00014-96-1-0281. }} %Please add: ``This research received no external funding'' or ``This research was funded by NAME OF FUNDER grant number XXX.'' and  and ``The APC was funded by XXX''. Check carefully that the details given are accurate and use the standard spelling of funding agency names at \url{https://search.crossref.org/funding}, any errors may affect your future funding.

%\dataavailability{
\section{Data Availability}
Data sharing is not applicable to this article as no datasets 
were generated or analyzed during the current~study. 

%\acknowledgments{
\section{Acknowledgment}
I thank Alan Guth for the many helpful conversations
while this work was being done \cite{prab}.
{The research reported in the paper was done mostly 
at the Center for Theoretical Physics (CTP), MIT. I thank CTP
for its support. 
%{I thank} %MDPI: Please ensure that all individuals included in this section have consented to the acknowledgement.
% Alan Guth for the many helpful conversations %while this work was
%%being done 
%{\cite{prab}.  
I also thank Sergiu Moroianu for 
 helpful conversations.} I  thank the three anonymous
referees for their many helpful comments, questions 
and suggestions, which have improved both the content and the presentation of
the paper.     
This research was supported by the Office of Naval Research grant number N-00014-96-1-0281.  No LLM was  used to generate 
any portion of the text or calculations reported in this paper. The 
author takes full intellectual responsibility for the contents of the 
paper. 
%and ONR grant N-00014-96-1-0281.} %MDPI: Attention Assigned Editor: please confirm whether the funding information in the Acknowledgments section should be moved to the Funding section. Please ensure the contents in funding and acknowledgments sections should not be duplicated
%}

%\conflictsofinterest{ {The author declares no conflicts of 
%interest.  The funders had no role in the design of the study; in the collection, analyses, or interpretation of data; in the writing of the manuscript; or in the decision to publish the results.}} %Please declare conflicts of interest or state ``The authors declare no conflicts of interest.'' Authors must identify and declare any personal circumstances or interest that may be perceived as inappropriately influencing the representation or interpretation of reported research results. Any role of the funders in the design of the study; in the collection, analyses or interpretation of data; in the writing of the manuscript; or in the decision to publish the results must be declared in this section. If there is no role, please state ``The funders had no role in the design of the study; in the collection, analyses, or interpretation of data; in the writing of the manuscript; or in the decision to publish the results''.

%\appendixtitles{yes}
%\appendixstart
\appendix
\section{{Solutions} %MDPI: We revised A Appendix to Appendix A, please check and confirm. The same for all Appendix heading.
of Hypergeometric Differential~Equation}\label{app:b}
This appendix presents the construction of the two linearly independent
solutions of the hypergeometric {equation} %MDPI: Figures, tables, and equations in the Appendix should be marked as “Figure A1, Table A1, Equation (A1)”, etc. We have corrected them. Please confirm these revisions.
\bea
x(x-1) \vp\uxx + ((a+b+1) x- c) \vp\uxo + ab \vp \aea 0 \label{hde}
\eea
in a neighborhood of $x=0$ for the special case in which
$a+b=1, \, ab = \a, \, c =1$ or
\bea
a \deq \dsf{1+\sqrt{1-4\a}}2, \quad b \deq \dsf{1-\sqrt{1-4\a}}2, \quad c = 1 \label{spc}
\eea
For the special values of $a,b,c$ shown in (\ref{spc}),
\erf{hde}   
becomes
\bea
x(x-1) \vp\uxx + (2x-1) \vp\uxo + \a \vp \aea 0
\label{sfe-in-xx}
\eea
The first solution of \erf{hde}{, which }
that is analytic at $x=0${,} is given
by the hypergeometric series \cite{whwa} {(\S14.2, p. 283)} %MDPI: We revised reference citation, please check and confirm. The following highlight is the same.
\beas
F(x; a,b,c) \ada \sum_{n=0}^\infty \dsf{(a)\dn n (b)\dn n}{n! \, (c)\dn n} \ x\tu n, \qquad (p)\dn n \dfn 
a(a+1) \ldots (a+n-1), \quad (p)\sz \deq 1
\eeas
%\textls[-20]
{which converges for $|x|<1$.  We denote the  solution
of \erf{sfe-in-x} that is analytic at $x=0$}
as
\bea
\vp\so(x) \dfn F\lrr{x; \dsf{1+\sqrt{1-4\a}}2,\dsf{1-\sqrt{1-4\a}}2,1} \label{phi-1}
\eea
We construct the 
second solution of \erf{sfe-in-x},
 denoted $\vp\sw(x)$,  as~follows (see \cite{whwa} {(\S 10.32)}).  
The  $\vp\sw(x)$  we construct will be linearly  independent with respect to $\vp\so(x)$
and analytic in a punctured interval $P\dn\de \dfn$ [$-\e,\e$]$\setminus\lrc{0}$ for some $\e>0$.

In a punctured interval  of $x=0$ that excludes $x=0$
we write $\vp\sw(x)$  as
\bea
\vp\sw(x) \aea v(x) \, \vp\so(x) \label{vp2ans}
\eea
where $v(x)$ is a non-constant function of $x$.
Substituting (\ref{vp2ans}) into \erf{sfe-in-xx},  
noting that $\vp\so$ satisfies  \erf{sfe-in-xx} and 
setting $w(x) \dfn v'(x)$, we get, for~the 
special values of $a,b,c$ shown in (\ref{spc}),
\bea
\lrs{x(x-1)\, \vp\so} w' + \lrs{2x(x-1) \vp\so' + 
((2x-1) \vp\so} w \aea 0 \label{weqn}
\eea
The differential equation~(\ref{weqn}) is singular
at $x =0$.    Let $z\str$ denote the zero of $\vp\so(x)$ 
that is closest to the origin.  We choose a $\de $
such that $0 < \de < \min(z\str, 1)$.  In~the punctured
interval $(-\de,\de)\setminus\lrc 0$ the solution of
\erf{weqn} is
\bea
w(x) \aea \dsf{c}{x (1-x) \vp\so\sq(x)}, \qquad c \neq 0
\label{w}
\eea
where $c$ is a constant. If~$c=0$, then $v(x)$ would be a 
constant and $\vp\sw(x)$ would not be linearly independent
of $\vp\so(x)$.  Since we are constructing the function $v(x)$,
we can set $c=1$.

In the interval $I\dn\de \dfn (-\de,\de)$, 
$ \lrs{(1-x) \, \vp\sq\so(x)}\inv$
is an analytic function.  Further, $\vp\so(0) \deq 1$.  Therefore
we can expand the function $\lrs{(1-x)\, \vp\so\sq(x)}\inv$ in Taylor series around $x=0$ within
the interval $I\dn \de$ as 
\bea
\dsf 1{(1-x) \, \vp\so\sq (x) } \aea 1 + \sum_{n=1}^\infty q\dn n x\tu n, \dfn 1 + q(x)\qquad 
-\de < x < \de
\label{q}
\eea
where $q(x)$, the~function represented by the series, is analytic
in $I\dn\de$. 

From Eq.  \eqref{w} and \eqref{q} we have (recalling that we set $c=1$)
\bea
w(x) \aea v'(x) \deq \dsf 1 x + \sum_{n=1}^\infty q\sn x\tu{n-1}
\label{vpr}
\eea
Since the series shown in (\ref{vpr}) converges in $I\dn\de$, it 
converges uniformly and absolutely in a compact sub-interval $ 
K\dfn [-\e,\e] \subset (-\de,\de)$,
and therefore the series can be integrated term by term in $K$.
Since $v'$ is singular at $x=0$, we consider the differential
equation separately in the two sub-intervals
$[-\e,0)$ and $ (0,\e]$.  Integrating
(\ref{vpr}) we obtain
\bea
v(x) \aea \left\{ \ba{ll} \log(|x|) + \ti q(x) +  c\so, \quad x \in [-\e,0)\\
\ \\
\log(|x|) + \ti q(x) +  c\sw, \quad  x \in (0,\e] \ea\rtd, \qquad 
\ti q(x) \deq \sum_{n=1}^\infty \brf{q\sn} n x\tu n
\label{v}
\eea
Since we are interested in $\vp\sw(x)$, which is linearly
independent of $\vp\so(x)$, we can ignore the constants
$c\so, c\sw$ in \erf{v}  and  
we have
\beasm
\vp\sw(x) \aea  \log(|x|) \, \vp\so(x) + h(x),     
\qquad x\in [-\e,\e]\setminus
\lrc{0}
\label{phi-2a}
\eeasm
where $h(x) = \ti q(x) \, \vp\so(x)$, and $\ti q(x)$ is derived from the function
$1/(1-x)\, \vp\so\sq(x)$ as described above.  Finally, the~condition
for $\vp\sw(x)$ to be linearly independent of $\vp\so(x)$ is that
the Wronskian $\vp\so \vp\sw' - \vp\so' \vp\sw \deq v' \vp\so\sq \neq 0$, which holds if $c\neq 0$ in \erf{w}, as \mbox{mentioned above.}

\section{{Neighborhood} of the~Horizon}\label{app:c} 

%\begin{proof}
{\bf Proof of Lemma \ref{lem:l4}:}
Rewriting \erf{dex} and using Mean Value Theorem we have for $x\in\intvl\e\setminus\lrc 0$
\beasm
\vp\sb'(x) \aea -\bsf 1{x(x-1)} \izx{\a\, \sin(\vp\sb(y))} \deq 
\dsf {\a\, \sin(\vp\sb(x'))}{1-x}    
\qquad 
\label{ubd}
\eeasm
where $0<|x'|<|x|$. From~(\ref{ubd}) we have
\beas 
\lt x 0\ \vp\sb'(x) \aea \a\, \sin(\vp\sb(0))
\eeas 
\qed %\end{proof}

\noindent 
We need the following Lemmas \ref{lem:l4p5}--\ref{lem:l7}
 to prove Lemmas \ref{lem:l2} and \ref{lem:l3}.  
 Lemma \ref{lem:l4p5} is standard lore, and proofs for it can be found in many
 books on advanced calculus. We   state and prove it below to make it easier
 to reference it repeatedly in later arguments.
 \begin{lemma}\label{lem:l4p5}
If a function $v(x)$ is continuous in \intvl\e$\setminus\lrc 0$, not defined at
$x=0$ and $\lt x 0 v(x) = L$ exists, then  
defining
 $v(0) \dfn L$ makes $v(x)$ well defined and continuous in \intvl\e.
 
 Secondly, 
if $v(y)$ is a continuous bounded function in the interval \intvl\e,\  
 then  
 \beas
 I(x) \dfn \int_0^{ x}\, v(y)\, dy
 \eeas
  is a continuous function at every $x\in\intvl\e$. 
  \end{lemma}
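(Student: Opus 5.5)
The lemma has two parts, and I would prove them separately, both straight from the $\epsilon$--$\delta$ definitions.

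\emph{First part.} Set $v(0) := L$; this is well defined because $v$ was not defined at $0$ before. Continuity at every point $x_0 \neq 0$ of $(-\epsilon,\epsilon)$ is inherited from the hypothesis, since near $x_0$ the function is unchanged. At $x_0 = 0$, continuity means $\lim_{x\to 0} v(x) = v(0)$. The left side equals $L$ by hypothesis, and $v(0) = L$ by definition, so the first part is done.

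\emph{Second part.} I would read the integral as a Riemann integral; if $v$ is only defined and continuous away from $0$, as in the later applications, I would read it as an improper integral instead. In either reading the integral exists, because $v$ is bounded and continuous except possibly at one point. So $I(x)$ is well defined for every $x \in (-\epsilon,\epsilon)$. Let $M := \sup_{(-\epsilon,\epsilon)} |v| < \infty$. For $x, x_0 \in (-\epsilon,\epsilon)$, additivity of the integral over adjacent intervals (with the usual orientation conventions when $x < 0$) gives
\[
|I(x) - I(x_0)| = \left| \int_{x_0}^{x} v(y)\, dy \right| \leq M\, |x - x_0| .
\]
So $I$ is Lipschitz on $(-\epsilon,\epsilon)$, and hence continuous at every point.

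The only step that needs care is the existence of the integral when $x$ and $0$ lie on opposite sides of a possible discontinuity, or when $v$ is undefined at $0$. Boundedness handles this. On $[\eta, x]$ the function is continuous, and the contribution of $\int_0^{\eta}$ is at most $M\eta$, which tends to $0$ as $\eta \to 0$. So the improper limit exists by the Cauchy criterion, and the estimate $M|x-x_0|$ above still holds.
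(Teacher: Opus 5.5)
Your proof is correct and takes essentially the same approach as the paper. The first part follows directly from the definition of the limit once you set $v(0):=L$, and the second part uses the bound $\left|\int_{x_0}^{x} v(y)\,dy\right| \leq M|x-x_0|$ to get continuity of $I$; your remarks on existence of the integral and on continuity at $x_0\neq 0$ just spell out details the paper leaves implicit.
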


%  \begin{proof}[
\noindent{\bf Proof:} % of Lemma \ref{lem:l4p5}:}
%  ]
Since $L$ is the limit of $v(x)$ as $x\ra 0$, given a $\de>0$, there exists a $\beta>0$ such that, for $\norm{x} < \beta$, $ \norm{L-v(x)} =\norm{v(0)-v(x)}<\de$.   
 
%\textls[-15]
{ For any given $\de>0$ we can find a $\beta>0$ such that, for all $z$ satisfying
 $\norm{x-z} < \beta$, $\norm{I(x)-I(z)} < \de$ because~$\norm{\int_{z}^x \, v(y)\,  dy}< M \norm{x-z} < M \beta$, where $\norm{v(y)} < M$ for \mbox{$y\in\intvl\e$.}}
 The claim follows by choosing   $0<\beta < \de/M$.  \qed
 %\end{proof}
 
 \begin{lemma}\label{lem:l5}
 Let $\vp\so(x)$ and $\vp\sw(x)$ be the two linearly independent solutions of 
 the hypergeometric equation
 \bea
 x(x-1) \vp\uxx + (2x-1)\vp\uxo + \a \vp \aea 0 \label{hom-eq-x}
 \eea
 shown in Eq. \eqref{phi-1} and \eqref{phi-2a}.  Let $W(x) \deq \vp\so \vp\sw' - \vp\so' \vp\sw$ denote their Wronskian. Define $g(x) \dfn (x-1)\, x \, W(x)$.   Let the function $h(x)$ in \erf{phi-2a}
 be analytic in $(-\rho\uu h, \rho\uu h)$, $0<\rho\uu h <1$.  
 
 Then, 
$
g(0) \deq -1
$,
and there exists a $0<\de<\ruh$, and, for  
 $x\in (-\de, \de)$,
 $g(x)$ is an analytic function and $|g(x)|\geq \f12$. 
 \end{lemma}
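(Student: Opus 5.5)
The plan is to compute the Wronskian explicitly from the construction in Appendix~\ref{app:b}. There, on the punctured interval, $\vp\sw = v\,\vp\so$ with $v' = w$. Hence
\beas
W \deq \vp\so\vp\sw' - \vp\so'\vp\sw \deq v'\,\vp\so\sq .
\eeas
By \erf{w}, with the normalization $c=1$, we have $w(x) = 1/\lrs{x(1-x)\vp\so\sq(x)}$. Therefore, for $x$ in the punctured interval,
\beas
W(x) \deq \dsf{1}{x(1-x)}, \qquad g(x) \deq (x-1)\,x\,W(x) \deq -1 .
\eeas
This follows directly from the formula for $v'$.

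As an independent check that does not lean on the normalization, I would use Abel's identity for \erf{hom-eq-x}. Written as $\vp'' + \frac{2x-1}{x(x-1)}\vp' + \dots = 0$, it gives $W' = -\frac{2x-1}{x(x-1)}W$, so $\lrs{x(x-1)W}' = 0$. Thus $g$ is constant on each side of $0$. The constant is computed from the leading behavior $\vp\sw \sim \log|x|$ and $\vp\so(0)=1$: near $0$ we have $W \approx 1/x$, hence $g \to -1$ from either side. So $g\equiv -1$ on both components of the punctured interval.

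Finally, I would address the value at $0$ and the remaining claims. Since $g$ is constant $-1$ on the punctured interval and has limit $-1$ at $0$, Lemma~\ref{lem:l4p5} lets us define $g(0) := -1$. The extended $g$ is then the constant function $-1$ on $(-\de,\de)$ for any $0<\de<\ruh$ that also lies inside the interval where the construction of $\vp\sw$ is valid, i.e.\ below the first zero of $\vp\so$ and below $1$. A constant function is trivially analytic, and $|g(x)| = 1 \ge \frac12$.

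The main obstacle is bookkeeping rather than analysis. One must check that the constants $c\so, c\sw$ dropped in \erf{v}, and the possibly different choices on the two sides of $0$, do not affect $W$. They do not, because adding a multiple of $\vp\so$ to $\vp\sw$ leaves the Wronskian unchanged. One must also choose $\de$ small enough that $\vp\so$ has no zero in $(-\de,\de)$, so that the formula for $w$ is valid there.
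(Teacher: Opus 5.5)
Your proof is correct, and it proves more than the lemma requires: $g\equiv -1$ exactly on the punctured interval. The route differs from the paper's.

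The paper never uses the differential equation satisfied by $\varphi_2$, nor the specific choice $h=\tilde q\,\varphi_1$. It substitutes $\varphi_2=\varphi_1\log|x|+h$ into the Wronskian and notes that the $\log|x|$ terms cancel. It then treats $x\,\frac{d}{dx}\log|x|$ as a removable singularity and obtains $g(x)=(x-1)\bigl[\varphi_1^2+x\varphi_1h'-x\varphi_1'h\bigr]$, which is visibly analytic wherever $\varphi_1$ and $h$ are. Finally it reads off $g(0)=-\varphi_1(0)^2=-1$ and uses continuity to find a possibly small interval on which $|g|\ge\frac12$.

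You instead use the structure of the equation, either through $W=v'\varphi_1^2$ with $v'=w$ from \eqref{w}, or through Abel's identity. Abel's identity here is just the self-adjoint form $\bigl(x(x-1)\varphi'\bigr)'+\alpha\varphi=0$, so $x(x-1)W$ is constant on each side of $0$.

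What each approach buys:
\begin{itemize}
\item The paper's argument needs only the form of $\varphi_2$: coefficient one on $\varphi_1\log|x|$ and $h$ analytic.
\item Yours uses the extra fact that $\varphi_2$ is a genuine solution. In return, analyticity and the lower bound become trivial, and no shrinking of $\delta$ is needed. It also shows that the constant $C_w(\epsilon)$ used later in the proof of Lemma~\ref{lem:l3} is identically $1$.
\end{itemize}

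The two computations agree. With $h=\tilde q\,\varphi_1$ one gets $x(\varphi_1h'-\varphi_1'h)=q(x)\varphi_1^2$, so the paper's bracket equals $\varphi_1^2(1+q)=1/(1-x)$.

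Your bookkeeping remark about $c_1,c_2$ is right. It applies equally to the paper's formula, since replacing $h$ by $h+c\varphi_1$ leaves $\varphi_1h'-\varphi_1'h$ unchanged.
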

 
% \begin{proof}
  \noindent {\bf Proof:} 
Differentiating $\vp\so$ and $\vp\sw$,   $g(x)$ 
 can be  {written as} 
% {( $x\vp\sw'(x)$ contains a term of the form  $q(x) \dfn x \ \dbyd{\log(|x|)}x \ \vp\so\sq(x)$,
% which has a {\it removable singularity} at $x=0$.  By~defining 
% $q(0) \dfn 1$, the~function $q$ can be made analytic throughout
% $(-\de,\de)$, including at $x=0$.  \linebreak  See \cite{chur} {(\S 68)} for details.} %MDPI: The \cite{chur} is invalid, please check and revise. Please ensure that all reference appear in numerical order.
 \beas
 g(x) \aea  (x-1)\lrs{\vp\so\sq(x) + x \, \vp\so(x)\, h'(x) - x\, \vp\so'(x) \, h(x)}
 \eeas
  {We note that}  $x\vp\sw'(x)$  contains a term of the form   $q(x) \dfn x \ \dbyd{\log(|x|)}x \ \vp\so\sq(x)$,
 which has a {\it removable singularity} at $x=0$.  By~defining 
 $q(0) \dfn 1$, the~function $q$ can be made analytic throughout
 $(-\de,\de)$, including at $x=0$.  %\linebreak}  
 {See} \cite{chur}  {(\S 68)}  for details.   
 Since $\vp\so, \vp\so', h, h'$ are analytic in $(-\ruh,\ruh)$, it follows that 
 $g(x)$ is analytic in the same neighborhood.  Noting that $\vp\so(0) \deq 1$
 we conclude that $g(0) = -1$.  Since $g(0)=-1$ and $g$ is a continuous function
 in $(-\ruh,\ruh)$ there exists an interval $(-\e,\e)$,   $0<\e<\ruh<1$ in which $|g(x)| \geq \f12$. \qed
 %\end{proof}

  \begin{lemma}\label{lem:l6}
  Define 
  \bea
  V(x;\vp) \dfn \vp\sw(x) \int_0^x \bsf{\vp\so(y) \, f(\vp(y))}{y\,(y-1)\, W(y)} dy - \vp\so(x)\int_0^x \bsf{\vp\sw(y) \, f(\vp(y))}{y\,(y-1)\, W(y)} dy
  \label{defv}
  \eea
  where $f(\vp(y)) \dfn \a(\vp(y)-\sin(\vp(y)))$.  Consider an interval 
  $(-\e,\e), \ 0<\e<\ruh<1$ in which $\norm{y \, (y-1)\, W(y)}\geq \f12\   ($see Lemma \ref{lem:l5}$)$.  If~$\vp(y)$ is a continuous
  bounded function in $(-\e,\e)$, then,
   in the interval $(-\e, \e)$,
 $V(x;\vp)$ can be written as
{\small
\beas
V(x;\vp) \aea h(x) \int_0^x \vp\so(y) \, q(y) \,dy - \vp\so(x) \int_0^x h(y) \, q(y)\, dy + \vp\so(x) \int_0^x \dsf 1 y \lrs{\int_0^y \vp\so(u) \, q(u) \, du} dy
\eeas
}

\noindent
where
\beas
q(y) \ada \dsf{\a(\vp(y)-\sin(\vp(y))}{y\, (y-1) \, W(y)}
\eeas
\end{lemma}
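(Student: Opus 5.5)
Substitute the representation $\vp\sw(x)=\log(|x|)\,\vp\so(x)+h(x)$ from \erf{phi-2a} into both terms of \erf{defv}. With $q(y)$ as in the statement, note first that $q$ is continuous and bounded on $(-\e,\e)$: $\vp$ is continuous and bounded, $f$ is continuous, and by Lemma \ref{lem:l5} $|y(y-1)W(y)|\geq \f12$ there. The first term of \erf{defv} splits as
\beas
\log(|x|)\,\vp\so(x)\int_0^x \vp\so(y)\,q(y)\,dy \;+\; h(x)\int_0^x \vp\so(y)\,q(y)\,dy ,
\eeas
and the second as
\beas
-\vp\so(x)\int_0^x \log(|y|)\,\vp\so(y)\,q(y)\,dy \;-\; \vp\so(x)\int_0^x h(y)\,q(y)\,dy .
\eeas
The two terms involving $h$ already have the form in the lemma. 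By Lemma \ref{lem:l4p5}, $\int_0^x \vp\so q\,dy$ and $\int_0^x hq\,dy$ are well defined and continuous, since the integrands are continuous and bounded.

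The remaining work is to combine the two logarithmic pieces. Set $P(y)\dfn\int_0^y \vp\so(u)q(u)\,du$, so that $P(0)=0$, $P'=\vp\so q$, and $|P(y)|\leq M|y|$. Integrate $\int_0^x \log(|y|)\,P'(y)\,dy$ by parts on $[\eta,x]$ (or $[x,-\eta]$ for negative $x$) and let $\eta\to0$. This gives
\beas
\int_0^x \log(|y|)\,P'(y)\,dy \;=\; \log(|x|)P(x)-\lim_{\eta\to0}\log(\eta)P(\pm\eta)-\int_0^x \dsf{P(y)}{y}\,dy .
\eeas
The boundary term vanishes because $|\log(\eta)P(\eta)|\leq M\eta|\log\eta|\to0$. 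The integral $\int_0^x P(y)/y\,dy$ converges because $P(y)/y$ is bounded by $M$ and extends continuously to $y=0$ with value $\vp\so(0)q(0)$, using Lemma \ref{lem:l4p5}. The improper integral $\int_0^x \log(|y|)\vp\so q\,dy$ also converges absolutely, since $\log|y|$ is integrable, so taking the limit is legitimate.

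Multiplying by $-\vp\so(x)$, the term $-\vp\so(x)\log(|x|)P(x)$ exactly cancels the first logarithmic piece, and what is left is $+\vp\so(x)\int_0^x \f1y\lrs{\int_0^y\vp\so(u)q(u)\,du}dy$. This is the claimed formula.

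The main care point is justifying the integration by parts near the singular endpoint $y=0$: the boundary term must vanish, and the improper integrals must converge. Both follow from $|P(y)|\leq M|y|$, which comes from the boundedness of $q$ and $\vp\so$ on $(-\e,\e)$.
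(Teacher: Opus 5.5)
Your proposal is correct and is essentially the paper's own argument. You substitute $\varphi_2(x)=\log(|x|)\,\varphi_1(x)+h(x)$, integrate the $\log(|y|)\,\varphi_1 q$ term by parts against $P(y)=\int_0^y\varphi_1 q\,du$, and kill the boundary term at $0$, which the paper does via the Mean Value Theorem limit $\lim_{x\to0}\log(|x|)\int_0^x v\,dy=0$, equivalent to your bound $|P(y)|\le M|y|$. You then cancel the two $\log(|x|)\,\varphi_1(x)P(x)$ terms as the paper does; your explicit handling of the improper integrals and of integrating on $[\eta,x]$ before letting $\eta\to0$ is slightly more careful than the paper, which passes over these points.
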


\noindent {\bf Proof:} 
First, we observe that 
$q(y)$
is a continuous bounded function of $y$ in \intvl\e. 
Using the form of $\vp\sw$ shown in \erf{phi-2a} and performing the 
differentiation and partial integration, we obtain
\beasm
\begin{array}{l}
{
V(x;\vp) \aea  \lrc{\vp\so(x) \, \log(|x|) + h(x)}\izx{\vp\so(y)\, q(y)}
- \vp\so(x)\ltd \lrc{\log(|y|) \int_0^y {\vp\so(u) \, q(u)} \, du}\right|^x_0
}
\\
{
%\nonumber\nln
 \ \ \ \ \ \ \ \ \ \  \ \ \ \ \ \ \ \ \ + \vpo(x) \izx{ \dsf 1 y \lrs{\izy{\vpo(u)\, q(u)}}} - \vpo(x) \izx{h(y)\, q(y)}
\qquad 
}
\end{array}
\label{exe}
\eeasm
If $v(x)$ is a continuous function in an interval $(-\e,\e)$, then, using the 
Mean Value Theorem \cite{wrsp} {(Chapter 5)}, we have 
\bea
\lt x 0 \ \log(|x|) \izx{v(y)} \aea 0 \label{mvt}
\eea
The claim in the lemma follows from applying \erf{mvt} in \erf{exe}. 
\qed %\end{proof}

 \begin{lemma}\label{lem:l7}
  Let $\ti \vp(x)$ be an analytic function in the interval $(-\rh,\rh)$ for~some 
  $\rh > 0$.  Then 
  \beas
  V(x;\ti \vp) \aea \vpw(x) \izx {\dsf{\vpo(y) \, f(\ti\vp(y))}{(y-1)\, y\, W(y)}}
  - \vpo(x) \izx {\dsf{\vpw(y) \, f(\ti\vp(y))}{(y-1)\, y\, W(y)}}
  \eeas
  is an analytic function in 
  the interval $(-\de,\de)$ for some $0<\de < \min(\rh,\ruh)$.
 \end{lemma}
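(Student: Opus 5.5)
The plan is to use the representation of $V(x;\ti\vp)$ from Lemma \ref{lem:l6}, which removes the logarithm and puts everything in terms of analytic pieces. Apply Lemma \ref{lem:l6} with $\vp=\ti\vp$. On a small enough interval $\ti\vp$ is continuous and bounded, so
\beas
V(x;\ti\vp) \aea h(x)\izx{\vpo(y)\,q(y)} - \vpo(x)\izx{h(y)\,q(y)} + \vpo(x)\izx{\dsf1y\lrs{\izy{\vpo(u)\,q(u)}}},
\eeas
with $q(y) = \a(\ti\vp(y)-\sin\ti\vp(y))/(y(y-1)W(y))$. It then suffices to show that $q$ is analytic near $0$. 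Products, integrals from $0$ of analytic functions, and the operation $G\mapsto \int_0^x G(y)/y\,dy$ with $G(0)=0$ all preserve analyticity on a disc.

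\emph{Step 1: analyticity of $q$.} By Lemma \ref{lem:l5}, $g(y)=(y-1)yW(y)$ is analytic on $(-\de_0,\de_0)$ with $|g|\ge \f12$. Hence $1/g$ is analytic there, since $g$ is nonvanishing near $0$ and $g(0)=-1$. The function $f(\ti\vp)=\a(\ti\vp-\sin\ti\vp)$ is a composition of the entire function $s\mapsto \a(s-\sin s)$ with the analytic function $\ti\vp$, so it is analytic on $(-\rh,\rh)$. Thus $q=f(\ti\vp)/g$ is analytic on $(-\de_1,\de_1)$ with $\de_1=\min(\rh,\de_0)$, and $\de_0<\ruh$. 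Similarly $\vpo$, which has radius $1$, and $h$, which is analytic on $(-\ruh,\ruh)$, are analytic there.

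\emph{Step 2: the terms.} Write $\vpo q=\sum_n c_n y^n$, convergent for $|y|<\de_1$. Then $G(y)=\int_0^y \vpo q = \sum_n c_n y^{n+1}/(n+1)$ has the same radius of convergence. Moreover $G(y)/y=\sum_n c_n y^n/(n+1)$ is a convergent power series, so the removable singularity at $0$ is handled by working directly at the level of series. Integrating once more gives $\sum_n c_n x^{n+1}/(n+1)^2$, again analytic on $(-\de_1,\de_1)$. The remaining terms are products of the analytic functions $h$ and $\vpo$ with integrals of analytic functions, and are therefore analytic. Finally, choose $0<\de<\de_1\le\min(\rh,\ruh)$, with strict inequality as the statement requires, and the conclusion follows on $(-\de,\de)$.

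\emph{Main obstacle.} The only delicate point is the legitimacy of Lemma \ref{lem:l6}'s rewriting and of the $1/y$ integral. Lemma \ref{lem:l6} requires $\ti\vp$ to be continuous and bounded, so I restrict to a compact subinterval where this holds. The $1/y$ integral is handled by the power-series identity above, using termwise integration inside the radius of convergence. Convergence of the term-divided series follows because dividing the coefficients by $n+1$ does not shrink the radius of convergence.
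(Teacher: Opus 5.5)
Your proposal is correct and follows essentially the paper's proof: both use Lemma~\ref{lem:l6} to remove the logarithm, get analyticity of $q$ from Lemma~\ref{lem:l5}, and handle $\frac{1}{x}\int_0^x \varphi_1(y)\,q(y)\,dy$ by termwise integration of the Taylor series of $\varphi_1 q$ at $0$. One small point, which the paper shares: real-analyticity of $\varphi_1 q$ on $(-\delta_1,\delta_1)$ does not by itself make its Taylor series at $0$ converge on all of $|y|<\delta_1$, so take $\delta$ below that radius of convergence, which the lemma's ``for some $\delta$'' allows.
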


\noindent{\bf Proof:} 
From Lemma \ref{lem:l5} we know that there exists an interval
 $(-\de,\de)$, $0<\de<\ruh<1$ in which $(y-1)\, y \, W(y)$ is analytic and 
 nonzero.    
 Choosing a $\de${,} satisfying $0 \slt \de \slt \min(\rh, \rh_h)$, we
 conclude that
 \beas
 q(y) \ada \dsf{f(\ti\vp(y))}{(y-1)\, y\, W(y)}
 \eeas
 is analytic in \intvl\de.   We can  expand $\vpo(x)\, q(x)$ in
 Taylor series around $x=0$ within its domain of analyticity as
 \bea
 \vpo(x) \, q(x) \aea \sum_{n=0}^\infty \brf{a\sn}{n!} x\tu n, \qquad x\in \intvl\de
 \qquad \label{srs0}
 \eea
 Since a power series converges uniformly and absolutely in a sub-interval of 
 its interval
 of convergence \cite{wrsp}  {(Theorem 9, Chapter 11)}, %MDPI: Please confirm if Theorem 9, Chapter 11 belongs to \cite{wrsp}, not this manuscript.
 we can integrate the
 power series term by term to get the integral of the power series 
 \cite{wrsp}  {(Theorem 7, Chapter 11)} %MDPI: Please confirm if Theorem 7, Chapter 11 belongs to \cite{wrsp}, not this manuscript.
 within a sub-interval $[-\de',\de']
 \subset (-\de,\de)$.
 Thus,
 \bea
\dsf 1 x  \izx{\vpo(x)\, q(x)} \aea \sum_{n=0}^\infty \bsf{a\sn}{(n+1)!} x\tu n,
\qquad x\in [-\de',\de']\qquad \label{srs1}
 \eea
 Since series (\ref{srs0}) converges in $[-\de',\de']$, it satisfies the Cauchy
 convergence criterion, and,
 for any $\e>0$, we can find an $n\sz > 1$ such that, for $m,n > n\sz$ and 
 $x\in [-\de',\de']$,
 \beas
 \norm{\sum_{k=m}^n \bsf{a\sk}{(k+1)!} x\tu k} < \norm{\sum_{k=m}^n  \bsf{a\sk}{k!} x\tu k} < \e
 \eeas
 showing that the series (\ref{srs1}) converges in $[-\de',\de']$ by the 
 Cauchy~criterion.  
 
  Next, we define a function 
 \bea
 p(x) \ada \sum_{n=0}^\infty \bsf{a\sn}{n+1} \dsf{ x\tu n}{n!}, \qquad x\in [-\de',\de'],  
  \label{defp}
 \eea
 %%%%%%%%
%If a power series $S$  {has a radius} of convergence
% {$R$}, then we know that
% the series $S'$ obtained by differentiating
% $S$ term by {term
%% , as~well as the series $\ti S$ obtained by
%% integrating $S$ term by term, also converge and have
% also has the same 
% radius  of convergence  $R$}  (see \cite{wrsp}  {(Chapter 11)}). 
%%for 
%%a proof of the claim about $S'$; the claim  about 
%% $\ti S$ can be established using the comparison test). 
%%%%%%%%
If a power series $S$ converges with an interval of convergence
 $I$ {(e.g., $(-\delta,\delta)$)}, then we know that
 the series $S'$ obtained by differentiating
 $S$ term by term, as~well as the series $\ti S$ obtained by
 integrating $S$ term by term, also converge and have the same 
 interval of convergence $I$  (see \cite{wrsp}  {(Chapter 11)} for 
a proof of the claim about $S'$; the claim  about 
 $\ti S$ can be established using the comparison test). 
Therefore, we have  
 \beas
 \dbyd{^k \, p(0)}{x\tu k} \aea \dsf{a\sk}{k+1}
 \eeas
 and  (\ref{defp}) is a Taylor expansion of $p(x)$ about $x=0$, 
 making $p(x)$  an
 analytic function in  {$[-\de',\de']$}.  That is,
 \beas
 \dsf 1 x \izx{\vp\so(y)\, q(y)}
 \eeas
 is an analytic function in  {$[-\de',\de']$}.
% \cintvl{\de'}.  
 The~claim in the lemma
 follows from Lemma \ref{lem:l6}.  
\qed %\end{proof}

\noindent{\bf Proof of Lemma \ref{lem:l2}:} 
We define
 \beas
 A\sj(x; \vp)  \ada \izx {\dsf{\vp\sj(y)\, f(\vp(y)}{(y-1)\, y\, W(y)} }, \qquad 
 j = 1,2 
 \eeas
 Then \erf{ie} can be written as
 \beas
 \vp(x) \aea \vpw(x) A\so(x; \vp) - \vpo(x) A\sw(x; \vp) + \g \vpo(x)
 \eeas
 and 
 \beasm
\begin{array}{l}
{
 \vp\sb' \aea \vpw'  A\so(x;\vp\sb) - \vpo'  A\sw(x;\vp\sb) + \g \vpo', \label{phip}}
 \\
 {
 \vp\sb'' \aea \vpw''  A\so(x;\vp\sb) - \vpo''  A\sw(x;\vp\sb) + \g \vpo'' + \dsf{f(\vp\sb)}{x(x-1)}\nonumber
}
\end{array}
 \eeasm
 Defining the differential operator 
 \beas
 D \ada x(x-1) \dbyd{\sq}{x\sq} + (2x-1) \dbyd{} x + \a
 \eeas
 and differentiating at $x\in \intvl\e\setminus\lrc 0$ we obtain
 \bea
 D[\vp\sb(x)] \aea D[\vp\sw(x)] A\so(x;\vp\sb) - D[\vp\so(x)]\lrc{A\sw(x;\vp\sb) -\g}
 + f(\vp\sb(x))
 \label{ome}
 \eea
 $\vp\so$ and $\vp\sw$ being solutions of the hypergeometric equation
 $D[\vp\sw]=D[\vp\so]=0$, and the first two terms vanish and we get the 
 equation
 \bea
 D[\vp\sb(x)] \aea f(\vp\sb(x)), \qquad x\in \intvl\e\setminus\lrc 0
 \label{dqe}
 \eea
 which is \erf{deqx}. Defining
 \beas
 F(x) \ada x(x-1)\vp\sb''(x) + (2x-1)\vp\sb' (x)+ \a\sin(\vp\sb(x))
 \eeas
 \erf{dqe} also shows that $F(x) = 0$ for $x\in \intvl\e\setminus\lrc 0$. 
 Therefore,
 \beas
 \lt x 0 \ F(x)\aea 0
 \eeas
 Therefore we can define $F(0) \dfn 0$, and \erf{deqx} is satisfied  for all $x\in\intvl\e$ \mbox{by $\vp\sb(x)$.}
 
 Next, we show that $\vp\sb'$ is well defined and continuous in \intvl\e.   
 As shown above, $\vp\sb$ satisfies \erf{deqx}, which can be rewritten as
 \bea
 \vp\sb'(x) \aea -\bsf 1 {x(x-1)} \izx {\a\, \sin(\vp\sb(y))}
 \label{eqvpp}
 \eea 
From Lemma \ref{lem:l4p5} we conclude that 
$\vp\sb'(x)$ is well defined and continuous at every $x\in\intvl\e\setminus\lrc 0$.
Further recall that, if a function $v(x)$ is
 continuous in \intvl\e$\setminus\lrc 0$ and~$\lt x 0 v(x) \deq L$, then defining
 $v(0)\deq L$ makes $v(x)$ well defined and  continuous.  
 (Given a $\de>0$, there exists a $0<\beta<\e$, and, for $\norm{x} < \beta$, $\de>\norm{L-v(x)} =\norm{v(0)-v(x)}$.)  
 Thus, to establish the continuity of $\vp\sb'(x)$ over \intvl\e, it is
 sufficient to show that $\lt x 0 \vp'(x)$ exists.  Using Mean Value Theorem we 
 have
 \beas
 \lt x 0 \vp\sb'(x) \aea \lt x 0 \lrc{-\bsf 1{x(x-1)} \izx {\a \, \sin(\vp\sb(y))}}\\
 \aea \lt x 0 \lrc{-\bsf 1{x(x-1)} \, x \, \a\, \sin(\vp\sb(x')),} 
 \deq \a\, \sin(\vp\sb(0))\qquad x'\in(0,x) 
 \eeas
 Thus we define 
 \beas
 \vp\sb'(0) \ada \lt x 0 \vp\sb'(x) \deq \a\, \sin(\vp\sb(0))  
  \eeas
  and with that definition
 $\vp\sb'(x)$ is well defined and continuous for all $x\in\intvl\e$. 
\qed %\end{proof}

 \noindent%{\bf Proof:} 
{\bf Proof of Lemma \ref{lem:l3}:} 
 Using the notation in Lemma \ref{lem:l6} we can write \erf{ie2} as 
 \bea
 \vp(x) \aea V(x;\vp) + \g \vpo(x) \label{cie}
 \eea
We show that \erf{cie} has the required solution by showing that the 
following recurrence equation converges uniformly to a fixpoint, 
starting with an analytic function $\vp\sps 0(x)$.
\bea
\vp\sps n(x) \aea V(x; \vp\sps{n-1}) + \g \vpo(x) \label{re}
\eea
From Lemma \ref{lem:l7} we know that, if $\vp\sps{n-1}(x)$ is analytic in
some interval \intvl\rh, where $0< \rh < \ruh<1$, then $\vp\sps n(x)$ is also 
analytic in the same interval. 
It then follows that, if we start with a function $\vp\sps 0(x)$ that is analytic
in some \intvl\rh, with~$0<\rh< \ruh< 1$, then all of the functions
$\vp\sps n(x), \ n=1,2,\ldots$ generated by
the recurrence equation \erf{re} will also be analytic in \intvl\rh.  Since 
$\vp\so(x)$ is analytic in $\intvl 1$  
we choose our $\vp\sps 0 (x) \dfn \vp\so(x)$. 

Since the sequence of functions $\vp\sps n(x), \ n=0, 1, 2, \ldots$ are analytic in
\intvl\rh, they are bounded in every closed sub-interval \cintvl\e $\subset$
\intvl\rh, $0<\e<\rh$.  We fix such an $\e$ \mbox{and define} 
\beasm
M\sn(\e) \ada \max_{-\e\leq x\leq \e} \norm{\vp\sps n(x) - \vp\sps{n-1}(x)}
\deq \max_{-\e\leq x\leq \e} \norm{V(x;\vp\sps{n-1}) - V(x;\vp\sps {n-2})}
\qquad \label{mn}
\eeasm
Further,   
 $h(x)$ and $\vpo(x)$ and $1/x (x-1) W(x)$ are 
analytic functions in \intvl \rh, and, if $\rh$ is chosen to be sufficiently
small as we assume it is, then, from Lemma \ref{lem:l5}, we know that
$1/x (x-1) W(x)$ is analytic in $\intvl\rh$ as well.  Therefore,
$h(x), \vpo(x)$ and $1/x (x-1) W(x)$ are also bounded in \cintvl\e,  and~we
define
\beasm
C\uu h(\e) \aea \max_{-\e\leq x\leq \e} \norm{h(x)}, \ 
C\uu p(\e) \deq \max_{-\e\leq x\leq \e} \norm{\vpo(x)}, \ 
C\uu w(\e) \deq \max_{-\e\leq x\leq \e} \norm{\dsf 1{(x-1) \, x\, W(x)}} 
\qquad \label{con}
\eeasm
We note that
\bea
\begin{array}{l}
{
\norm{f(\vn)-f(\vnm)} \aea \a \norm{\vn - \sin(\vn) - \vnm + \sin(\vnm)}
}
\\
{ \ \ \ \ \ \ \ \ \ \ \ \ \ \ \ \ \ \ \ \ \ \ \ \ \ \ \ \ \ \ \ \ \ \ \ \ \ \ \ 
\alea \a \norm{\vn-\vnm} + \a \norm{\sin(\vn) - \sin(\vnm)}
}\\
{ \ \ \ \ \ \ \ \ \ \ \ \ \ \ \ \ \ \ \ \ \ \ \ \ \ \ \ \ \ \ \ \ \ \ \ \ \ \ \ 
\aea \a \norm{\vn-\vnm} + \a \norm{\cos(\vp\sps{n,n-1}) \lrs{\vn -\vnm}}
}\\
{ \ \ \ \ \ \ \ \ \ \ \ \ \ \ \ \ \ \ \ \ \ \ \ \ \ \ \ \ \ \ \ \ \ \ \ \ \ \ \ 
\alea 2\a \norm{\vn-\vnm} 
}
\end{array}
\label{if}
\eea
In the third step, we have used the Mean Value Theorem, and~$\vp\sps{n,n-1}
\deq (1-\m)\, \vn + \m\, \vnm$ for some $\ 0\leq \m \leq 1$. 

Using Lemma \ref{lem:l6} and Eq.  \eqref{mn}--(\ref{if}), we have
\beas
\begin{array}{l}
{
M\sn(\e) \alea 
2 C\uu h(\e) C\uu p(\e) C\uu w(\e) \lrs{2\a M\uu{n-1}(\e)} \, \e + 
C\uu p(\e)\sq  C\uu w(\e) \lrs{2 \a M\uu{n-1}(\e)} \e
}
\\
{\ \ \ \ \ \ \ \ \ \ \ \,\, 
\aea \lrc{ \lrs{4 C\uu h(\e) C\uu p(\e)  C\uu w(\e)\, \a + 2 C\uu p(\e)\sq C\uu w(\e)\, \a} \e } M\uu{n-1}
}
\end{array}
\eeas
As $\e$ decreases, so do $C\uu h(\e), C\uu p(\e), C\uu w(\e)$.  Therefore
\beas
\lt \e 0   \lrs{4 C\uu h(\e) C\uu p(\e)  C\uu w(\e)\,\a + 2 C\uu p(\e)\sq C\uu w(\e)\,\a} \e \aea 0
\eeas
which means that, if we choose a small $\e\str >0$, then 
\beas
\lrs{4 C\uu h(\e\str) C\uu p(\e\str)  C\uu w(\e\str)\,\a + 2 C\uu p(\e\str)\sq C\uu w(\e\str)\,\a} \e\str < 1
\eeas
and, as $n\ra \infty,$ $M\sn(\e\str) \ra 0$, which means that, in the interval 
\cintvl{\e\str}, the sequence of analytic functions $\vp\sps 0, \vp\sps 1, 
\ldots$ converges uniformly to a fixpoint of the recurrence relation \erf{re}.

Since analytic functions are continuous, the~uniform limit of a sequence of analytic functions over \cintvl{\e\str} is a continuous function over \cintvl{\e\str}
\cite{tama} {(\S 20.3, Theorem III)}.  %MDPI: Please confirm if this Theorem III belongs to ref {tama}, not this manuscript.
Further, since all of the functions $\vp\sps 0, \vp\sps 1, 
\ldots$ are bounded over \cintvl{\e\str}, so is their uniform limit $\vp$ since, for~any
$\de>0$, we can find a $n\sz$ such that, for all $n\geq n\sz$
\beas
\max_{-\e\str \leq x\leq \e\str}\norm{\vp\sps{n\sz}(x) -\vp(x)} < \de, 
\eeas
The boundedness of $\vp(x)$ over \cintvl{\e\str} then follows from the boundedness of $\vp\sps{n\sz}$ over the~interval.

Thus, we conclude that the uniform limit of the sequence of analytic
functions bounded over \cintvl{\e\str} yields a continuous bounded
function, which being a fixpoint of \mbox{\erf{re}} is a solution of \erf{ie2}. 
Since we have demonstrated the existence of a solution of \erf{ie2}
for every value of $\g\in \mathbb R$,
we have proved the existence of a 1-parameter family of solutions
for \erf{ie2}.  
\qed %\end{proof}

\section{{Beyond} the~Horizon}\label{app:a}

%\begin{proof}
{\bf Proof of Lemma \ref{lem:l1}:}
  Consider the function
\begin{eqnarray*}
V(z;\varphi,\varphi_z) := {\varphi_z^2 \over 2} + {\alpha (1-\cos(\varphi)) \over z^2 - 1}.
\end{eqnarray*}
Differentiating $V$ and using \erf{diffeqnz1} we get
\begin{eqnarray}
{dV \over dz} & = & {\partial V \over \partial \varphi}\cdot \varphi_z +
{\partial V \over \partial \varphi_z}\cdot \varphi_{zz} + {\partial V \over \partial z}
 = - {2z \over z^2 -1} \left[{\varphi_z^2 \over 2} + V \right].
\label{dVbydz}
\end{eqnarray}
Using the boundary condition (\ref{bcon}), we see that 
\begin{eqnarray*}
B(z;\varphi,\varphi_z) := (z^2-1)V(z;\varphi,\varphi_z) = {z^2-1 \over 2} \cdot \varphi_z^2 + \alpha \cdot (1-\cos(\varphi))
\end{eqnarray*}
is well defined at $z=1$.
Using Eq. ~(\ref{dVbydz}) and~(\ref{diffeqnz1})
we have
\begin{eqnarray}
{dB \over dz} = - z \varphi_z^2.
\label{dQbydz}
\end{eqnarray}
which shows that
$B(z;\varphi,\varphi_z)$ is a monotonically decreasing function of $z$ for~$z> 1$.  
Since $0 < \varphi(1) < \pi$, we have, for $z>1$,
\begin{eqnarray}
B(z; \vp, \vp_z) \leq B(1;\varphi,\varphi_z) = \alpha \cdot (1-\cos(\varphi(1))) < 2 \alpha  
\leqn{bat1}
\end{eqnarray}

Next, we show that $-\pi < \varphi(z) < \pi$ at every finite $z>1$.  Arguing by contradiction, let us
assume that $\varphi(z_*) = \pi$ (or $\varphi(z_*) = -\pi$) at~some finite $z_*>1$.  Then
\begin{eqnarray*}
B(z^*;\varphi,\varphi_z) = {(z_*^2-1) \varphi_z^2(z_*) \over 2} + \alpha \cdot (1-\cos(\varphi(z_*)))
\geq 2\alpha  
\end{eqnarray*}
which contradicts   
\ct{bat1}.  Therefore,  $-\pi < \varphi(z) < \pi$ for~all $z > 1$.

Finally we show that $\varphi(z)$ vanishes as $z \rightarrow \infty$. For~$z \gg 1$,
\erf{diffeqnz1} behaves effectively as
\begin{equation}
z^2 \varphi_{zz} + 2z \varphi_z + \alpha \sin(\varphi) = 0
\label{diffeqn-at-large-z}
\end{equation}
Defining $t:= \ln z$ and   $X(t) := \varphi(e^t)$,
Eq.~ (\ref{diffeqn-at-large-z}) can be rewritten as
\begin{equation}
X_{tt}(t) + X_t(t) + \alpha \sin(X(t)) = 0
\label{damped-motion}
\end{equation}
If we interpret $X(t)$ as the position of a particle of unit mass at time $t$,
then Eq.~ (\ref{damped-motion}) describes motion of the particle in a potential
$\alpha (1-\cos(X))$ in the presence of frictional force. If~the particle starts
at $0 \leq X(0)=\gamma < \pi$, then the preceding argument shows that $-\pi < X(t) < \pi$
for all $t>0$. Owing to the frictional force the particle eventually comes to
rest at the local minimum of the potential, $\varphi=0$, as~$t,z \rightarrow \infty$. 
\qed %\end{proof}

\section{{Large}~Solitons}\label{app:d}
\begin{lemma}\label{lem:le9}
Let $\psi(t)$ be a solution of
\begin{equation}
\ddot{\psi} = \alpha e^{-t} \psi; \qquad \alpha \in {\bf R}
\label{linearizedeqn}
\end{equation}
that satisfies the boundary condition $\psi(t_0)\neq 0$ for~some $t_0 > 0$.  Then
\eas{
\displaystyle\lim_{t\rightarrow\infty} \psi(t) \neq 0
}
\end{lemma}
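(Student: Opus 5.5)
The plan is a proof by contradiction. Suppose $\psi$ solves \erf{linearizedeqn}, $\psi(t_0)\neq 0$, and $\lim_{t\to\infty}\psi(t) = 0$. I will show that $\psi$ must then vanish identically on a half-line $[T,\infty)$. By uniqueness for the linear ODE \erf{linearizedeqn}, whose coefficient $\alpha e^{-t}$ is smooth, this forces $\psi\equiv 0$, contradicting $\psi(t_0)\neq 0$. The argument does not use the sign of $\alpha$.

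\emph{Step 1: $\dot\psi(t)\to 0$.} Since $\psi(t)\to 0$, $\psi$ is bounded on $[t_0,\infty)$, say $|\psi|\le M$. Then $|\ddot\psi| \le |\alpha| M e^{-t}$ is integrable on $[t_0,\infty)$, so $\dot\psi(t) = \dot\psi(t_0)+\int_{t_0}^t \ddot\psi$ has a finite limit $L$. If $L\neq 0$, then $\psi(t) \sim L t$ would be unbounded, which contradicts $\psi\to 0$. Hence $L=0$.

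\emph{Step 2: integral representation.} Using $\dot\psi(\infty)=0$ and $\psi(\infty)=0$, integrate twice from $t$ to $\infty$:
\beas
\dot\psi(t) \aea -\int_t^\infty \ddot\psi(s)\,ds,\qquad
\psi(t) \deq \int_t^\infty (s-t)\,\alpha e^{-s}\psi(s)\,ds .
\eeas
The interchange of integration order (Fubini) is justified because $(s-t)e^{-s}$ is integrable and $\psi$ is bounded.

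\emph{Step 3: contraction estimate.} Put $S(T) := \sup_{s\ge T}|\psi(s)|$, which is finite. For $t\ge T$, since $\int_t^\infty (s-t)e^{-s}\,ds = e^{-t}$, Step 2 gives
\beas
|\psi(t)| \ \leq\  |\alpha|\, S(T)\int_t^\infty (s-t)e^{-s}\,ds \deq |\alpha| e^{-t} S(T) \ \leq\ |\alpha| e^{-T} S(T).
\eeas
Taking the supremum over $t\ge T$ yields $S(T)\le |\alpha|e^{-T}S(T)$. Choose $T>\max(t_0,\log|\alpha|)$ (any $T > t_0$ if $\alpha=0$), so that $|\alpha|e^{-T}<1$; then $S(T)=0$. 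Thus $\psi\equiv 0$ on $[T,\infty)$, so $\psi(T)=\dot\psi(T)=0$. Uniqueness for the initial value problem then gives $\psi\equiv 0$ on all of $\mathbb{R}$, contradicting $\psi(t_0)\ne 0$.

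The only delicate point is Step 1, together with the justification of the double integral in Step 2, i.e.\ ruling out a nonzero asymptotic slope. Boundedness of $\psi$ handles this directly. If one instead preferred a constructive route, the alternative is to build a solution $\psi_1\to 1$ by a contraction on $[T,\infty)$, obtain a second solution $\psi_2 \sim t$ by reduction of order, and observe that $c_1\psi_1+c_2\psi_2\to 0$ forces $c_1=c_2=0$. The direct estimate above is shorter, and it is the route I will follow.
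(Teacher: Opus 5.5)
Your proof is correct, and it takes a genuinely different route from the paper.

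The paper splits on the sign of $\alpha$:
\begin{itemize}
\item For $\alpha<0$ it uses that $e^t\bigl(\dot\psi^2/2+|\alpha|e^{-t}\psi^2/2\bigr)$ is nondecreasing. From this it argues that a solution tending to zero would have to oscillate about zero, with $|\psi|$ growing at successive turning points.
\item For $\alpha>0$ it first shows that decay would have to be monotone. It then derives $|\dot\psi|\le\alpha e^{-t}|\psi|$ from a double-integral comparison and integrates this to get the explicit lower bound $|\psi(t)|\ge|\psi(t_0)|e^{-\alpha e^{-t_0}}$.
\end{itemize}

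You instead handle all $\alpha$ at once. Decay plus integrability of $\ddot\psi$ forces $\dot\psi\to 0$. That yields the Volterra representation $\psi(t)=\int_t^\infty (s-t)\,\alpha e^{-s}\psi(s)\,ds$ taken from infinity. The tail estimate $S(T)\le|\alpha|e^{-T}S(T)$ then kills $\psi$ on $[T,\infty)$, and uniqueness carries $\psi\equiv 0$ back to $t_0$.

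What each approach buys:
\begin{itemize}
\item The paper's $\alpha>0$ case gives a quantitative lower bound, and its $\alpha<0$ case gives a qualitative picture of the dynamics. Both, however, rest on sign-specific convexity and energy arguments.
\item Your argument is shorter and independent of the sign of $\alpha$. It uses only the bound $|\ddot\psi|\le|\alpha|e^{-t}|\psi|$ together with uniqueness.
\end{itemize}

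That last feature makes your method more robust. It applies verbatim to the nonlinear equations $\ddot\psi=\pm\alpha e^{-t}\sin\psi$, since $|\sin\psi|\le|\psi|$. With minor changes it also handles the exact equation $\frac{d}{dw}\bigl[(1-e^{-w})\varphi_w\bigr]=\alpha e^{-w}\sin\varphi$, where the contraction factor becomes $|\alpha|e^{-T}/(1-e^{-T})$. This would let Lemma~\ref{lem:le10} drop the approximations $1-e^{-w}\approx 1$ and $\sin\varphi\approx\varphi$ it makes before invoking this lemma.
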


\noindent{\bf Proof:} 
%\textls[-20]
{The lemma is trivial when $\alpha=0$.  We consider the two
cases $\alpha < 0 $ and $\alpha > 0$ separately.}

Let $\alpha < 0$. Define
\eas{V(\psi,\dot\psi,t) := {\dot\psi^2\over 2} + {|\alpha| \, \psi^2\,  e^{-t} \over 2}}
Using \eqnref{linearizedeqn} we have
\eas{
{dV\over dt} = {\partial V \over \partial \dot\psi} \cdot \ddot\psi + {\partial V \over \partial \psi} \cdot \dot\psi
+ {\partial V \over \partial t} = - V + {\dot \psi^2 \over 2}
}
Multiplying both sides with $e^t$ and rearranging we get
\eas{ {d \left[\, e^t \, V \right] \over dt} = {e^t \, \dot\psi^2 \over 2}}
which shows that $e^t\, V$ increases monotonically with $t$. Therefore, if~$\psi$
oscillates about zero, with~$\dot\psi$ vanishing at $t_1 \slt  t_2 \slt \ldots$, then
\bea
|\psi(t_1)| < |\psi(t_2)| < \ldots \label{divo}
\eea
That is, if~$\psi$ oscillates about zero, then the oscillation is divergent.  We
show that, if $\psi(t_0) \neq 0$ and $\lim_{t\rightarrow \infty} \psi(t) = 0$,
then $\psi$ must oscillate about zero.   Then, from~(\ref{divo}), we obtain a contradiction,
which will  prove the~claim.

First we consider the case $\psi(t_0) > 0$.  If~ $\lim_{t\rightarrow \infty} \psi(t) = 0$,
then  there exists a \mbox{$\ti t \in[t\sz,\infty)$} at which 
$\psi(\ti t) > 0$ and $\dot\psi(\ti t) < 0$. 
Since $\alpha <0$,
\eqnref{linearizedeqn} implies that $\ddot\psi < 0$ when $\psi>0$. 
Therefore, $\psi(t') = 0$
for some $\ti t < t' < \ti t + {\psi(\ti t)\over \rule{0pt}{10pt}|\dot\psi(\ti t)|}$. Further,
$|\dot\psi(t')| > |\dot\psi(t_0)|$.  If~$\lim_{t\rightarrow \infty} \psi(t) = 0$, then 
we must have $\psi(t'') < 0, \ \dot\psi(t'') > 0$ at some $t''>t'$ and~subsequently
$\psi(t''') = 0$ for~some $t'''>t''$. Again, $|\dot\psi(t''')| > |\dot\psi(t'')|$. Repeating the above argument we
conclude that $\psi$ must oscillate about zero, and, from (\ref{divo}), we conclude
that the oscillations are divergent, 
which contradicts the assumption that
$\psi$ approaches zero asymptotically.  A~similar argument shows that $\psi$ oscillates if $\psi(t_0) < 0$
as well, completing the proof of the lemma for $\alpha < 0$.

Let $\alpha >0$. We will restrict the following argument to the
case $\psi(t_0) > 0$.  The~argument for $\psi(t_0) < 0$ is similar to the one presented
below.

We start by showing that, if $\psi$ is to vanish asymptotically, then $\dot\psi(t_0)<0$, and,
for $t>t_0$, $\psi(t)$ must be monotonic. If~$\dot\psi(t_0)\geq 0$, then \eqnref{linearizedeqn} implies that $\dot\psi(t) > 0$ for
all $t> t_0$, contradicting the assumption that $\psi$ vanishes asymptotically. To~prove the
second claim, we argue by contradiction. If~possible,  let $\psi(t') = 0$ for some $t'>t_0$.
Further assume that we have
chosen the smallest   $t'>t_0$ at which $\psi$ vanishes. Clearly $\dot\psi(t') \leq 0$.
If $\dot\psi(t') = 0$, the~solution to \eqnref{linearizedeqn} with initial conditions
$\psi(t') = \dot\psi(t') = 0$ would not be unique in the neighborhood of $t'$ (since
$\psi\equiv 0$ is a solution). On~the other hand, if~$\dot\psi(t') < 0$, then $\ddot\psi(t)<0$
for all $t>t'$, making it impossible for $\psi \rightarrow 0$ as $t\rightarrow \infty$.
Therefore, we conclude that $t'$ cannot exist and $\psi$ must decay to zero~monotonically.

Next we show that, for all $t>t_0$,
\eae{|\dot\psi(t)| \leq \alpha\, e^{-t}\, |\psi(t)| \label{inequality1}}
Using \eqnref{linearizedeqn} we have, for any $t'>t$,
\eas{\psi(t') = \int_t^{t'} \int_t^\tau \alpha\, e^{-\lambda} \, \psi(\lambda)\,  d\lambda \, d\tau +
(t'-t)\,  \dot\psi(t) + \psi(t)}
If $\psi(t_0) > 0$, then, using the monotonic decay of $\psi$ established above, for~any $t'>t$,
\eas{0 & < & \psi(t') < \a\,\psi(t)\,   \int_t^{t'} \int_t^\tau   e^{-\lambda } \, d\lambda \, d\tau +
(t'-t)\,  \dot\psi(t) + \psi(t) \nonumber \\
& \leq & \a\, \psi(t) \, e^{-t} \int_t^{t'} d\tau + (t'-t)\dot\psi(t) + \psi(t)\nonumber\\
& < & \left\{\alpha\, \psi(t)\, e^{-t} + \dot\psi(t)\right\} (t'-t) + \psi(t)}
Recalling that $\dot\psi(t) <0$, the~claim~follows.

Integrating inequality (\ref{inequality1}) we get
\eas{
-\int_{\psi(t_0)}^{\psi(t)} \frac{d\psi}{\psi} \leq \alpha \int_{t_0}^t e^{-\tau}\, d\tau
}
or
\eae{
\psi(t) \geq \left\{\psi(t_0) e^{-\alpha\, e^{-t_0}}\right\} e^{\alpha\, e^{-t}} \label{inequality2}
}
from which we obtain a lower bound
\eas{|\psi(t)| \geq \left\{|\psi(t_0)| e^{-\alpha\, e^{-t_0}}\right\}}
for all $t>t_0$, contradicting the assumption that $\psi$ vanishes asymptotically.
The contradiction completes the~proof.

%\textls[-20]
{The preceding argument can be used to establish inequalities
(\ref{inequality1}) and (\ref{inequality2})  even when $\psi(t_0)<0$. Therefore
a solution satisfying  $\psi(t_0)<0$ cannot vanish asymptotically either.} 
\qed %\end{proof}

\begin{lemma}\label{lem:le10}
Let $\varphi$ be a solution of
\begin{equation}
  (z^2-1)\varphi_{zz} + 2 z \varphi_z + \alpha \sin \varphi = 0
  \label{diffeq2}
\end{equation}
that satisfies the boundary condition $\varphi(1)=n\pi$. Then
$\varphi(z) \equiv n\pi$.
\end{lemma}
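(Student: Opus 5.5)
The plan is to reduce the statement to Lemma~\ref{lem:le9} by a change of independent variable that turns the singular point $z=1$ into $t=\infty$. I will treat the two sides $z>1$ and $z<1$ separately, since Eq.~(\ref{diffeq2}) is singular at $z=1$. On each side I will show that $\psi:=\vp-n\pi$ vanishes identically near $z=1$. Ordinary uniqueness for the regular ODE then propagates $\vp\equiv n\pi$ along each interval between singular points. At $z=-1$ this uses continuity of $\vp$ and of $\vp_z$, and Corollary~\ref{cor:c2} applied at $z=-1$ gives $\vp_z(-1)=0$.

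\textbf{Step 1 (change of variable).} Note that $(z\sq-1)\vp_{zz}+2z\vp_z=\f{d}{dz}\lrs{(z\sq-1)\vp_z}$. For $z>1$ I introduce $t$ by $e^{-t}=(z-1)/(z+1)$, so that $t\ra\infty$ as $z\ra1^+$ and $(z\sq-1)\f{d}{dz}=-2\f{d}{dt}$. A short computation turns Eq.~(\ref{diffeq2}) into
\beas
\ddot\psi \aea -\a\,\dsf{e^{-t}}{(1-e^{-t})\sq}\,(-1)^n\sin\psi .
\eeas
For $z<1$ I would use $e^{-t}=(1-z)/(1+z)$, which gives the same equation with the sign of the right-hand side reversed. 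The hypothesis $\vp(1)=n\pi$ means exactly $\lt t\infty\psi(t)=0$. Corollary~\ref{cor:c2} gives $\vp_z(1)=0$, which is consistent with this.

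\textbf{Step 2 (the comparison).} Suppose, for contradiction, that $\psi(t_0)\neq0$ for some large $t_0$. Write the equation as $\ddot\psi=\sigma\,\a\,c(t)\,e^{-t}\psi$, where $\sigma=\pm1$ and
\beas
c(t)\aea \dsf{\sin\psi}{\psi}\,(1-e^{-t})^{-2} .
\eeas
Because $\psi\ra0$ and $t\ra\infty$, the factor $c(t)$ is continuous and lies in $[\f12,2]$ for all $t\geq t_0$ once $t_0$ is large. I will then redo the proof of Lemma~\ref{lem:le9} with the constant coefficient $\a$ replaced by $\a c(t)$; that proof only ever uses sign information and bounds on this coefficient.

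In the case $\sigma\a c<0$ (the oscillatory sign), I would use the energy $\f12\dot\psi\sq+\f12|\a|c\,e^{-t}\psi\sq$. The monotonicity argument needs a lower bound on $-\f{d}{dt}(c\,e^{-t})$. The defect from the non-constant $c$ is of relative size $O(|\psi\dot\psi|)+O(e^{-t})$, so it can be absorbed for large $t$. In the case $\sigma\a c>0$, the same argument as in Lemma~\ref{lem:le9} gives that $\psi$ is monotone and $|\dot\psi|\leq2\a e^{-t}|\psi|$. Integrating this inequality yields a positive lower bound on $|\psi|$, which contradicts $\psi\ra0$.

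\textbf{Main obstacle.} The difficult step is the oscillatory case. In Lemma~\ref{lem:le9} the identity $\f{d}{dt}(e^tV)=e^t\dot\psi\sq/2\geq0$ relies on the coefficient being exactly $e^{-t}$. With a variable $c(t)$ I expect to need a modified Lyapunov function $e^t V/c(t)$ and a check that $\dot c/c=o(1)$ along the solution. If that becomes messy, my fallback is to avoid the energy method altogether: use a Volterra integral form, $\psi(t)=-\int_t^\infty(s-t)\,\sigma\a c(s)e^{-s}\psi(s)\,ds$, which is valid because $\psi\ra0$ and $\dot\psi\ra0$. Gronwall's inequality on $\sup_{s\geq t}|\psi(s)|$ then forces $\psi\equiv0$ on $[t_0,\infty)$ directly. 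That contradicts $\psi(t_0)\neq0$ and handles both signs at once.
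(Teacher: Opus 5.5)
Your proposal is correct. Its main line is the paper's own strategy. You send the singular point $z=1$ to $t=\infty$, observe that the equation becomes $\ddot\psi\approx\pm\alpha e^{-t}\psi$ there, and use the argument of Lemma~\ref{lem:le9} to show that a nontrivial solution cannot decay to zero. Your factor $(-1)^n$ and the two sides $z>1$, $z<1$ cover the paper's four cases $n=0,1$, $x\to 0^{\pm}$.

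Where you differ is rigor, exactly at the step the paper glosses. The paper sets $w=-\log x$ and says the equation ``effectively becomes'' $\ddot\varphi=\alpha e^{-w}\sin\varphi$, and then ``effectively'' $\ddot\varphi=\alpha e^{-w}\varphi$. It then invokes Lemma~\ref{lem:le9}, which is proved only for exact solutions of that constant-coefficient linear equation, and $\varphi$ is not one. Your exact substitution $e^{-t}=(z-1)/(z+1)$ and the coefficient $c(t)$ close this gap:
\begin{itemize}
\item In the monotone case, the argument of Lemma~\ref{lem:le9} uses only $0<c\le 2$.
\item In the oscillatory case, your guess $e^tV/c$ is exactly the right Lyapunov function. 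For $\ddot\psi=-a\psi$ with $a=|\alpha|c\,e^{-t}$ and $E=\frac12\dot\psi^2+\frac12 a\psi^2$, one has $\frac{d}{dt}(E/a)=-\dot a\,\dot\psi^2/(2a^2)\ge 0$ whenever $\dot a\le 0$, and $\dot a/a=-1+\dot c/c\to -1$. At turning points $E/a=\frac12\psi^2$, so the amplitude cannot shrink.
\end{itemize}

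Your Volterra fallback is a genuinely different and cleaner route. Set $S(t)=\sup_{s\ge t}|\psi(s)|$. The bound $|\sin\psi|\le|\psi|$ and $\int_t^\infty (s-t)e^{-s}\,ds=e^{-t}$ give $S(t)\le 2\alpha e^{-t}S(t)$ for large $t$, hence $\psi\equiv 0$ once $2\alpha e^{-t}<1$. This needs no sign case split, no lower bound on $c$, and no appeal to Lemma~\ref{lem:le9}, so it is the version worth writing up.

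A few small repairs are needed.
\begin{itemize}
\item The representation is $\psi(t)=+\int_t^\infty (s-t)\ddot\psi(s)\,ds$. Your minus sign is harmless, since only absolute values are used.
\item You should say why $\dot\psi\to 0$. Since $|\ddot\psi|\le Ce^{-t}$ is integrable, $\dot\psi$ has a limit, and that limit must vanish because $\psi\to 0$.
\item For $z<1$ the coefficient is $e^{-t}/(1+e^{-t})^2$, not the $z>1$ factor with its sign flipped. This changes nothing.
\item At $z=-1$ you cannot continue by ordinary uniqueness from $(\varphi(-1),\varphi_z(-1))=(n\pi,0)$. The point $z=-1$ is singular, Picard--Lindel\"of does not apply there, and uniqueness at a singular point is exactly what your $t\to\infty$ argument supplies. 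Instead, apply that argument to $\varphi(-z)$, which solves the same equation and equals $n\pi$ at $z=1$ by continuity. (The paper proves uniqueness only near $z=1$ and leaves the propagation to all $z$ implicit.)
\end{itemize}
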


\noindent{\bf Proof:} 
Clearly $\varphi(z) \equiv n\pi$ is a solution
of Eq.~ (\ref{diffeq2}) that satisfies the boundary condition $\varphi(1)=n\pi$.
Therefore all we need to show is that $\varphi(z) \equiv n\pi$ is the only solution satisfying the
given boundary condition. If~$\varphi(z)$ is a solution of Eq.~ (\ref{diffeq3}),
then so is $\varphi(z) + 2m\pi$, where $m$ is an integer. Therefore it is sufficient
to  establish   uniqueness of the solution $\varphi(z) \equiv n\pi$ for~$n=1$ and $n=0$.

It is convenient to work with the hypergeometric coordinate $x= \frac{1-z}{2}$,
in terms of which Eq.~ (\ref{diffeq2}) can be rewritten as
\begin{equation}
  x(x-1)\varphi_{xx} + (2x-1) \varphi_x + \alpha \sin \varphi = 0.
  \label{diffeq3}
\end{equation}
The singularity at $z=1$ in Eq.~ (\ref{diffeq2}) corresponds to the singularity at
$x=0$ in~\mbox{Eq.~ (\ref{diffeq3}).}  We prove the uniqueness of the solution $\varphi\equiv n\pi$ in a
neighborhood ($-\epsilon,\epsilon$) for~some sufficiently small $\epsilon>0$ by
showing that, if $\varphi(\epsilon) \neq 0$ (resp. $\varphi(-\epsilon)\neq 0$),
then $\varphi(0) \neq 0$.

First we consider the limit $x\rightarrow 0^+$. It is convenient to change the
coordinate to $w = -\log(x)$. As~$x\rightarrow 0^+$, $w\rightarrow \infty$.
In terms of variable $w$, \eqnref{diffeq3} can be
\mbox{rewritten as}
\eas{
{d\over dw}\left[(1-e^{-w}){d\varphi\over dw} \right]=\alpha \, e^{-w} \sin(\varphi)
}
If $\epsilon$ is chosen to be sufficiently small, then $w_\epsilon := - \log(\epsilon) \gg 1$
and the above differential equation effectively becomes
\eae{
{d^2\varphi\over dw^2} = \alpha e^{-w}\sin(\varphi)\label{diffeqn5}}

Arguing by contradiction,
if $\varphi(w) \rightarrow 0$, as~$w\rightarrow \infty$, then,
for any $\delta>0$, it is possible to find a $w_\delta > w_\epsilon$ such that
$|\varphi(w)| < \delta$ for all $w\geq w_\delta$.  If~$\delta$ is chosen to be sufficiently small,
then $\sin(\varphi) \approx \varphi$ for $w\geq w_\delta$, and~the above differential
equation effectively becomes
\eae{{d^2\varphi\over dw^2}=\alpha \, e^{-w}  \varphi\label{diffeqn4}}
If $\varphi(w_\epsilon) \neq 0$, then there can be no finite $w'\geq w_\epsilon$ such that
$\varphi(w)=0$ for~all $w>w'$; if such a $w'$ exists, then the solution to
\eqnref{diffeqn4} would not be unique in an open neighborhood of the smallest such
$w'$. Therefore, there must be a $w^*>w_\delta$ at~which $\varphi(w^*) \neq 0$.
Lemma \ref{lem:le9} then implies that $\varphi\not\rightarrow 0$ as $w\rightarrow \infty$,
contradicting our assumption.  The~preceding argument shows that, if $\varphi(x=0) = 0$, then
$\varphi(x) =0$ for $0\leq x\leq \epsilon$, establishing uniqueness in [$0,\epsilon$)
for $n=0$.

Uniqueness in [$0,\epsilon$) for $n=1$ follows immediately by observing that, if
$\varphi$ is a solution of \eqnref{diffeqn5}, then   $\psi:=\pi-\varphi$
satisfies
\eae{{d^2\psi\over dw^2} = -\alpha e^{-w}\sin(\psi)\label{diffeqn6}}
If $\varphi(\epsilon) \neq \pi$ and $\varphi(w) \rightarrow \pi$ as $w\rightarrow \infty$, then $\psi(\epsilon) \neq 0$
and $\psi(w) \rightarrow 0$ as $w\rightarrow \infty$. Arguing as we did above,
the \eqnref{diffeqn6} can be replaced by
\eas{{d^2\psi\over dw^2} = -\alpha e^{-w} \psi }
for  $w>w_\delta$.  As~we showed above, we can find a $w^* > w_\delta$
at which $\psi(w^*) \neq 0$. Since $\psi(w^*)\neq 0$ for some
finite $w^*$, Lemma \ref{lem:le9} then implies that $\psi(w) \not\rightarrow 0$ as $w\rightarrow \infty$.

In order to investigate the limit $x\rightarrow 0^-$ it is convenient to set
$w=-\log(-x)$.  In~terms of $w$, \eqnref{diffeq3} can be
rewritten as
\eas{{d\over dw}\left[(1+e^{-w}) {d\varphi\over dw}\right] = - \alpha e^{-w} \sin(\varphi)}
Choosing $\epsilon$ to be sufficiently small as we did before, $w_{\epsilon}\dfn 
-\log(\e)$ can be made
sufficiently large, and, in the interval $(w_\e,\infty)$, the above equation effectively becomes
\eas{{d^2\varphi\over dw^2} = -\alpha e^{-w}\sin(\varphi)}
The arguments presented for $n=0$ and $n=1$ over the interval [$0,\epsilon$)
establish uniqueness for $n=1$ and $n=0$, respectively, over the interval ($-\epsilon,0$]. 
\qed %\end{proof}

\begin{lemma}\label{lem:le12}
Let $\vp(x)$ be a solution of 
\bea
x(x-1) \vp\uxx(x)+ (2x-1)\vp\uxo(x) + \a\, \sin(\vp(x)) \aea 0 
\label{yaex}
\eea
satisfying the boundary condition $ 0 \slt \vp(0) \slt \pi$.
If $\vp(x) < \pi$ at every  $x\in[0,a] \subseteq [0,\f12]$,
then $\vp(x) > 0$  at every $x\in [0,a]$.
\end{lemma}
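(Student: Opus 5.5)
The plan is a monotonicity argument that uses the divergence form of \erf{yaex}. Since $x(x-1)\vp\uxx + (2x-1)\vp\uxo = \lrs{x(x-1)\vp\uxo}'$, equation \erf{yaex} can be written as
\beas
\dbyd{}{x}\lrs{p(x)\, \vp\uxo(x)} \aea \a \sin(\vp(x)), \qquad p(x) \dfn x(1-x).
\eeas
On $(0,\f12]$ we have $p(x)>0$, and $p(0)=0$. The solution is understood, as throughout this section, to be bounded and $C^1$ up to $x=0$. Then Lemma \ref{lem:l4} gives $\vp\uxo(0)=\a\sin(\vp(0))$, which is finite, so $p(x)\vp\uxo(x)\ra 0$ as $x\ra 0^+$. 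Integrating from $0$, exactly as in the proof of Lemma \ref{lem:l4} (cf.\ \erf{eqvpp}), gives for $x\in(0,a]$
\beas
\vp\uxo(x) \aea \dsf{1}{x(1-x)} \izx{\a\,\sin(\vp(y))}.
\eeas

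Next I argue by contradiction. Suppose $\vp$ is not strictly positive on $[0,a]$. Since $\vp(0)>0$ and $\vp$ is continuous, there is a smallest $x\so\in(0,a]$ with $\vp(x\so)=0$. On $[0,x\so)$ we have $0<\vp<\pi$: positivity holds by the minimality of $x\so$, and the upper bound holds by hypothesis. Hence $\sin(\vp(y))>0$ for $y\in[0,x\so)$. The integral formula then gives $\vp\uxo(x)>0$ for every $x\in(0,x\so]$, because the integrand is positive on a set of positive length and $\a>0$. So $\vp$ is strictly increasing on $[0,x\so]$, which yields $\vp(x\so)>\vp(0)>0$. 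This contradicts $\vp(x\so)=0$, and therefore $\vp>0$ on all of $[0,a]$.

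The one delicate point is the boundary behaviour at the singular point $x=0$, namely that the boundary term $p\vp\uxo$ vanishes there so that the integral representation holds starting from $0$. I would handle this by invoking Lemma \ref{lem:l4}, or the continuity of $\vp'$ at $0$ established in the proof of Lemma \ref{lem:l2}, for the bounded solutions under consideration. The restriction $a\le \f12$ is used only to guarantee $p>0$ on $(0,a]$, since $1-x\ge \f12$ there; any $a<1$ would serve equally well.
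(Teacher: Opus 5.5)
Your proposal is correct and is essentially the paper's own proof. The paper likewise takes a first zero $x_0\in(0,\tfrac12]$ with $0<\varphi<\pi$ on $[0,x_0)$, uses the integral form $\varphi'(x)=\frac{1}{x(1-x)}\int_0^x \alpha\sin(\varphi(y))\,dy>0$, and reaches the contradiction $\varphi(x_0)>\varphi(0)>0$; your justification that the boundary term $x(1-x)\varphi'(x)$ vanishes at $x=0$ for bounded solutions makes explicit a step the paper leaves implicit.
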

The lemma asserts that $\vp$ cannot reach 0 before it
reaches $\pi$.   That is, the~behavior shown in 
 {Figure} \ref{theforbidden} is forbidden by \erf{yaex}.

\vspace{-6pt}
\begin{figure}[H]
 \includegraphics[height=2in]{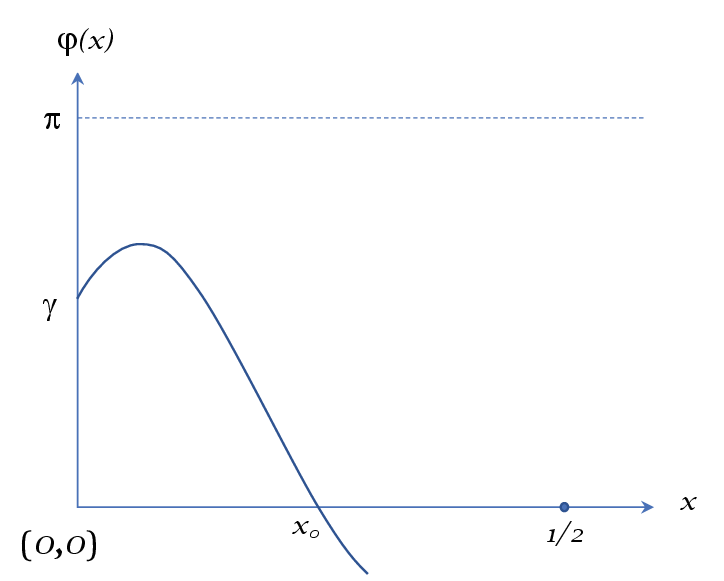}
\caption{{The} %MDPI: Please confirm whether an explanation of the dotted and solid lines needs to be added to the figure caption.
{behavior illustrated by the solid line is forbidden 
for a solution of} Eq. (\ref{yaex}).   {The dotted line represents}
 $\vp(x) \equiv \pi$.}  \label{theforbidden}
\end{figure}   

\noindent{\bf Proof:} 
 Assume to the contrary that there is a $x\sz\in (0,\f12]$
 at which $\vp(x\sz) = 0$ and \linebreak  $0<\vp(x) <\pi$ for all $x \in 
 [0,x\sz)$. 
 Then, rewriting \erf{yaex}, we see that 
 \beas
 \vp'(x) \aea \dsf 1{x(1-x)}  \int_0^{x}{\a \, \sin(\vp(y))} \, dy  \ > \ 0, 
 \qquad x\in(0,x\sz]  
 \eeas
Since $\vp' > 0$ at all $x\in (0,x\sz]$ and  $\vp(0)>0$, 
we conclude that $\vp(x\sz) > \vp(0) > 0$, which 
contradicts the assumption that $\vp(x\sz) \deq 0$.  The~
contradiction proves the claim.   
\qed %\end{proof}

Restating Lemma \ref{lem:le12} in static coordinates
we obtain the following corollary.
\begin{corollary}\label{cor:c4}
Let $\vp(z)$ be a solution of 
\bea
(z\sq-1) \vp\uu{zz}(z) + 2z\,\vp\uu z(z) + \a, \sin(\vp(z)) \aea 0 
\label{yaee}
\eea
satisfying the boundary condition $ 0 \slt \vp(1) \slt \pi$.
If $\vp(z) < \pi$ at every  $z\in[a,1] \subseteq [0,1]$,
then $\vp(z) > 0$  at every $x\in [a,1]$.
\end{corollary}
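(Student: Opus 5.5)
The plan is to transfer Lemma \ref{lem:le12} to the static coordinate through the change of variables $x=(1-z)/2$, which sends $[0,\f12]$ onto $[0,1]$. Put $\Psi(x)\dfn\vp(1-2x)$. Then $\vp_z=-\f12\Psi_x$, $\vp_{zz}=\f14\Psi_{xx}$, and $z^2-1=4x(x-1)$. Substituting these into \erf{yaee} turns it into \erf{yaex} for $\Psi$; this is the computation that already relates \erf{sfe-in-z} to \erf{sfe-in-x}.

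The hypotheses translate directly. The boundary condition $0<\vp(1)<\pi$ becomes $0<\Psi(0)<\pi$. Given $[a,1]\subseteq[0,1]$, set $a'\dfn(1-a)/2\in[0,\f12]$; the interval $z\in[a,1]$ is then exactly $x\in[0,a']$. The assumption $\vp(z)<\pi$ on $[a,1]$ says $\Psi(x)<\pi$ on $[0,a']\subseteq[0,\f12]$. Lemma \ref{lem:le12} then gives $\Psi>0$ on $[0,a']$, which is $\vp(z)>0$ on $[a,1]$. The statement writes ``$x\in[a,1]$'' at this point, and I read it as $z$.

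One point needs care. The proof of Lemma \ref{lem:le12} uses the integrated form $\Psi'(x)=\frac{1}{x(1-x)}\int_0^x\a\sin\Psi\,dy$. That identity requires $x(x-1)\Psi'\to0$ as $x\to0$, i.e.\ $\Psi$ is a solution that is well behaved (bounded, $C^1$) at the singular point. Without this the derivative formula fails, as the $\log|x|$ solutions show. So I take $\vp$ to be a solution defined and bounded at $z=1$, as in Corollary \ref{cor:c2}, or equivalently one of the solutions from Lemmas \ref{lem:l2}--\ref{lem:l3}. Under that standing assumption, $\frac{d}{dx}\bigl[x(x-1)\Psi'\bigr]=-\a\sin\Psi$, and integrating from $0$ gives the required formula. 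This regularity caveat is the only real obstacle; everything else is the substitution.
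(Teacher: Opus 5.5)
Your proposal is correct and matches the paper's argument: the paper obtains Corollary~\ref{cor:c4} simply by restating Lemma~\ref{lem:le12} in the coordinate $x=(1-z)/2$, under which $[a,1]$ corresponds to $[0,(1-a)/2]\subseteq[0,1/2]$. Your regularity caveat is the assumption the paper also makes implicitly, and it holds for any solution continuous at $z=1$. There $x(x-1)\Psi'$ has bounded derivative $-\alpha\sin\Psi$, so it has a limit as $x\to 0$, and a nonzero limit would make $\Psi$ blow up logarithmically.
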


\begin{lemma}\label{lem:le11}
For every bounded solution of 
\bea
(z\sq-1) \vp'' + 2z \vp' + \a\,\sin(\vp(z)) \aea 0
\label{yaez}
\eea
satisfying the boundary condition $0\,<\,\vp(1)\,<\, \pi$,
if   $0\, < \,\vp(z)\, < \,\pi$,  at~every  $z\in[0,1]$, then \erf{yaez} 
does not have a topologically nontrivial solution.
\end{lemma}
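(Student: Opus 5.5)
The plan is to reduce the lemma to the normalization of solitonic configurations described in Section~\ref{sec:lss}, and then to show that the hypothesis $0<\vp(z)<\pi$ on $[0,1]$ is incompatible with a nonzero charge.

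First I would fix the asymptotics on $z\ge 1$. Since $\vp$ is bounded in a neighbourhood of $z=1$, Corollary \ref{cor:c2} gives $\vp'(1)=-\f\a2\sin(\vp(1))$. Together with $0<\vp(1)<\pi$, this is exactly the hypothesis of Lemma \ref{lem:l1}. Hence $-\pi<\vp(z)<\pi$ for $z\ge1$ and $\lt z\infty\vp(z)=0$. Combined with the assumption on $[0,1]$, this gives $-\pi<\vp(z)<\pi$ for all $z\ge 0$, and in particular $0<\vp(0)<\pi$.

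Next I would invoke the normalization of Section~\ref{sec:lss}. A soliton of charge $Q$ may be taken, using the symmetries $\vp\ra-\vp$, $\vp\ra\vp+2n\pi$ and $z\ra -z$ of \erf{yaez}, to satisfy $\lt z{-\infty}\vp=-2\pi Q$, $\vp(0)=-\pi Q$ and $\lt z\infty\vp=0$. The point to justify is that this normalization is compatible with the solution we are given, i.e.\ that it does not move the value at $z=0$ or the data at $z=1$ out of the stated ranges. I would argue as follows. The map $\vp(z)\mapsto -\vp(-z)-2\pi Q$ sends bounded solutions to bounded solutions and reverses the asymptotic values. The bounded solutions near $z=\pm1$ form the one-parameter families of Lemmas \ref{lem:l2} and \ref{lem:l3}, and $z=0$ is a regular point of \erf{yaez}. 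Using uniqueness of the bounded solution with prescribed value at the singular point (Lemma \ref{lem:le10} handles the degenerate values $n\pi$; for the others I would use the contraction estimate in the proof of Lemma \ref{lem:l3}, which gives uniqueness of the fixpoint), the symmetric soliton is pinned down by its data at $z=0$, and its value there must be $-\pi Q$.

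With this in place the conclusion is immediate. A topologically nontrivial solution with $\lt z\infty\vp=0$ would require $\vp(0)=-\pi Q$ with $Q\neq0$, i.e.\ $|\vp(0)|\ge\pi$. This contradicts $0<\vp(0)<\pi$, so $Q=0$ and the solution is topologically trivial.

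I expect the main obstacle to be the middle step: justifying rigorously that every charge-$Q$ bounded solution can be brought to the odd-symmetric form without leaving the class covered by the hypothesis. What I would try first is a direct argument on $[-1,0]$ rather than relying on the normalization. I would apply Corollary \ref{cor:c1} at $z=-1$ via the reflection $z\ra-z$, which shows that $\lt z{-\infty}\vp$ is the vacuum nearest to $\vp(-1)$. It then remains to show $-\pi<\vp(-1)<\pi$. If instead $\vp$ crossed $\pm\pi$ on $[-1,0]$, I would hope to rule this out with a monotonicity argument in the spirit of Lemma \ref{lem:le12}, run from the point $z=0$.
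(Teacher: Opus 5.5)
There is a genuine gap in your middle step. Nothing forces a charge-$Q$ bounded solution to satisfy $\varphi(0)=-\pi Q$. The map $T\varphi(z)=-\varphi(-z)-2\pi Q$ does produce a bounded solution with the same limits at $\pm\infty$, but equal asymptotics do not give $T\varphi=\varphi$. Since $z=0$ is a regular point, $T\varphi=\varphi$ holds if and only if $\varphi(0)=-\pi Q$, which is the very claim in question.

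Your uniqueness argument is therefore circular. Lemma \ref{lem:le10} and the contraction in Lemma \ref{lem:l3} pin down a bounded solution by its value at the singular point $z=1$. There, $\varphi(1)$ and $T\varphi(1)=-\varphi(-1)-2\pi Q$ need not agree unless you already know the solution is odd about $z=0$. The ``without loss of generality'' normalization of Section \ref{sec:lss} is an informal remark, not something proved. Nothing established excludes asymmetric bounded solutions.

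Your fallback correctly identifies what is needed, namely control of $\varphi$ on $[-1,0]$, but the suggested tool fails. The monotonicity in Lemma \ref{lem:le12} rests on $\varphi'(x)=\frac{1}{x(1-x)}\int_0^x\alpha\sin\varphi(y)\,dy$. That identity uses boundedness at the singular point $x=0$ to kill the boundary term $x(x-1)\varphi'(x)$. Started from the regular point $z=0$, a term proportional to $\varphi'(0)$, of no definite sign, survives.

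The missing idea is to read the hypothesis as a property of every bounded solution with $0<\varphi(1)<\pi$. This is how it is supplied in the proof of Theorem \ref{thm:t1}, where the claim is established for an arbitrary bounded solution. You then apply it to the transformed solutions $\varphi(z)-2k\pi$, $2k\pi-\varphi(z)$ and their reflections $z\to-z$. This yields two facts:
\begin{itemize}
\item if $n\pi<\varphi(1)<(n+1)\pi$, then $\varphi\in(n\pi,(n+1)\pi)$ on $[0,1]$;
\item if $m\pi<\varphi(-1)<(m+1)\pi$, then $\varphi\in(m\pi,(m+1)\pi)$ on $[-1,0]$.
\end{itemize}
Lemma \ref{lem:le10} guarantees $\varphi(\pm1)\notin\pi\mathbb{Z}$ for a nontrivial solution. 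Evaluating at $z=0$ forces $m=n$. Corollary \ref{cor:c1} and its mirror image then send both ends to the same even multiple of $\pi$, so $Q=0$.

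If the hypothesis is read only for the single given solution on $[0,1]$, as you do, it gives no information about $\varphi(-1)$. Your proposal supplies no substitute for that information.
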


\noindent{\bf Proof:} 
\erf{yaez} has the following property: if 
 $\vp(z)$ is a bounded solution of \erf{yaez}, then $-\vp(z)$,
 $\vp(-z)$ and 
$\vp(z)+2n\pi, \ n\in\mathbb Z$ are also bounded solutions of
\erf{yaez}.   
Using the above properties of \erf{yaez} and the assumptions 
in the lemma, we obtain two conclusions.  For~a bounded solution
of \erf{yaez}, 

\begin{enumerate}
\item
%\textls[-20]
{if $n\pi \slt \vp(1) \slt (n+1)\pi$ for~$n\in \mathbb Z$,
then 
$n\pi \slt \vp(z)\slt (n+1)\pi$
at every $z\in[0,1]$;} 
\item
if $n\pi \slt \vp(-1) \slt (n+1)\pi$ for~$n\in \mathbb Z$,
then 
$n\pi \slt \vp(z)\slt (n+1)\pi$
at every $z\in[-1,0]$.
\end{enumerate}
To prove the lemma, assume to the contrary that there is a 
topologically nontrivial solution $\ti\vp$ with charge $C(\ti\vp) = Q\neq 0$.
From Lemma \ref{lem:le10}, we know that $\ti\vp(-1) \neq m\pi$
for any $m\in \mathbb Z$ and $\ti \vp(1) \neq n\pi$ for any
$n\in \mathbb Z$. Therefore, assume that, for some $m, n \in 
\mathbb Z$, 
\beas
m\pi \slt \ti\vp(-1) \slt (m+1)\pi, \qquad 
n\pi \slt \ti\vp(1) \slt (n+1)\pi
\eeas
But, from the two conclusions  above, we have
\beas
m\pi \slt \ti\vp(0) \slt (m+1)\pi, \qquad n\pi \slt \ti\vp(0) \slt 
(n+1)\pi
\eeas
which leads us to conclude that $m=n$.   From~Lemma \ref{lem:l1}
we conclude that, if $m$ is even, 
\beas
\lt z {-\infty} \ti\vp(z) \aea \lt z {\infty} \ti\vp(z) \deq m\pi
\eeas
and, if $m$ is odd,
\beas
\lt z {-\infty} \ti\vp(z) \aea \lt z {\infty} \ti\vp(z) \deq (m+1)\pi
\eeas
In either case, $\ti\vp$ is a topologically trivial solution, contradicting
our assumption that $\ti\vp$ is topologically nontrivial.   
\qed %\end{proof}
  
\noindent{\bf Proof of Theorem \ref{thm:t1}:}
From Lemmas \ref{lem:le10} and
 \ref{lem:le11} it follows that, to prove the theorem, it is sufficient
 to show that, if a bounded solution $\vp$ satisfies $0 \slt \vp(1) \slt \pi$,
 then $0 \slt \vp(z) \slt \pi$ for all $z\in [0,1]$ for~$\a< 2$.
We prove the above statement in 
hypergeometric coordinates.  Specifically, we will
show (in hypergeometric coordinate $x$)
 that, if 
 $\vp$ is a bounded \mbox{solution of}
 \bea
 x\,(x-1) \vp'' + (2x-1) \,\vp' + \a\, \sin(\vp) \aea 0, \qquad \a < 2
 \label{yx}
 \eea
 and $0 < \vp(0) < \pi$, then 
 $0<\vp(x) < \pi$ at~all $x\in[0,\f12]$.   
We set  $\g = \vp(0)$.  We prove 
 the following {claims.} %MDPI: We formatted as the end of proof, please check and confirm.
%\qed %\end{proof}

\begin{claim}
{%\it 
 If $0 < \vp(0) < \pi$ and 
 $\vp''(x) \leq 0$ at all  $x\in(0,\f12]$, then 
 \beas
 0 \ <\  \vp(x)  <  \pi, \qquad \mbox{ at every } x\in(0,\f12]
 \eeas
 }
\end{claim}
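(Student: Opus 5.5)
The approach is to get the upper bound $\vp<\pi$ from concavity combined with the boundary condition of Lemma~\ref{lem:l4}, and then to get the lower bound $\vp>0$ from Lemma~\ref{lem:le12}. I expect the hypothesis $\a<2$ (standing in the proof of Theorem~\ref{thm:t1}) to be used exactly once, to produce the contradiction in the upper bound. Throughout, $\vp$ is a bounded solution of \erf{yx} on $[0,\f12]$, continuous and with $\vp'$ continuous up to $x=0$.

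\textbf{Upper bound.} Argue by contradiction: suppose $\vp(x)\geq\pi$ for some $x\in(0,\f12]$. Since $\vp(0)<\pi$ and $\vp$ is continuous, there is a smallest $x\sz\in(0,\f12]$ with $\vp(x\sz)=\pi$, and $\vp(x)<\pi$ on $[0,x\sz)$.

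Because $\vp''\leq 0$ on $(0,\f12]$ and $\vp'$ is continuous at $0$, the function $\vp'$ is nonincreasing on $[0,x\sz]$. The mean value theorem then gives
\beas
\vp'(0) \ \geq\ \dsf{\vp(x\sz)-\vp(0)}{x\sz} \deq \dsf{\pi-\vp(0)}{x\sz} \ \geq\ 2\,(\pi-\vp(0)),
\eeas
where the last step uses $x\sz\leq\f12$.

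On the other hand, Lemma~\ref{lem:l4} (equivalently the last part of the proof of Lemma~\ref{lem:l2}) gives $\vp'(0)=\a\sin(\vp(0))$. Since $0<\vp(0)<\pi$ we have $\sin(\vp(0))=\sin(\pi-\vp(0))<\pi-\vp(0)$. Therefore
\beas
2(\pi-\vp(0)) \ \leq\ \vp'(0) \deq \a\sin(\vp(0)) \ <\ \a\,(\pi-\vp(0)) \ <\ 2(\pi-\vp(0)),
\eeas
using $\a<2$ and $\pi-\vp(0)>0$. This is a contradiction, so $\vp(x)<\pi$ for every $x\in(0,\f12]$.

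\textbf{Lower bound.} Fix any $a\in(0,\f12]$. By the upper bound, $\vp<\pi$ on $[0,a]$. Lemma~\ref{lem:le12} then gives $\vp>0$ on $[0,a]$. Since $a$ was arbitrary, $0<\vp<\pi$ on all of $(0,\f12]$, which is the claim.

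\textbf{Where the care is needed.} The main step is the comparison between the secant slope and the boundary slope $\vp'(0)=\a\sin(\vp(0))$; this is the only place $\a<2$ enters. Two technical points must be checked. First, the monotonicity of $\vp'$ must extend down to $x=0$; this holds because $\vp'$ is continuous at $0$ (Lemma~\ref{lem:l2}) and $\vp''\leq0$ on $(0,x\sz]$. Second, the endpoint case $x\sz=\f12$ must be covered; it is, because the strictness of the final inequality comes from $\sin(\vp(0))<\pi-\vp(0)$ and $\a<2$, not from $x\sz<\f12$.
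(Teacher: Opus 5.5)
Your proof is correct and follows essentially the same route as the paper. Concavity together with the boundary slope $\vp'(0)=\a\sin(\vp(0))$ from Lemma~\ref{lem:l4} gives the tangent-line bound: you phrase it as ``secant slope $\leq \vp'(0)$'', the paper as $\vp(x)\leq \g + x\,\a\sin(\g)$. Then $\a<2$, $x\leq\f12$ and $\sin(\pi-\g)<\pi-\g$ give the contradiction, and Lemma~\ref{lem:le12} supplies the lower bound.
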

  
\noindent{\bf Proof:} 
Assume
$\vp''(x) \leq 0$ throughout $(0,\f12]$.
 Then, using Lemma \ref{lem:l4}, we have at \mbox{$x \in (0,\f12]$}
 \beas
 \vp'(x) \aea \lt \e  0 \lrs{\int_\e^x \vp''(y) dy + \vp'(\e)}
 \leq \a \sin(\g)
 \eeas
 Therefore, at~every $x\in (0,\f12]$, recalling that $\a < 2$, we have 
 \beasm
 \vp(x) \aea \g + \izx {\vp'(y)} \ \leq\  \g + x \, \a\, 
 \sin(\g) \ \leq\  \g + \dsf \a 2 \sin(\g) < \g + \sin(\g)
 \label{iax}
 \eeasm
 If $\vp(x) \deq \pi$ at some $x\in \ioh$, then, from 
 (\ref{iax}), we have 
 \bea
 \pi < \g + \sin(\g)  \implies \dsf{\sin(\pi-\g)}{\pi-\g} > 1
 \label{mk}
 \eea
 which is impossible.  The~contradiction proves  
 that $\vp(x) \slt \pi$ at all $x\in\left(\ltd0,\f12\right]\rtd.$
From Lemma \ref{lem:le12} we conclude that $ \vp(x) > 0$ 
at all $x\in\left(\ltd0,\f12\right]\rtd.$ 
\qed %\end{proof} 
%\ed

Therefore, in~the remainder of the proof
we will assume that $\vp''(x) >0$ at some 
$x\in \ioh$.
Define
\bea
x\so \aea \inf_{x\in \ioh} \lrc{x\, |\, \vp''(x) > 0}
\label{xtil}
\eea 
It is possible that $x\so \deq 0 \not\in \ioh$.  

We prove the theorem by contradiction. If~possible, let 
 $x\sr$, defined below,  
exist.
\bea
x\sr \aea \inf_{x\in \ioh} \lrc{x \, | \, \vp(x) \deq \pi}
\label{xstr}
\eea
If $x\sr$ exists then we note that $x\sr \in \ioh$. 
$x\so$ and $x\sr$ are illustrated in  {Figure} \ref{thesalient}. 

\begin{figure}[H]
\includegraphics[height=2in]{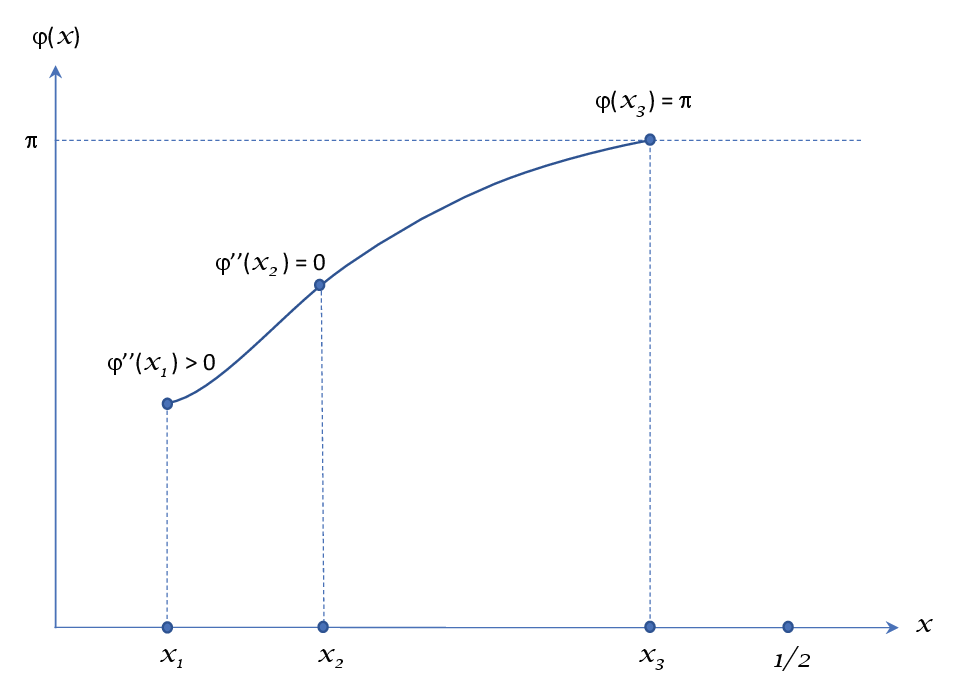}
\caption{{Illustration} %MDPI: Please confirm whether an explanation of the dotted and solid lines needs to be added to the figure caption.
 of the salient values $x\so, x\sw$ and $x\sr$ used in the proof. 
 {The solid line represents a solution of Eq.} (\ref{yx}).  
 {The dotted line represents $\vp(x) \equiv \pi$.}}\label{thesalient}. 
\end{figure}

\begin{claim}\label{claim2}
At every $x\in \lrs{0,x\so}$ we have 
{$0<\vp(x )<\pi$.}
\end{claim}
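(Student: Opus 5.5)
The plan is to reuse the argument of the first Claim, restricted to the interval $[0,x\so]$, where $\vp$ is concave by definition of $x\so$. If $x\so = 0$, the interval is just $\{0\}$ and the statement is the hypothesis $0<\g=\vp(0)<\pi$. So I assume $x\so>0$.

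\emph{Step 1: sign of $\vp''$.} By the definition (\ref{xtil}) of $x\so$ as an infimum, $\vp''(x)\leq 0$ at every $x\in(0,x\so)$. By continuity of $\vp''$ on $(0,\f12]$, which holds because $\vp$ solves (\ref{yx}) and the coefficient $x(x-1)$ does not vanish there, we also get $\vp''(x\so)\leq 0$. Hence $\vp''\leq 0$ on $(0,x\so]$.

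\emph{Step 2: bound on $\vp'$.} By Lemma \ref{lem:l4}, $\lt x 0 \vp'(x) = \a\sin(\g)$. Integrating $\vp''$ from $\e$ to $x$ and letting $\e\ra 0$, as in the proof of the first Claim, gives $\vp'(x)\leq \a\sin(\g)$ for all $x\in(0,x\so]$.

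\emph{Step 3: upper bound on $\vp$.} Integrating again, and using $x\leq x\so\leq \f12$ and $\a<2$, we get
\[
\vp(x)\ \leq\ \g + x\,\a\sin(\g)\ \leq\ \g+\f\a2\sin(\g)\ <\ \g+\sin(\g)\quad\text{on } (0,x\so].
\]
If $\vp(x)=\pi$ somewhere in $(0,x\so]$, this would give $\sin(\pi-\g)>\pi-\g$ with $0<\pi-\g<\pi$, which is impossible. Therefore $\vp<\pi$ on $[0,x\so]$.

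\emph{Step 4: lower bound on $\vp$.} Lemma \ref{lem:le12}, applied with $a=x\so\leq\f12$, then yields $\vp>0$ on $[0,x\so]$.

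The only delicate point I expect is the endpoint $x\so$ in Step 1. Since $x\so$ is an infimum, $\vp''$ could be positive arbitrarily close to $x\so$ on the right, so concavity is only guaranteed on $(0,x\so)$. This does not matter, because $\vp$ and $\vp'$ are continuous at $x\so$. The bound in Step 3 therefore passes to $x=x\so$ by taking the limit from the left, and the strictness of the inequality survives since $\g+\f\a2\sin(\g)<\g+\sin(\g)$ strictly.

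A second point to check is the behaviour at $x=0$. Lemma \ref{lem:l4} supplies the limit of $\vp'$ there, and that is all the integration in Step 2 needs, so the argument is unaffected by the singularity at $x=0$.
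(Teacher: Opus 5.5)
Your proof is correct and is essentially the paper's own argument. The paper likewise uses $\vp''\le 0$ on $(0,x\so]$, Lemma~\ref{lem:l4} to get $\vp'\le \a\sin(\g)$, and integration to get $\vp(x)\le \g+x\,\a\sin(\g)<\g+\sin(\g)<\pi$; it then applies Lemma~\ref{lem:le12} for positivity, and your appeal to continuity of $\vp''$ to secure the endpoint $x\so$ is only a small tightening of the paper, which reads $\vp''\le 0$ on $(0,x\so]$ directly off the infimum definition.
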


\noindent{\bf Proof:} 
 If $x\so = 0$,
then, from  $\vp(x\so) \deq \g  < \pi$, 
the claim holds.
On the other hand, if~$0 < x\so \leq \f12$, then, from the
definition of $x\so$,  in~(\ref{xtil}),  $\vp''(x)\leq 0$
at every $x\in (0,x\so]$.  Therefore, recalling that $\a \slt 2$,
 at every $x\in   \lrs{0,x\so}$,
\beass
\vp(x ) \aea \g + 
\int_0^{x }{\lt \e 0 \lrs{\int_\e^y \vp''(z)\, dz + \vp'(\e)}}\, dy \leq \g + x \, \a \, \sin(\g) < \g + \sin(\g) < \pi
\eeass
The last inequality holds since $0<\g<\pi$ and $\sin(\g)/(\pi-\g) < 1$ 
at $\g\neq \pi$.  From \mbox{Lemma \ref{lem:le12}} we conclude that $\vp(x) > 0$
at every $x\in \lrs{0,x\so}$. 
\qed %\end{proof}

\begin{claim}\label{claim3}
{\it At every $x\in (x\so, x\sr]$, $\vp'(x) > 0$.}
\end{claim}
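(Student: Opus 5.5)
The plan is to avoid working with $\vp''$ at all and instead use the divergence form of \erf{yx}. Since $\dbyd{}{x}\lrs{x(x-1)\vp'} = x(x-1)\vp'' + (2x-1)\vp'$, equation \erf{yx} can be written as
\beas
\dbyd{}{x}\lrs{x(1-x)\,\vp'(x)} \aea \a\,\sin(\vp(x)).
\eeas
I will integrate this from $\e$ to $x$ and let $\e\ra 0$. By Lemma \ref{lem:l4}, $\vp'(\e)\ra \a\sin(\g)$, which is finite, so the boundary term $\e(1-\e)\vp'(\e)$ vanishes in the limit. This gives, for every $x\in\ioh$,
\beas
x(1-x)\,\vp'(x) \aea \a \int_0^x \sin(\vp(y))\, dy .
\eeas
This is the same identity already used in the proof of Lemma \ref{lem:le12}.

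The key step is then to show that $\sin(\vp(y))>0$ for all $y\in[0,x\sr)$. By the definition \erf{xstr} of $x\sr$ as an infimum, together with continuity and $\vp(0)=\g<\pi$, we have $\vp(y)<\pi$ for every $y\in[0,x\sr)$. Applying Lemma \ref{lem:le12} on the intervals $[0,a]$ with $a<x\sr$ then gives $\vp(y)>0$ on $[0,x\sr)$. Hence $0<\vp<\pi$, and so $\sin(\vp)>0$, on $[0,x\sr)$.

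Consequently, for any $x\in(0,x\sr]$ the integral $\int_0^x \sin(\vp(y))\,dy$ is strictly positive. This includes the endpoint $x=x\sr$, since the integrand is positive on $[0,x\sr)$ and a single endpoint does not affect the integral. Moreover $x(1-x)>0$ on $\ioh$ and $\a>0$. Dividing by $x(1-x)$ gives $\vp'(x)>0$ on all of $(0,x\sr]$, and in particular on $(x\so,x\sr]$, as claimed.

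The only delicate point is the limit at $x=0$, where the equation is singular. I would handle it by invoking Lemma \ref{lem:l4}, which ensures that $\vp'$ stays bounded near $0$ for a bounded solution, so that $x(1-x)\vp'(x)\ra 0$ as $x\ra 0$. Everything else is a sign argument, and the conclusion does not depend on $\a<2$ or on the definition of $x\so$.
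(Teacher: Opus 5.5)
Your proof is correct and follows essentially the same route as the paper: both rest on the identity $\vp'(x) = \frac{1}{x(1-x)}\int_0^x \a\sin(\vp(y))\,dy$, justified via Lemma \ref{lem:l4}, combined with $0<\vp<\pi$ on $[0,x\sr)$, which follows from the definition of $x\sr$ and Lemma \ref{lem:le12}. The only difference is that the paper also cites Claim \ref{claim2}, which, as you note, is redundant; so the conclusion holds on all of $(0,x\sr]$ without using $\a<2$ or the definition of $x\so$.
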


\noindent{\bf Proof:} 
Rewriting \erf{yx} and using Lemma \ref{lem:l4},
we have, for $x\in (x\so,x\sr]$,
\bea
\vp'(x) \aea \dsf 1 {x \, (1-x)} \izx {\a\,\sin(\vp(y))}
\label{og}
\eea
From Claim \ref{claim2} above, Lemma \ref{lem:le12}, and~the definition of $x\sr$, we know that,
at every   $x\in[0, x\sr]$,   $0 < \vp(x) \leq \pi$. From~\erf{og}
we conclude that $\vp'(x) > 0$ at all $x\in [0, x\sr]$.  
\qed %\end{proof}

\begin{claim}\label{claim4}
{\it At every $x\in (x\so, x\sr]$, $\vp''(x) > 0$.}
\end{claim}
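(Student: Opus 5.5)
The plan is to differentiate \erf{yx} once and use the resulting equation for $\vp'''$ to show that $\vp''$ can never cross zero from above on $(x\so,x\sr]$. For $x\in(0,\f12]$ the equation is regular, so $\vp$ is smooth there. Rewriting \erf{yx} as $x(1-x)\vp'' = (2x-1)\vp' + \a\sin(\vp)$ and differentiating gives
\beas
x(1-x)\,\vp'''(x) \aea 2(2x-1)\,\vp''(x) + \lrr{2+\a\cos(\vp(x))}\vp'(x).
\eeas
The key observation is what happens at any point $\bar x\in(0,x\sr]$ where $\vp''(\bar x)=0$. There the first term vanishes, so $x(1-x)\vp'''(\bar x) = (2+\a\cos\vp(\bar x))\,\vp'(\bar x)$. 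By the argument of Claim \ref{claim3}, which gives $\vp'>0$ on all of $[0,x\sr]$, the factor $\vp'(\bar x)$ is positive. Since $\a<2$, we also have $2+\a\cos\vp \geq 2-\a >0$, including at $x\sr$ where $\cos\vp=-1$. Hence $\vp'''(\bar x)>0$ at every zero of $\vp''$ in $(0,x\sr]$.

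Next I use the definition of $x\so$ in (\ref{xtil}): there are points $p\in(x\so,x\sr]$ arbitrarily close to $x\so$ with $\vp''(p)>0$. This holds whether $x\so=0$ or $x\so>0$; in the latter case one could alternatively note $\vp''(x\so)=0$ by continuity and apply the key observation at $x\so$ itself. Fix such a $p$ and suppose $\vp''$ vanishes somewhere in $(p,x\sr]$. Let $\bar x$ be the smallest such zero, which exists by continuity of $\vp''$. Then $\vp''>0$ on $[p,\bar x)$ and $\vp''(\bar x)=0$, which forces $\vp'''(\bar x)\le 0$. This contradicts the key observation. Therefore $\vp''>0$ on $[p,x\sr]$, and letting $p\downarrow x\so$ yields $\vp''>0$ on all of $(x\so,x\sr]$.

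The main obstacle is the behaviour near $x\so$ when $x\so=0$. There $\vp''$ may be singular (the $\log$ terms from $\vp\sw$), so one cannot start the argument at the endpoint itself. The device of choosing positive points $p$ accumulating at $x\so$, guaranteed by the infimum definition, is what circumvents this. Beyond that, the only facts needed are the positivity of $\vp'$ on $(0,x\sr]$ from Claim \ref{claim3} and the elementary bound $2+\a\cos\vp>0$. The latter is exactly where the hypothesis $\a<2$ enters.
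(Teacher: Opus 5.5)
Your argument is correct and is essentially the paper's proof. You differentiate \erf{yx} to get \erf{tr}; at the first zero of $\vp''$ to the right of $x\so$ the term $(4x-2)\vp''$ drops out, so Claim~\ref{claim3} and $2+\a\cos(\vp)\ge 2-\a>0$ give $\vp'''>0$, contradicting $\vp'''\le 0$ there. Your device of starting from points $p\downarrow x\so$ with $\vp''(p)>0$ is a useful tightening, since the paper simply takes for granted that a smallest zero $x\sw$ exists in $(x\so,x\sr]$ and that $\vp''>0$ on $(x\so,x\sw)$.
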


\noindent{\bf Proof:} 
Assume to the contrary that $\vp''(x\sw) = 0$ at some $x\sw \in 
(x\so, x\sr]$.   See  {Figure} \ref{thesalient}. 
%\ref{thealphagw}.  
We will assume that $x\sw$ is the 
smallest value of $x\in (x\so, x\sr]$ at which $\vp''$ vanishes.
Therefore $\vp''(x) > 0$ at all $x\in (x\so,x\sw)$, and, since 
$\vp''(x\sw) = 0$, we must have $\vp'''(x\sw) \leq 0$.   
However, we will show that $\vp'''(x\sw) > 0$. The~
contradiction will prove the~claim.

Differentiating \erf{yx} we obtain
\bea
\vp'''(x) \aea \dsf 1{x(1-x)}
\lrs{(4x-2)\vp'' + (2 + \a\, \cos(\vp)) \vp'}
\label{tr}
\eea
We see from \erf{tr} that $\vp'''$ is well defined in 
$(x\so, x\sr]$.   For~$\a < 2$, we have $2 + \a \cos(\vp) > 0$.
Since $\vp''(x\sw) = 0$,   noting that 
$x\sw \, (1-x\sw) > 0$ and~using Claim \ref{claim3}, we have
\beas
\vp'''(x\sw) \aea \dsf {(2+\a\, \cos(\vp(x\sw)))\, \vp'(x\sw)}{x\sw\, (1-x\sw)} \ > \ 0
\eeas
The contradiction shows that $x\sw$ does not exist. Therefore 
$\vp''(x) >0$ at all $x\in (x\so, x\sr]$.  
\qed %\end{proof}

%\ed
From Claim \ref{claim4} and \erf{yx} we have at every $x\in (x\so, x\sr]$,
\beas
\vp''(x) \aea \dsf{1}{x(1-x)}\lrs{(2x-1)\, \vp'(x) + \a\, \sin(\vp(x))} > 0 
\eeas
In particular, at~$x\sr$, we have 
\beas
0 \deq \a\, \sin(\vp(x\sr)) > (1- 2x) \, \vp'(x\sr) \geq 0
\eeas
The contradiction shows that 
$x\sr$ does not exist and $0 < \vp(x) < \pi$
at every \mbox{$x\in [0,\f12]$.}  From~Lemma \ref{lem:le11} it follows
that every bounded solution
of \erf{yx} is {topologically  {trivial.}} %MDPI: We removed extra \quad. please confirm.
\qed %\end{proof}

\section{{Small}~Solitons}\label{app:e}
\begin{lemma}\label{lem:l9} 
For $0\leq \gamma \leq \pi$, let $\varphi(x;\gamma)$ be a
continuous and bounded solution of
\begin{eqnarray}
x(x-1) \varphi_{xx} + (2x-1) \varphi_x + \alpha \sin (\varphi) = 0
\label{lemma241eqn}
\end{eqnarray}
in the interval $\left[-\frac 12,\frac 12\right]$, satisfying the boundary conditions
\begin{eqnarray}
\varphi(0;\gamma) = \gamma, \qquad \varphi'(0;\gamma) = \alpha \sin(\gamma).
\label{lemma241boundarycondns}
\end{eqnarray}
Then, at~any $\tilde x\in \left[0,\frac 12\right]$, $\varphi(\tilde x; \gamma)$ 
and $\varphi'(\tilde x; \g)$ are  
continuous functions of $\gamma$.
\end{lemma}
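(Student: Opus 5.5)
The plan is to split $[0,\f12]$ into a small neighborhood $[0,\e]$ of the singular point and the regular interval $[\e,\f12]$. Near the singularity I would use the integral-equation representation of Lemma \ref{lem:l3}. On the regular interval I would use standard continuous dependence for a regular ODE.

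\emph{Step 1 (near $x=0$).} By Lemma \ref{lem:l2} and Lemma \ref{lem:l4}, the bounded solution with $\vp(0)=\g$, $\vp'(0)=\a\sin(\g)$ is the fixpoint of $\vp = V(x;\vp)+\g\,\vpo(x)$ on $\cintvl{\e\str}$. Here $\vpo(0)=1$ and $V(0;\vp)=0$, so the parameter $\g$ of Lemma \ref{lem:l3} coincides with $\vp(0)$.

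I would first check that uniqueness holds, so that this fixpoint is the solution in the statement. The estimate in the proof of Lemma \ref{lem:l3} gives a contraction constant
\beas
k(\e\str) = \lrs{4 C\uu h C\uu p C\uu w\,\a + 2 C\uu p\sq C\uu w\,\a}\e\str < 1 ,
\eeas
and it is independent of $\g$. This is because the Lipschitz bound (\ref{if}) on $f$ is uniform, namely $2\a$. The same computation therefore gives, for two parameters $\g,\g'$,
\beas
\|\vp(\cdot;\g)-\vp(\cdot;\g')\|_\infty \le k\,\|\vp(\cdot;\g)-\vp(\cdot;\g')\|_\infty + C\uu p\,|\g-\g'| ,
\eeas
and hence
\beas
\|\vp(\cdot;\g)-\vp(\cdot;\g')\|_\infty \le \dsf{C\uu p}{1-k}\,|\g-\g'| .
\eeas
So $\vp(x;\g)$ is Lipschitz in $\g$, uniformly on $\cintvl{\e\str}$. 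The same contraction bound also shows that any bounded continuous solution with these data equals the fixpoint.

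For the derivative I would use the representation (\ref{eqvpp}),
\beas
\vp'(x)=\dsf{1}{x(1-x)}\int_0^x \a\sin(\vp(y))\,dy .
\eeas
This gives
\beas
|\vp'(x;\g)-\vp'(x;\g')| \le \dsf{\a}{1-x}\,\|\vp(\cdot;\g)-\vp(\cdot;\g')\|_\infty ,
\eeas
valid for $x\in(0,\e\str]$ and also at $x=0$ by the limit. So $\vp'$ is continuous in $\g$ on $[0,\e\str]$ as well.

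\emph{Step 2 (regular part).} Fix $x\sz=\e\str/2$. On $[x\sz,\f12]$ the coefficient $x(x-1)$ is bounded away from zero, and the right-hand side of the first-order system for $(\vp,\vp')$ is globally Lipschitz in $(\vp,\vp')$, since $\sin$ is. Solutions therefore exist on the whole interval, and Gronwall's inequality gives
\beas
|\vp(\ti x;\g)-\vp(\ti x;\g')|+|\vp'(\ti x;\g)-\vp'(\ti x;\g')| \le e^{L(\ti x-x\sz)}\lrr{|\Delta\vp(x\sz)|+|\Delta\vp'(x\sz)|} .
\eeas
By Step 1 the right side tends to zero as $\g'\ra\g$. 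This yields continuity in $\g$ at every $\ti x\in[0,\f12]$.

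The main obstacle is Step 1: the solution must be the unique bounded one, so that $\g\mapsto\vp(\cdot;\g)$ is well defined, and the contraction must be uniform in $\g$. I expect both to follow from the fact that the constants in Lemma \ref{lem:l3} do not depend on $\g$ or on a bound for $\vp$. If an a priori bound on $\vp$ were needed, I would restrict to $\g\in[0,\pi]$, where $C\uu p$ controls it.
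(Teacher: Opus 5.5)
Your proposal is correct and takes essentially the paper's route. Continuous dependence on the regular interval reduces everything to a neighborhood of $x=0$, and there the $\gamma$-independent contraction estimate for $\varphi = V(x;\varphi)+\gamma\varphi_1$ yields the same Lipschitz bound as (\ref{ime}); your derivative estimate via $\varphi'(x)=\frac{1}{x(1-x)}\int_0^x\alpha\sin\varphi(y)\,dy$ is a tidier substitute for the paper's $K(\epsilon_2)$ bound, and making uniqueness explicit is a plus. One correction: Lemmas \ref{lem:l2} and \ref{lem:l4} give the converse implication, so the fact that a bounded ODE solution satisfies the integral equation (which the paper also just asserts) should come from variation of parameters, with boundedness killing the $\varphi_2$ coefficient and continuity at $0$ fixing the $\varphi_1$ coefficient to be $\gamma$.
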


\noindent{\bf Proof:} 
Observe that, if we choose an $\epsilon$ such that $0<\epsilon\leq \tilde x\leq \frac 12$,
then the differential \mbox{Eq.~ (\ref{lemma241eqn})} has no singularities in the interval
[$\epsilon,\tilde x$].  Therefore, $\varphi(\tilde x;\gamma)$ depends continuously on the initial conditions
$\varphi(\epsilon;\gamma)$ and $\varphi'(\epsilon;\gamma)$.  To~prove the lemma it is therefore sufficient to
show that $\varphi(\epsilon;\gamma)$ and $\varphi'(\epsilon;\gamma)$ are continuous functions of $\gamma$ at
some \mbox{$0<\epsilon\leq \tilde x\leq \frac 12$.}

If $\varphi(x;\gamma)$ is a solution of \eqnref{lemma241eqn}, satisfying the boundary condition
(\ref{lemma241boundarycondns}), then $\varphi(x;\gamma)$ is also a solution of the integral equation
\begin{eqnarray}
\varphi(x;\gamma) = V(x;\varphi(x;\gamma)) + \gamma \cdot \varphi_1(x)
\label{integraleqn2}
\end{eqnarray}
where $V(x;\varphi(x;\gamma))$ is defined in (\ref{defv}) and $\varphi_1(x)$ is a solution of
the homogeneous hypergeometric \eqnref{hom-eq-x}.

To show that $\varphi(x;\gamma)$ is a continuous function of $\gamma$ at $x$, we
show that, for any given $\rho > 0$, there is a $\delta > 0$ such that,
if $|\gamma - \gamma'| < \delta$, then $|\varphi(x;\gamma) - \varphi(x;\gamma')| < \rho$.
Using \eqnref{integraleqn2} we have
\beasm
|\varphi(x;\gamma) - \varphi(x;\gamma')| & \leq & |V(x;\varphi(x;\gamma)) - V(x;\varphi(x;\gamma'))| +
|\varphi_1(x)|\cdot |\gamma-\gamma'|\qquad 
\label{upboundvarphi}
\eeasm
In order to derive an upper bound on the right-hand side of the above inequality, 
we note, using Lemma \ref{lem:l6},  that 
\beasm
V(x;\varphi) = h(x) \int_0^x \varphi_1(y) q(y) dy + \varphi_1(x) \int_0^x \frac 1 y
\int_0^y \varphi_1(u) q(u)\, du\, dy - \varphi_1(x) \int_0^x h(y) q(y) dy 
\label{simplifiedV}
\eeasm
$h(x)$, described in the discussion
following \eqnref{phi-2a}, $xW(x)$ and $\varphi_1(x)$ are analytic in
[$-\epsilon,\epsilon$] for sufficiently small $\e\in\lrr{0,\f12}$; see 
Lemma \ref{lem:l5}.  
Therefore, we
can find constants $k_{\varphi_1}(\epsilon), k_h(\epsilon), k_w(\epsilon)$ such that, for all $0\leq x\leq \epsilon$,
$|\varphi_1(x)|< k_{\varphi_1}(\epsilon), \ |h(x)| < k_h(\epsilon)$ and $|x(x-1)W(x)|^{-1} < k_w(\epsilon)$.

We bound the
variation of $f(\varphi(x;\gamma)) = \alpha(\varphi(x;\gamma) - \sin(\varphi(x;\gamma))$ with $\gamma$.
Observe that $|\sin(\theta) -\sin(\theta')| \leq |\cos(\theta^*)||\theta - \theta'| \leq | \theta - \theta'|$, where $\theta^*$ is some
number between $\theta$ and $\theta'$. Therefore, for~any $x\in[-\epsilon,\epsilon]$,
\begin{eqnarray}
|f(\varphi(x;\gamma)) - f(\varphi(x;\gamma'))| & \leq & 2 \alpha |\varphi(x;\gamma) - \varphi(x;\gamma')|
\leq  2 \alpha M(\epsilon,\g,\g')
\end{eqnarray}
where
\begin{eqnarray}
M(\epsilon,\g,\g') := \max_{x\in [-\epsilon,\epsilon]} \left\{ |\varphi(x;\gamma) - \varphi(x;\gamma') |\right\}\label{megg}
\end{eqnarray}

Using \eqnref{simplifiedV} and the~bounds on $|h(x)|, |\varphi_1(x)|$ and $|x (x-1) W(x)|^{-1}$ in $[-\epsilon,\epsilon]$,
we obtain
\begin{eqnarray}
|V(x,\varphi(x;\gamma)) - V(x,\varphi(x;\gamma'))| \leq
C(\epsilon) \cdot \epsilon \cdot M(\epsilon,\g,\g')
\label{upboundV}
\end{eqnarray}
where $C(\epsilon) = 4 k_h(\epsilon) k_{\varphi_1}(\epsilon) k_w(\epsilon) \alpha + 2 \cdot (k_{\varphi_1} (\epsilon))^2 k_w(\epsilon)
\alpha$.

Using (\ref{upboundvarphi})  {and} %MDPI: We revised comma to and, please check if the meaning is retained.
 (\ref{upboundV})  we have
\begin{eqnarray*}
|\varphi(x;\gamma) - \varphi(x;\gamma')| \leq C(\epsilon) \cdot \epsilon \cdot M(\epsilon,\g,\g') + k_{\varphi_1}(\e) |\gamma - \gamma'|
\end{eqnarray*}
The above inequality holds at every $x\in[-\epsilon,\epsilon]$ for~a 
sufficiently small $\e$.  Therefore,
\begin{eqnarray*}
M(\epsilon,\g,\g') \leq C(\epsilon) \cdot \epsilon \cdot M(\epsilon,\g,\g') + k_{\varphi_1}(\e) |\gamma - \gamma'|
\end{eqnarray*}
Choosing a sufficiently small $\e\so\in\lrr{0,\f12}$ we have
\bea
M(\e\so,\g,\g') \leq \dsf{k\dn{\vp\so}(\e\so) \norm{\g-\g'}}{1- \e\so\, C(\e\so)}
\label{ime}
\eea
Observe that, for every $x\in [0,\e\so]$, 
\begin{eqnarray*}
M(x,\g,\g') \leq M(\epsilon_{_1},\g,\g') \leq {k_{\varphi_1}(\epsilon_{_1}) \cdot |\gamma - \gamma'| \over 1- \epsilon_{_1} \cdot C(\epsilon_{_1})}; 
\end{eqnarray*}
Given any $\rho > 0$, we choose $\delta = \displaystyle {\rho (1 - \epsilon_{_1}\cdot C(\epsilon_{_1}) ) \over k_{\varphi_1}(\e\so)}$.
Then, at any $x\in[0,\epsilon_{_1}]$,
\begin{eqnarray*}
|\varphi(x; \gamma) - \varphi(x; \gamma')| \leq M(x) \leq M(\epsilon_{_1}) < \rho, \qquad \mbox{ if \ } |\gamma-\gamma'| < \delta
\end{eqnarray*}
showing that $\varphi(x;\gamma)$ is a continuous function of $\gamma$ at every $x\in[0,\epsilon_{_1}]$.

Using the integral Eq.~ (\ref{integraleqn2}) and the~simplified form of $V$ shown in \eqnref{simplifiedV} we can
similarly show that, at every $x\in[0,\epsilon_{_2}]$, for~some sufficiently small $\epsilon_{_2}$,
\begin{eqnarray*}
|\varphi'(x;\gamma) - \varphi'(x;\gamma')| \leq K(\epsilon_{_2}) \cdot 
\lrs{2 \a M(\e\sw,\g,\g')}
\end{eqnarray*}
%\textls[-15]
{where 
\beas
K(\epsilon_{_2}) := k_w \left[k'_h k_{\varphi_1} \epsilon_{_2} + k_h k_{\varphi_1}+k'_{\varphi_1}k_{\varphi_1} + (k_{\varphi_1})^2
+ k'_{\varphi_1}k_h\epsilon_{_2} + k_{\varphi_1}k_h\right]
\eeas
and 
%$K(\epsilon_{_2})$ depends on the bounds on $h, h', \varphi_1, \varphi_1', \{x(x-1)W(x)\}^{-1}$ in the}
%interval $[0,\epsilon_{_2}]$, as shown in the  {Footnote}  ({$K(\epsilon_{_2}) := k_w \left[k'_h k_{\varphi_1} \epsilon_{_2} + k_h k_{\varphi_1}+k'_{\varphi_1}k_{\varphi_1} + (k_{\varphi_1})^2
%+ k'_{\varphi_1}k_h\epsilon_{_2} + k_{\varphi_1}k_h\right]$, where
and the bounds on $h, h', \varphi_1, \varphi_1', \{x(x-1)W(x)\}^{-1}$ in the 
interval $[0,\epsilon_{_2}]$ are
$|h(x)| < k_h,\  |h'(x)| < k'_h,\  |\varphi_1(x)| <  k_{\varphi_1},\
|\varphi_1'(x)| < k'_{\varphi_1},\  [x(x-1)W(x)]^{-1} < k_w.\ $ 
%for $x\in[0,\epsilon_{_2}]$.
} 
Since $M(\e\sw,\gamma,\gamma') \rightarrow 0$ as $\gamma\rightarrow \gamma'$,  for~a given
$\rho > 0$, we can make $M(\e\sw,\gamma,\gamma') < \dsf{\rho}{ 2\a\,K(\epsilon_{_2})}$ by choosing $|\gamma-\gamma'| < \delta$ for~some $\delta > 0$.  Then,
\begin{eqnarray*}
|\varphi'(x;\gamma) - \varphi'(x;\gamma')|  
< {K(\epsilon_{_2}) \cdot M(\e\sw,\gamma,\gamma') < \rho}, \qquad \mbox{ if } |\gamma-\gamma'| < \delta
\end{eqnarray*}
which shows that $\varphi'(x;\gamma)$ is also a continuous function of $\gamma$ at any $x\in [0,\epsilon]$. 
\qed %\end{proof}
%%%%%%%%%%%%%%%%%%%%%%%%%%%%%%%%%%%%%%

{Lemmas \ref{lem:l10} and \ref{lem:l11} use the continuity property
established in Lemma \ref{lem:l9} to
show, respectively, that there exists 
a $\g$ for which the solution overshoots $\vp=\pi$ in the interval $\lrs{0,\f12}$
and a $\g$ for which the solution reaches $\vp=\pi$ at $x\deq \f12$.  A~
similar argument has been used to study dynamics of Yang--Mills fields
in asymptotically hyperbolic spacetime~\cite{bima}.}
\begin{lemma}\label{lem:l10}  
Let $\varphi(x;\gamma)$ be a continuous bounded solution of
\begin{eqnarray}
x(x-1)\varphi_{xx} + (2x-1) \varphi_x + \alpha \sin(\varphi) = 0, \qquad \alpha > 2
\label{lemma242eqn}
\end{eqnarray}
in the interval $\lrs{0,\f12}$, satisfying the initial conditions
\begin{eqnarray}
0\slt \varphi(0;\gamma) = \gamma\slt \pi; \qquad \varphi_x(0;\gamma) = \alpha \sin(\gamma).
\label{lemma242boundaryconditions}
\end{eqnarray}
Then there exists a $\gamma \in (0,\pi)$ for which
\begin{eqnarray*}
M(\gamma) := \max_{0\leq x\leq \frac 12} \varphi(x;\gamma) \geq \pi
\end{eqnarray*}
\end{lemma}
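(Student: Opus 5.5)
The plan is to take $\gamma$ close to $\pi$ and linearize about the vacuum-like point $\vp\equiv\pi$. There the equation behaves like a Legendre equation, and the threshold $\a=2$ is precisely $\nu(\nu+1)=2$, i.e.\ $\nu=1$.

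\textbf{Step 1 (linearization).} Write $\psi:=\pi-\vp$. Then $\psi$ satisfies
\[
x(1-x)\psi''+(1-2x)\psi'+\a\sin\psi=0,
\qquad \psi(0)=\delta:=\pi-\g,\qquad \psi'(0)=-\a\sin\delta .
\]
Dropping the $O(\psi^3)$ term $\sin\psi-\psi$ leaves a hypergeometric equation with $c=1$, $a+b=1$, $ab=-\a$. In the static coordinate $z=1-2x$ this is Legendre's equation $(1-z^2)P''-2zP'+\nu(\nu+1)P=0$ with $\nu(\nu+1)=\a$, and $\nu>1$ exactly when $\a>2$. Its solution regular at $z=1$ is $P_\nu(z)$, equivalently the analytic series $F(x;a,b,1)$ of Appendix~\ref{app:b} with $ab=-\a$.

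\textbf{Step 2 (a sign change of $P_\nu$ in $[0,1)$ when $\a>2$).} I would use a Sturm/Wronskian comparison with $P_1(z)=z$. Put $w:=(1-z^2)(P_\nu'P_1-P_\nu P_1')$. Using both Legendre equations,
\[
w'=-(\a-2)P_\nu P_1 .
\]
Integrate over $[0,1]$. The boundary term at $z=1$ vanishes because of the factor $1-z^2$ and the boundedness of $P_\nu$ and $P_\nu'$ there (Lemma~\ref{lem:l4}). This gives
\[
P_\nu(0)=-(\a-2)\int_0^1 zP_\nu(z)\,dz .
\]
If $P_\nu>0$ on $(0,1]$, then $P_\nu(0)<0$. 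Either way $P_\nu$ takes a strictly negative value at some $z_0\in[0,1)$, i.e.\ at some $x_0\in(0,\f12]$. Zeros of $P_\nu$ are simple by uniqueness, so $P_\nu$ genuinely changes sign.

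\textbf{Step 3 (nonlinear solution $\approx\delta P_\nu$ uniformly on $[0,\f12]$).} This is the main obstacle, because the equation is singular at $x=0$. I would redo the fixed-point argument of Lemma~\ref{lem:l3} and the estimates of Lemma~\ref{lem:l9} for the equation in $\psi$. The integral equation has the form
\[
\psi=\delta\,\psi_1+V(x;\psi),
\]
where $\psi_1$ is the regular homogeneous solution normalized by $\psi_1(0)=1$, and the nonlinearity is $f(\psi)=\a(\sin\psi-\psi)$ (up to sign). Since $|f(\psi)|\le C|\psi|^3$, the contraction estimate on $[0,\e]$ gives $\sup_{[0,\e]}|\psi-\delta\psi_1|=O(\delta^3)$. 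The same estimate applied to the derivative, as in the second half of Lemma~\ref{lem:l9}, bounds $|\psi'-\delta\psi_1'|$ at $x=\e$. On $[\e,\f12]$ the equation is regular, so continuous dependence (a Gronwall bound applied to $\psi/\delta$) yields
\[
\psi(x;\delta)/\delta\to P_\nu(1-2x)
\]
uniformly on $[0,\f12]$ as $\delta\to0$.

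\textbf{Step 4 (conclusion).} By Steps 2 and 3, for $\delta$ small we have $\psi(x_0;\delta)<0$, that is, $\vp(x_0;\g)>\pi$. Hence $M(\g)\ge\pi$ for $\g=\pi-\delta\in(0,\pi)$. Together with Lemma~\ref{lem:l9} this sets up the intermediate-value argument of Lemma~\ref{lem:l11}.
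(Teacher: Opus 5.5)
Your proposal is correct, but it takes a genuinely different route from the paper.

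\textbf{The paper's route.} The paper never linearizes or tracks the solution uniformly through the singular point. It takes any $\gamma$ with $2+\alpha\cos\gamma<0$, explicitly $\gamma\in(\bar\gamma,\pi)$ with $\cos\bar\gamma=-2/\alpha$. It then assumes $M(\gamma)<\pi$ and argues purely by signs:
\begin{itemize}
\item $\varphi>0$ by Lemma~\ref{lem:le12}, and $\varphi_x>0$ on $(0,\frac12]$ from $x(1-x)\varphi_x=\alpha\int_0^x\sin\varphi\,dy$.
\item The local value $\varphi_{xx}(0)=\frac12(2+\alpha\cos\gamma)\alpha\sin\gamma<0$ makes $\varphi$ concave near $0$.
\item At a first zero $\bar x$ of $\varphi_{xx}$ one would have $\varphi_{xxx}(\bar x)=(2+\alpha\cos\varphi(\bar x))\varphi_x(\bar x)/(\bar x(1-\bar x))<0$, because $\varphi(\bar x)\in[\gamma,\pi)$. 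Hence $\varphi_{xx}<0$ on all of $(0,\frac12]$.
\item This contradicts $\varphi_{xx}(\frac12)=4\alpha\sin\varphi(\frac12)>0$.
\end{itemize}
This is short and gives an explicit interval of overshooting $\gamma$. It needs only the local value of $\varphi_{xx}$ at $x=0$ and the equation at $x=\frac12$.

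\textbf{What your route buys, and what it costs.} Your Legendre/Sturm argument explains the threshold more conceptually. Note that $2+\alpha\cos\pi=2-\alpha$ is exactly the factor in your identity $w'=(2-\alpha)P_\nu P_1$, so both arguments hinge on the sign of the same quantity. Yours also yields the overshooting profile $\varphi\approx\pi-\delta P_\nu(1-2x)$. However, it only produces $\gamma$ sufficiently close to $\pi$, not an explicit range. Its price is Step~3, which is a uniform-in-$\delta$ refinement of Lemmas~\ref{lem:l3} and~\ref{lem:l9}.

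\textbf{Step~3 goes through.}
\begin{itemize}
\item Because $\alpha(\psi-\sin\psi)$ is globally $2\alpha$-Lipschitz, the contraction interval $[0,\epsilon]$ can be chosen independently of $\delta$.
\item This gives first $\sup_{[0,\epsilon]}|\psi|=O(\delta)$, and then $\sup_{[0,\epsilon]}|\psi-\delta\psi_1|=O(\delta^3)$.
\item Uniqueness of the bounded fixed point identifies it with the given solution.
\item The derivative at $x=\epsilon$ is controlled most simply from $\epsilon(1-\epsilon)\psi'(\epsilon)=-\alpha\int_0^\epsilon\sin\psi\,dy$, rather than via the second half of Lemma~\ref{lem:l9}.
\item Gronwall on the regular interval $[\epsilon,\frac12]$ finishes.
\end{itemize}

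\textbf{A cosmetic point in Step~2.} The case ``$P_\nu>0$ on $(0,1]$'' is actually impossible, since continuity would give $P_\nu(0)\ge 0$. It reads better as a contradiction: if $P_\nu\ge 0$ on $[0,1]$, then, since $P_\nu(1)=1$, your identity forces $P_\nu(0)<0$.
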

%%%%%%%%%%%%%%%%%%%%%%%%%%%%%%%%%%%%%%

\noindent{\bf Proof:} 
For  $\alpha > 2$, there exists a $\bar\gamma\in\left(\frac\pi 2, \pi\right)$ such that, for
every $\gamma^*\in(\bar\gamma,\pi)$,
\begin{eqnarray}
\alpha \cos(\gamma^*) + 2 < 0
\label{defnofbargamma}
\end{eqnarray}
We will show that, for every $\gamma^* \in (\bar\gamma,\pi)$, $M(\gamma^*) \geq \pi$.

Pick a $\gamma^*\in (\bar\gamma,\pi)$ and~assume to the contrary that $M(\gamma^*) < \pi$. First,
we claim that, if $M(\gamma^*) < \pi$, then, at every $x\in\ozch$,
({\it i}) 
$\varphi_x(x) > 0$, and~({\it ii})
$\varphi_{xx}(x) < 0$. 

We begin by noting that, if  $M(\gamma^*) < \pi$, then, from Lemma \ref{lem:le12}, 
$\vp(x;\g\str) > 0$ at every $x\in \lrs{0,\f12}$.
To prove the first claim,
assume to the contrary that $\varphi_x(x') = 0$ at~some $x'\in(\left. 0,\frac 12\right]$. Further,
let $x'$ be the smallest such $x'$.  Since $0\slt \vp(x;\g\str) \slt \pi$ for $0 \slt x \slt x'$,
\begin{eqnarray*}
\varphi_x(x') = \alpha \sin(\gamma) + \frac 1{x'(1-x')} \int_0^{x'} \alpha \sin(\varphi(x)) \, dx
\geq \alpha \sin(\gamma) > 0
\end{eqnarray*} 
The contradiction proves the claim.
To prove the second claim observe that
\begin{eqnarray}
\varphi_{xx}(0) & = & \lim_{x\rightarrow 0} \ \left({1\over 1-x}\right)
\left\{ {(2x-1) \varphi_x(x) + \alpha\sin(\varphi(x)) \over x}\right\}
\label{psd}\\[0.5\baselineskip]
& = & \lim_{x\rightarrow 0} \
 {(2x-1) \varphi_{xx}(x) + (2+\alpha\cos(\varphi(x))) \varphi_x(x)}\nonumber
\end{eqnarray}
The second equality is obtained using {\it 'l Hospital's} rule.  Taking the limit and rearranging, 
we get
\begin{eqnarray}
\varphi_{xx}(0) =  \frac12\cdot (2 + \alpha\cos(\gamma^*)) \alpha\sin(\gamma^*) < 0 \label{fsd}
\end{eqnarray}
The last inequality follows from (\ref{defnofbargamma}).  We see from 
(\ref{fsd}) that the limit in (\ref{psd}) exists, and~therefore there exists a
$\lambda  > 0$ such that, at
every $\ti x\in (-\lambda,\lambda)$,
$\vp_{xx}(\ti x) < 0$.   Therefore $\vp_{xx}(\ti x) \neq 0$ for~$\ti x \in (-\lambda,\lambda)$.

Again, arguing by contradiction, if~possible, let $\varphi_{xx}$ change sign in a small
neighborhood of $\bar x$, with~$\varphi_{xx}(\bar x) = 0$ and $\varphi_{xxx}(\bar x) > 0$, 
at some
$\bar x\in \ozch$.   As~we noted above, $\vp_{xx}(x)$ is nonzero in the interval $(-\lambda,\lambda)$, so $\bar x \neq 0$. Differentiating \eqnref{lemma242eqn}, we obtain
\begin{eqnarray}
\varphi_{xxx}(\bar x) = {1 \over \bar x(1-\bar x)} \cdot 
\left\{(2 + \alpha \cos(\varphi(\bar x))) \varphi_x(\bar x)\right\} < 0
\label{ptd}
\end{eqnarray}
Note that $\vp_{xxx}(x)$ is well defined at $\bar x\in \ozch$. 
The inequality in (\ref{ptd}) follows from (\ref{defnofbargamma}), and, by observing that $\varphi_x(x) > 0$ for $x\in\left[0,\frac 12\right]$, 
$\varphi(\bar x) \geq \varphi(0)=\gamma^*>0$.  The~contradiction proves that, if $M(\gamma^*) < \pi$, then $\varphi_{xx}(x) < 0$ for $x\in\ozch$.   

But,
\begin{eqnarray*}
\varphi_{xx}\left(\frac 12 \right) = 4 \left\{ \alpha \sin\left(\varphi\left(\frac 12\right)\right) 
\right\} > 0,
\end{eqnarray*}
since $0 <  \varphi\left(\frac 1 2 \right) < M(\gamma^*) < \pi$. 
The contradiction proves that $M(\gamma^*) \geq \pi$. 
\qed %\end{proof}

%%%%%%%%%%%%%%%%%%%%%%%%%%%%%%%%%%%%%%
\begin{lemma}\label{lem:l11}
Let $\varphi(x;\gamma)$ be a continuous bounded solution of
\begin{eqnarray}
x(x-1)\varphi_{xx} + (2x-1) \varphi_x + \alpha \sin(\varphi) = 0, \qquad \alpha > 2
\label{lemma243eqn}
\end{eqnarray}
in the interval $\lrs{0,\f12}$, satisfying the initial conditions
\begin{eqnarray}
\varphi(0;\gamma) = \gamma; \qquad \varphi_x(0;\gamma) = \alpha \sin(\gamma).
\label{lemma243boundaryconditions}
\end{eqnarray}
Then there exists a $\gamma^* \in (0,\pi)$, for~which $\varphi\left(\frac 1 2; \gamma^*\right) = \pi$.
\end{lemma}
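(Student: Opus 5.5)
A shooting argument in $\g$, built on the continuity from Lemma \ref{lem:l9} and the overshoot from Lemma \ref{lem:l10}. Define $M(\g):=\max_{0\leq x\leq \f12}\vp(x;\g)$ as in Lemma \ref{lem:l10}. Then consider the set
\[
G \;:=\; \lrc{\g\in(0,\pi) \,:\, M(\g)\geq\pi}.
\]
Lemma \ref{lem:l10} shows that $G$ contains the whole interval $(\bar\g,\pi)$, so $G$ is nonempty.

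The next step is to show that $G$ excludes small $\g$. For $\g$ near $0$ I would use the integral equation (\ref{integraleqn2}), together with the estimate (\ref{ime}) comparing with the trivial solution $\vp\equiv 0$ at $\g'=0$. On a short interval $[0,\e\so]$ this gives $|\vp(x;\g)|\leq C|\g|$. Continuous dependence on the data at $\e\so$, over the nonsingular interval $[\e\so,\f12]$, then gives $\sup_{[0,\f12]}|\vp(x;\g)|\leq C'|\g|<\pi$ for small $\g$. This is exactly the continuity statement of Lemma \ref{lem:l9} applied at $\g=0$, made uniform in $x$ over the compact interval $[0,\f12]$. So $M(\g)<\pi$ for all sufficiently small $\g>0$.

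Now set $\g\str:=\inf G$. By the previous step $\g\str>0$, and by Lemma \ref{lem:l10} $\g\str\leq\bar\g<\pi$. From Lemma \ref{lem:l9}, $\vp(\tilde x;\g)$ depends continuously on $\g$ for each $\tilde x$; the same argument gives uniformity in $\tilde x\in[0,\f12]$, so $M$ is continuous in $\g$. Hence $M(\g\str)=\pi$, while $M(\g)<\pi$ for every $\g<\g\str$. Let $x\str\in[0,\f12]$ be a point where $\vp(x\str;\g\str)=\pi$. It remains to show that $x\str=\f12$.

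This last step is the main obstacle. Suppose instead that $x\str<\f12$. Since $\vp\leq\pi$ on $[0,\f12]$, the point $x\str$ is an interior maximum, so $\vp_x(x\str)=0$ and $\vp(x\str)=\pi$. At such a point the equation is regular, and by ODE uniqueness the solution would be $\vp\equiv\pi$. This contradicts $\vp(0)=\g\str<\pi$. Note that $x\str\neq 0$ because $\vp(0)<\pi$, and at $x\str>0$ the coefficient $x(x-1)$ is nonzero on $(0,\f12]$, so uniqueness applies. Therefore $x\str=\f12$, and $\vp(\f12;\g\str)=\pi$.

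The delicate points to check are the following.
\begin{enumerate}
\item The uniform-in-$x$ continuity of $\g\mapsto\vp(\cdot;\g)$. I would obtain it by combining (\ref{ime}) on $[0,\e\so]$ with Gronwall-type continuous dependence on $[\e\so,\f12]$, using the continuity of $\vp(\e\so;\g)$ and $\vp'(\e\so;\g)$.
\item The fact that each $\vp(\cdot;\g)$ extends as a solution over $[0,\f12]$. This holds because the equation is regular there and $\sin$ is globally Lipschitz.
\end{enumerate}
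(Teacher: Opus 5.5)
Your proposal is correct and follows essentially the paper's route: take $\gamma^*$ as the infimum of the overshoot set supplied by Lemma \ref{lem:l10}, use the continuity of Lemma \ref{lem:l9} to get $M(\gamma^*)=\pi$, and rule out an interior touching point by uniqueness at a regular point. The paper argues through the vanishing of all derivatives at the leftmost touching point, while you apply uniqueness directly to the data $(\pi,0)$. Your two extra steps fill in points the paper's proof leaves implicit: you show $\gamma^*>0$ by comparing with the trivial solution at $\gamma=0$, and you upgrade pointwise continuity to uniform-in-$x$ continuity so that $M$ is genuinely continuous.
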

%%%%%%%%%%%%%%%%%%%%%%%%%%%%%%%%%%%%%%

\noindent{\bf Proof:}  
From Lemma \ref{lem:l10} we know that there exists at least one $\gamma\in(0,\pi)$ for which $M(\gamma) \geq \pi$.
Therefore, one can define $\gamma^*$ as follows.
\begin{eqnarray*}
 \gamma^* := \inf_{0 < \gamma < \pi} \ \{ \gamma \, |\,  M(\gamma) \geq \pi\}.
\end{eqnarray*}
If $M(\gamma^*) > \pi$, then, from Lemma \ref{lem:l9}, we can find a $\de$
such that $M(\g) > \pi$ for all \linebreak  $\g \in (\g\str-\de, \g\str+\de)$,  
contradicting the definition of $\g\str$.  
Therefore, we conclude that $M(\gamma^*)=\pi$.

We note that $\varphi(0;\gamma^*) = \gamma^* < \pi$.
Let $x^*$ be the leftmost point at which $\varphi(x^*;\gamma^*) = \pi$.  If~$x^*\in \lrr{0,\f12}$,
then $\varphi_x(x^*;\gamma^*)=0$. From~\eqnref{lemma243eqn} $\varphi_{xx}(x^*;\gamma^*) = 0$.
Repeated differentiation of \eqnref{lemma243eqn} shows that all the higher derivatives of
$\varphi(x;\gamma^*)$ must also vanish at $x^*$.  But, by uniqueness of the solution at $x^*$,
$\varphi(x;\gamma^*) \equiv \pi$ in a small neighborhood of $x^*$, which contradicts the assumption
that $x^*$ is the leftmost point at which $\varphi(x;\gamma^*) = \pi$. Therefore, we conclude that
$\varphi(x;\gamma^*) < \pi$ for~$0\leq x < \frac 12,$ and $\varphi\left(\frac 12;\gamma^*\right) = \pi$.
\qed %\end{proof}

%%%%%%%%%%%%%%%%%%%%%%%%%%%%%%%%%%%%%%%
\begin{lemma}\label{lem:l12}
For $\alpha> 2$, there exists a
solution of
\begin{eqnarray}
(z^2-1) \varphi_{zz} + 2z \varphi_z + \alpha \sin (\varphi) = 0
\label{theorem244eqn}
\end{eqnarray}
satisfying the boundary conditions
\begin{eqnarray}
\lim_{z\rightarrow -\infty} \varphi(z) = 2\pi; \qquad \lim_{z\rightarrow \infty} \varphi(z) = 0
\label{theorem244boundarycondns}
\end{eqnarray}
\end{lemma}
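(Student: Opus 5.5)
Lemma \ref{lem:l11} supplies a $\g\str\in(0,\pi)$ such that the bounded solution $\vp(x;\g\str)$ of \erf{lemma243eqn} satisfies $0<\vp(x;\g\str)<\pi$ for $0\leq x<\f12$ and $\vp(\f12;\g\str)=\pi$. The plan is to build the soliton from this half-horizon solution, extend it beyond the horizon with Corollary \ref{cor:c1}, and then glue it to a mirror copy for negative $z$.

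\emph{Step 1 (translation to the static coordinate).} With $x=(1-z)/2$, the point $x=0$ corresponds to $z=1$ and $x=\f12$ corresponds to $z=0$. Define $\vp(z):=\vp((1-z)/2;\g\str)$ on $[0,1]$. This function is bounded near $z=1$ and satisfies $\vp(1)=\g\str\in(0,\pi)$. The condition $\vp_x(0)=\a\sin(\g\str)$ becomes $\vp_z(1)=-\f\a2\sin(\vp(1))$, as in Corollary \ref{cor:c2}. Moreover, $\vp(0)=\pi$.

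\emph{Step 2 (extension to $z>1$).} Lemmas \ref{lem:l2} and \ref{lem:l3} show that the integral-equation solution with parameter $\g$ is a bounded solution on a two-sided neighborhood $(-\e,\e)$ of $x=0$. This gives a continuation through the singularity to $z>1$, carrying the same data $\vp(1)$ and $\vp_z(1)$; past the singularity, standard ODE theory extends it to all $z>1$. Corollary \ref{cor:c1} with $n=0$ then gives $-\pi<\vp(z)<\pi$ for $z\geq1$ and $\lt z\infty\vp(z)=0$.

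\emph{Step 3 (the negative half-line).} Set $\ti\vp(z):=2\pi-\vp(-z)$ for $z\leq 0$. Since $\vp\mapsto-\vp$, $\vp\mapsto\vp+2\pi$ and $z\mapsto -z$ are symmetries of \erf{theorem244eqn}, $\ti\vp$ solves the equation, and it is bounded at $z=-1$. As $z\ra-\infty$, $\ti\vp(z)\ra 2\pi-0=2\pi$. At $z=0$, $\ti\vp(0)=2\pi-\pi=\pi=\vp(0)$, so the values agree. For the derivatives, $\ti\vp'(0)=\vp'(0)$ automatically, since $\frac{d}{dz}[2\pi-\vp(-z)]=\vp'(-z)$. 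Thus $\vp$ and $\ti\vp$ share Cauchy data at the regular point $z=0$.

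By uniqueness for \erf{theorem244eqn} at the regular point $z=0$, the glued function is a single solution on $(-1,1)$. Combined with its extensions beyond $z=\pm1$, it satisfies both conditions in \erf{theorem244boundarycondns}.

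\emph{Main obstacle.} The delicate point is Step 2: the solution produced from $x\in[0,\f12]$ must coincide with a bounded solution on the other side of $x=0$, i.e. with the one Corollary \ref{cor:c1} is applied to. The plan is to take $\vp$ itself to be the two-sided solution $\vp\sb(x;\g)$ of Lemma \ref{lem:l3}, with $\g$ fixed by requiring $\vp(0)=\g\str$. Lemma \ref{lem:l11} is then applied to this same solution, so no matching across $x=0$ is needed. This requires the identification in Lemma \ref{lem:l9}, which treats the integral-equation solution as the solution with those boundary conditions. Uniqueness of a bounded solution with given $\vp(0)$ is only implicit in the paper; if it is questioned, the plan is to argue it from the contraction estimate in the proof of Lemma \ref{lem:l3}.
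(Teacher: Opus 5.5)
Your proposal is correct and is essentially the paper's proof. As in the paper, Lemma \ref{lem:l11} combined with Lemma \ref{lem:l1} (your Corollary \ref{cor:c1} with $n=0$) gives a solution on $[0,\infty)$ with $\varphi(0)=\pi$ and $\varphi(z)\to 0$ as $z\to\infty$, and the symmetries $z\to -z$, $\varphi\to-\varphi$, $\varphi\to\varphi+2\pi$ then supply the half-line $z\le 0$. You make explicit two points the paper leaves tacit, and your justifications for both are sound: the $C^1$ matching of $\varphi(z)$ and $2\pi-\varphi(-z)$ at the regular point $z=0$, and the identification of the one-sided solution of Lemma \ref{lem:l11} with the two-sided bounded solution of Lemma \ref{lem:l3}, which lets it continue past $z=1$ (including uniqueness via the contraction estimate).
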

%%%%%%%%%%%%%%%%%%%%%%%%%%%%%%%%%%%%%%%

\noindent{\bf Proof:} 
 From Lemmas \ref{lem:l11} and \ref{lem:l1}   
we know that \eqnref{theorem244eqn} has a
solution $\varphi(z)$ in~$[0,\infty)$ with the following properties:
\begin{eqnarray*}
\varphi(0) = \pi; \qquad 0 < \varphi(1) < \pi; \qquad \varphi(\infty) = 0
\end{eqnarray*}
The symmetries of \eqnref{theorem244eqn}, $\varphi \ra \vp + 2n\pi$, $n\in \mathbb Z$, $\varphi \rightarrow -\varphi$ and $z\rightarrow -z$,
imply the existence of the solution claimed in the theorem. \qed %\end{proof}

%\begin{adjustwidth}{-\extralength}{0cm}
%\reftitle{References}

%%%%%%%%%%%%%%%%%%%%%%%%%%%%%%%%%%%%%%%%%%
%\PublishersNote{}
%\end{adjustwidth}
\end{document}